\DeclareMathOperator{\poly}{poly}
\DeclareMathOperator{\re}{Re}
\DeclareMathOperator{\im}{Im}
\newlength{\leftbarwidth}
\newlength{\leftbarsep}
\colorlet{leftbarcolor}{black}
\renewenvironment{leftbar}{%
    \MakeFramed {\advance \hsize -\width \FrameRestore }%
}{%
    \endMakeFramed
}
\newcommand{\be}{\begin{equation}}
\newcommand{\ee}{\end{equation}}
\newcommand{\ba}{\begin{aligned}}
\newcommand{\ea}{\end{aligned}}
\newcommand{\bc}{\begin{center}}
\newcommand{\ec}{\end{center}}
\newcommand{\beq}{\begin{equation}}
\newcommand{\eeq}{\end{equation}}
\newcommand{\beqq}{\begin{equation*}}
\newcommand{\eeqq}{\end{equation*}}
\newcommand{\beqa}{\begin{align}}
\newcommand{\eeqa}{\end{align}}
\newcommand{\barr}{\begin{array}}
\newcommand{\earr}{\end{array}}
\newcommand{\bi}{\begin{itemize}}
\newcommand{\ei}{\end{itemize}}
\newcommand{\Tr}{\ensuremath{\,\mathrm{Tr}}}
\newcommand{\Prb}{\ensuremath{\,\mathrm{Pr}}}
\newtheorem{lem}{Lemma}
\newtheorem{theo}{Theorem}
\newtheorem{coro}{Corollary}
\title{Building trust for continuous variable quantum states} 
\titlerunning{Building trust for continuous variable quantum states} 
\author{Ulysse Chabaud\footnote{Corresponding author}}{Laboratoire d'Informatique de Paris 6, CNRS, Sorbonne Universit\'e, 4 place Jussieu, 75005 Paris, France}{ulysse.chabaud@gmail.com}{https://orcid.org/0000-0003-0135-9819}{}
\author{Tom Douce}{School of Informatics, University of Edinburgh, 10 Crichton Street, Edinburgh, EH8 9AB, United Kingdom}{}{}{}
\author{Fr\'ed\'eric Grosshans}{Laboratoire d'Informatique de Paris 6, CNRS, Sorbonne Universit\'e, 4 place Jussieu, 75005 Paris, France and Laboratoire Aim\'e Cotton, CNRS, Universit\'e Paris-Sud, ENS Cachan, Universit\'e Paris-Saclay, 91405 Orsay Cedex, France}{}{https://orcid.org/0000-0001-8170-9668}{}
\author{Elham Kashefi}{Laboratoire d'Informatique de Paris 6, CNRS, Sorbonne Universit\'e, 4 place Jussieu, 75005 Paris, France and School of Informatics, University of Edinburgh, 10 Crichton Street, Edinburgh, EH8 9AB, United Kingdom}{}{}{}
\author{Damian Markham}{Laboratoire d'Informatique de Paris 6, CNRS, Sorbonne Universit\'e, 4 place Jussieu, 75005 Paris, France}{}{}{}
\authorrunning{U. Chabaud, T. Douce, F. Grosshans, E. Kashefi and D. Markham} 
\keywords{Continuous variable quantum information, reliable state tomography, certification, verification} 
\begin{document}

\maketitle

\begin{abstract}
In this work we develop new methods for the characterisation of continuous variable quantum states using heterodyne measurement in both the trusted and untrusted settings.
First, building on quantum state tomography with heterodyne detection, we introduce a reliable method
for continuous variable quantum state certification, which directly yields the elements of the density
matrix of the state considered with analytical condence intervals. This method neither needs
mathematical reconstruction of the data nor discrete binning of the sample space, and uses a single
Gaussian measurement setting. Second, beyond quantum state tomography and without its identical copies
assumption, we promote our reliable tomography method to a general efficient protocol for verifying continuous
variable pure quantum states with Gaussian measurements against fully malicious adversaries, i.e.,
making no assumptions whatsoever on the state generated by the adversary. These results are
obtained using a new analytical estimator for the expected value of any operator acting on a
continuous variable quantum state with bounded support over the Fock basis, computed with samples
from heterodyne detection of the state. 
\end{abstract}

\section{Introduction}

\noindent Out of the many properties featured by quantum physics,
the impossibility to perfectly determine an unknown state~\cite{d1996impossibility}
is specially interesting. 
This property is at the heart of quantum cryptography protocols such as quantum key distribution~\cite{BB84}.
On the other hand, it makes certification of the correct functioning of 
quantum devices a challenge, 
since the output of such devices can only be determined approximately, 
through repeated measurements over numerous copies of the output states.
With rapidly developing quantum technologies for communication, simulation, computation and sensing, 
the ability to assess the correct functioning of quantum devices is of major importance, for near-term systems, the so-called Noisy Intermediate-Scale Quantum (NISQ) 
devices~\cite{preskill2018quantum}, and for the more sophisticated devices. 

Depending on the desired level of trust, various methods are available for certifying the output of quantum devices. In the following, the task of checking the output state of a quantum device 
is denoted \textit{tomography} for state independent methods, when i.i.d.\@ behaviour is assumed, \textit{certification} for a given a target state, when i.i.d.\@ behaviour is assumed,
and \textit{verification} for a given target state, with no assumption whatsoever, and in particular without the i.i.d.\@ assumption. 

Quantum state tomography~\cite{d2003quantum} is an important technique which aims at reconstructing a good approximation of the output state of a quantum device by performing multiple rounds of measurements on several copies of said output states. 
Given an ensemble of identically prepared systems, with measurement outcomes from the same
observable, one can build up a histogram, from which a probability density can be estimated.
According to Born's rule, this probability density is the square modulus of the state
coefficients, taken in the basis corresponding to the measurement. 
However, a single measurement setting cannot yield the full state information 
since the phase of its coefficients are then lost. 
Many sets of measurements on many subensembles must be performed and combined to reconstruct the density matrix of the state. 
The data do not yield the state directly, but rather indirectly through data analysis.
Quantum state tomography assumes an \textit{independent and identically distributed} (i.i.d.) behaviour for the device, i.e., that the density matrix of the output state considered is the same at each round of measurement. This assumption may be relaxed with a tradeoff in the efficiency of the protocol~\cite{christandl2012reliable}. 

A certification task corresponds to a setting where one wants to benchmark an industrial quantum device, or check the output of a physical experiment. On the other hand, a verification task corresponds to a cryptographic scenario, where the device to be tested is untrusted, or the quantum data is given by a potentially malicious party, for example in the context of delegated quantum computing.
In the latter case, the task of quantum verification is to ensure that either the device behaved properly, or the computation aborts with high probability.
While delegated computing is a natural platform for the emerging NISQ devices, one can provide a physical interpretation to this adversarial setting by emphasising that we aim for deriving verification schemes that make no assumptions whatsoever about the noise model of the underlying systems.
Various methods for verification of quantum devices have been investigated, in particular for discrete variable quantum information~\cite{gheorghiu2017verification}, and they provide different efficiencies and security parameters depending on the computational power of the verifier. The common feature for all of these approaches is to utilise some basic obfuscation scheme that allows to reduce the problem of dealing with a fully general noise model, or a fully general adversarial deviation of the device, to a simple error detection scheme~\cite{vidick2018verification}.
 
In this work, we consider the setting of quantum information with continuous variables~\cite{lloyd1999quantum}, 
in which quantum states live in an infinite-dimensional Hilbert space. Using continuous variable systems for quantum computation and more general quantum information processing is a powerful alternative
to the discrete variable case.
\textit{Firstly}, it is compatible with standard network optics technology, where more efficient measurements are available. \textit{Secondly}, it allows for unprecedented scaling in entanglement, with entangled states of up to tens of thousands of subsystems reported~\cite{yokoyama2013ultra} generated deterministically. 

A continuous variable quantum process or state can be described by a quasi-probability
distribution in phase space, often the Wigner function~\cite{wigner1997quantum}, but 
also the Husimi $Q$ function or the Glauber--Sudarshan $P$ function~\cite{cahill1969density}. 
This allows for a simple and experimentally relevant classification of quantum states: 
those with a Gaussian quasiprobability distibutions are called Gaussian states, and the others non-Gaussian states. By extension, operations mapping Gaussian states to Gaussian states are also called Gaussian. 
These Gaussian operations and states are the ones implementable with linear optics and
quadratic non-linearities~\cite{Braunstein2005}, and are hence relatively easy to construct
experimentally.
However, it is well known that for many important applications, Gaussian operations and Gaussian states are not sufficient. This takes the forms of no-go theorems for distillation and error correction~\cite{eisert2002distilling,fiuravsek2002gaussian,niset2009no}, and the fact that all Gaussian computations can be simulated efficiently classically~\cite{Bartlett2002}. Furthermore, it is not possible to demonstrate non-locality or contextuality---which are increasingly understood to be important resources in quantum information---in the Gaussian regime.

For continuous variable quantum devices, checking that the output state is close to a target state may be done with linear optics using optical homodyne tomography~\cite{lvovsky2009continuous}. This method allows to reconstruct the Wigner function of a generic state using only Gaussian measurements, namely homodyne detection. 
Because of the continuous character of its outcomes,  
one must proceed to a discrete binning of the sample space, in order to build probability histograms. 
Then, the state representation in phase space is determined by a mathematical reconstruction. 

For cases where we have a specific target state, more efficient options are possible. 
For multimode Gaussian states, more efficient certification methods have been derived with Gaussian measurements~\cite{aolita2015reliable}. 
These methods involve the computation of a fidelity witness, i.e., a lower bound on the fidelity, from the measured samples. 
The cubic phase state certification protocol of~\cite{liu2019client} also introduces a fidelity witness and is an example of certification of a specific non-Gaussian state with Gaussian measurements, which assumes an i.i.d.\@ state preparation. 
The verification protocol for Gaussian continuous variable weighted hypergraph states of~\cite{takeuchi2018resource} removes this assumption, again for this specific family of states.


\section{Results}

\noindent In this work we address two main issues. \textit{Firstly}, existing continuous variable state tomography methods are not reliable in the sense of~\cite{christandl2012reliable}, because errors coming from the reconstruction procedure are indistinguishable from errors coming from the data. \textit{Secondly}, to the best of our knowledge there is no Gaussian verification protocol for non-Gaussian states without i.i.d.\@ assumption (a possible route using Serfling's bound was mentioned in Ref.~\cite{liu2019client} for removing the i.i.d.\@ assumption for their protocol). 

We thus introduce a general \textit{receive-and-measure} protocol for building trust for continuous variable quantum states, using solely Gaussian measurements, namely heterodyne detection~\cite{ferraro2005gaussian,teo2017heterodyning}. This protocol allows to perform reliable continuous variable quantum state tomography based on heterodyne detection, with analytical confidence intervals, which we refer to as \textit{heterodyne tomography} in what follows. This tomography technique only requires a single fixed measurement setting, compared to homodyne tomography. This protocol also provides a means for certifying continuous variable quantum states with an energy test, under the i.i.d.\@ assumption. Finally, the same protocol also allows to verify continuous variable states, without the i.i.d.\@ assumption. For these three applications, the measurements performed are the same. It is only the number of subsystems to be measured and the classical post-processing performed that differ from one application to another.

We detail the structure of the protocol in the following. We give an estimator for the expected value of any operator acting on a state with 
bounded support over the Fock basis (Theorem~\ref{thmain}) by deriving an approximate version of the optical equivalence theorem for antinormal ordering~\cite{cahill1969density}. The estimate is expressed as an expected value under heterodyne detection. Similar estimates have been obtained in the context of imperfect heterodyne detection~\cite{paris1996density,paris1996quantum}. We go beyond these works in different respects:
using this result, we introduce a reliable heterodyne tomography method and compute analytical bounds on its efficiency (Theorem~\ref{thQST}). 
We then derive a \textit{receive-and-measure} certification protocol (against i.i.d.\@ adversary) for continuous variable quantum states, with Gaussian measurements (Theorem~\ref{thi.i.d.}). We further promote this certification technique to a verification protocol against fully malicious adversary  (Theorem~\ref{thVUCVQC}), using a de Finetti reduction for infinite-dimensional systems~\cite{renner2009finetti}.


\section{Description of the protocol}
\label{sec:protocol}

\begin{figure}
	\begin{center}
		\includegraphics[width=2.6in]{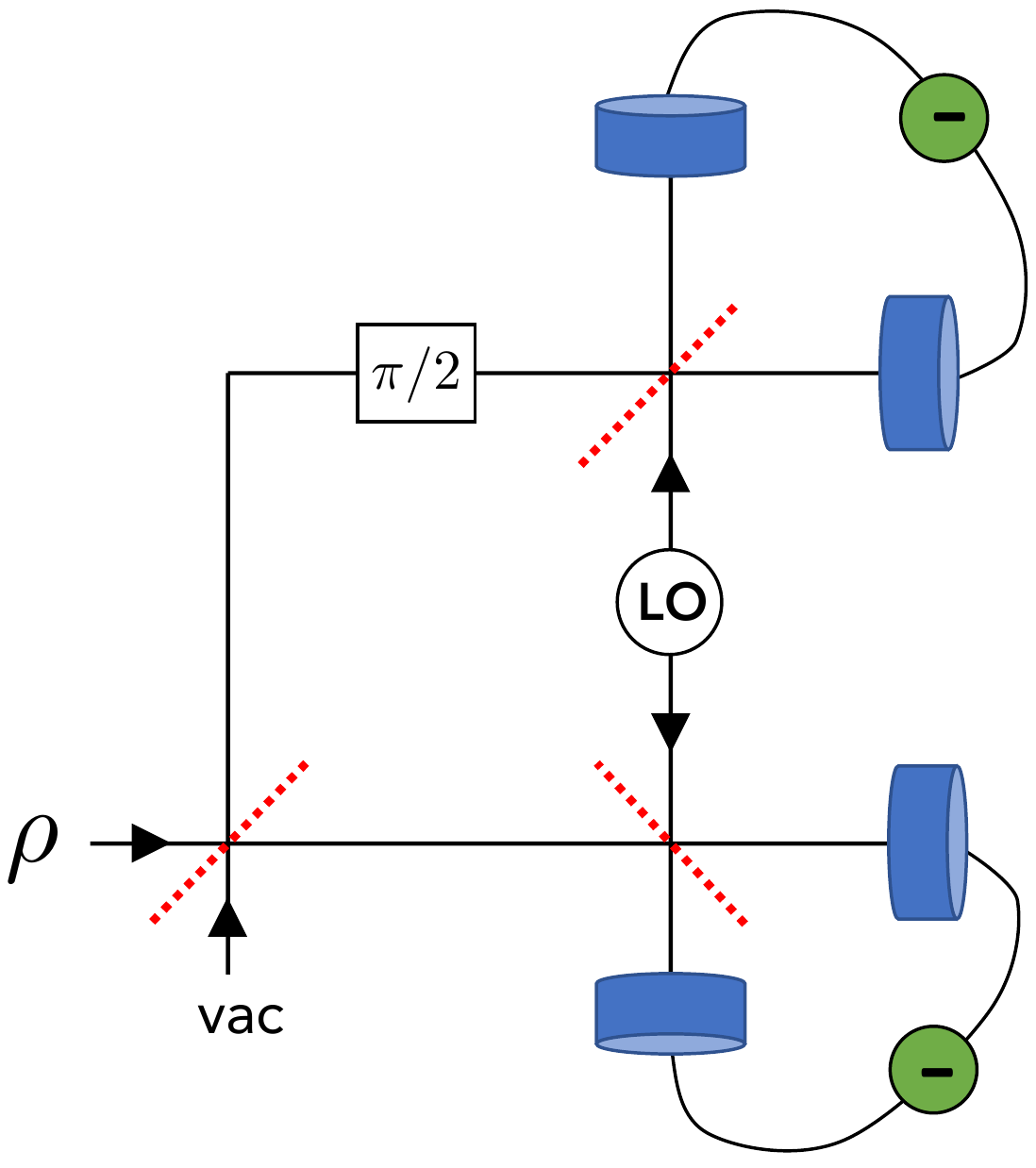}
		\caption{A schematic representation of heterodyne measurement of a state $\rho$. The dashed red lines represent balanced beamsplitters. \textit{LO} stands for local oscillator, i.e., strong coherent state, and \textit{vac} for vacuum state. The blue circles are photodiode detectors.}
		\label{fig:2homodyne}
	\end{center}
\end{figure}

\noindent Continuous variable quantum states live in an infinite-dimensional 
Hilbert space $\mathcal H$, spanned by the Fock basis $\{\ket n\}_{n\in\mathbb N}$, 
and are equivalently represented in phase space by their Husimi $Q$ function~\cite{cahill1969density}, a smoother relative of the Wigner function. Given a single-mode state $\rho$, its $Q$ function is defined as:
\be
Q_{\rho}(\alpha)=\frac{1}{\pi}\Tr\left(\ket\alpha\!\bra\alpha\rho\right)=\Tr\left(\Pi_\alpha\rho\right),
\ee
for all $\alpha\in\mathbb C$, where $\ket\alpha$ is a coherent state and where $\{\Pi_\alpha\}_{\alpha\in\mathbb C}=\left\{\frac{1}{\pi}\ket\alpha\!\bra\alpha\right\}_{\alpha\in\mathbb C}$ is the Positive Operator Valued Measure for heterodyne detection. 

This detection, also called double homodyne or eight-port homodyne~\cite{ferraro2005gaussian}, consists in splitting the measured state with a beamsplitter, and measuring both ends with homodyne detection (Fig.~\ref{fig:2homodyne}). This corresponds to a joint noisy measurement of quadratures $q$ and $p$. 
This is a Gaussian measurement, which yields two real outcomes, corresponding to the real and imaginary parts of $\alpha$. The $Q$ function of a single-mode state thus is a probability density function over $\mathbb C$ and measuring a state with heterodyne detection amounts to sampling from its $Q$ function.

Using this detection, one may acquire knowledge about an unknown continuous variable quantum state. More precisely, we define the following \textit{receive-and-measure} protocol, depicted in Fig.~\ref{fig:protocol}: given a quantum state $\rho^n$ over $n$ subsystems, measure some of the subsystems with heterodyne detection. Then, post-process the samples obtained to retrieve information about the remaining subsystems. The number subsystems to be measured and the post-processing performed depend on the application considered.

We show in the following sections how this protocol may be used to perform reliable tomography, certification and verification of continuous variable quantum states, and we detail the corresponding choices of subsystems and the classical post-processing for each task. 
\begin{figure}
	\begin{center}
		\includegraphics[width=3.1in]{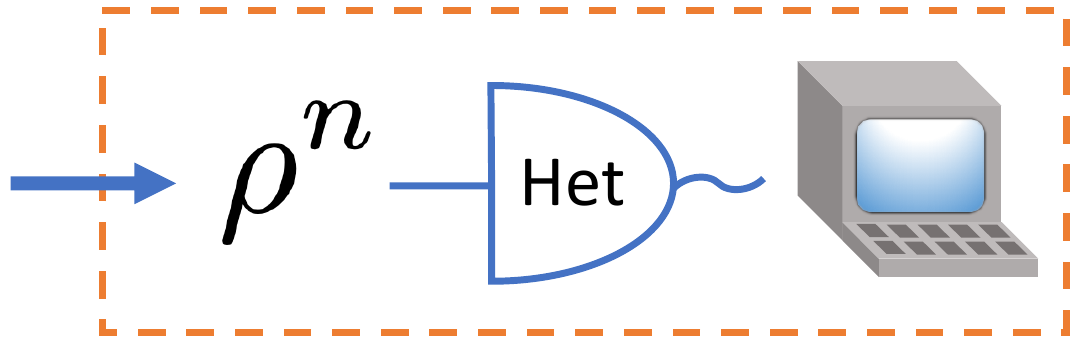}
		\caption{A schematic representation of the protocol. The tester (within the dashed rectangle) receives a continuous variable quantum state $\rho^n$ over $n$ subsystems. This state could be for example the outcome of $n$ successive runs of a physical experiment, the output of a commercial quantum device, or directly sent by some untrusted quantum server. The tester measures with heterodyne detection some of the subsystems of $\rho^n$, and uses the samples and efficient classical post-processing to deduce information about the remaining subsystems.}
		\label{fig:protocol}
	\end{center}
\end{figure}
%


\section{Heterodyne estimator}
\label{sec:main}

\noindent This section contains our main technical result, 
an estimator for the expected value of an operator acting on a state with 
bounded support over the Fock basis, from samples of heterodyne detection of the state. From this result, we derive various protocols in the following sections, ranging from tomography to state verification.

We denote by $\underset{\alpha\leftarrow D}{\mathbb E}[f(\alpha)]$ the expected value of a function $f$ for samples drawn from a distribution $D$.
Let us introduce for $k,l\ge0$ the polynomials
\be
\mathcal{L}_{k,l}(z)=e^{zz^*}\frac{(-1)^{k+l}}{\sqrt{k!}\sqrt{l!}}\frac{\partial^{k+l}}{\partial z^k\partial z^{*l}}e^{-zz^*},
\label{2DLmain}
\ee
for $z\in\mathbb C$, which are, up to a normalisation, the Laguerre $2$D polynomials, appearing in particular in the expressions of Wigner function of Fock states~\cite{wunsche1998laguerre}. 
For any operator $A=\sum_{k,l=0}^{+\infty}{A_{kl}\ket k\!\bra l}$ and all $E\in\mathbb N$, we define with these polynomials the function
\be
f_A(z,\eta)=\frac1\eta e^{\left(1-\frac{1}{\eta}\right)zz^*}\sum_{k,l=0}^{E}{\frac{A_{kl}}{\sqrt{\eta^{k+l}}}\mathcal{L}_{k,l}\left(\frac z{\sqrt{\eta}}\right)},
\label{fmain}
\ee
for all $z\in\mathbb C$, and all $0<\eta<1$. We omit the dependency in $E$ for brevity. The function $z\mapsto f_A(z,\eta)$,
being a polynomial multiplied by a converging Gaussian function, 
is bounded over $\mathbb C$.
With the same notations, we also define the following constant:
\be
K_A=\sum_{k,l=0}^E{|A_{kl}|\sqrt{(k+1)(l+1)}}.
\label{Kmain}
\ee
The optical equivalence theorem for antinormal ordering~\cite{cahill1969density} gives an equivalence between the expectation value of an operator in Hilbert space and the expectation value of its Glauber-Sudarshan $P$ function. The $P$ function is however highly singular in general and our results are based instead on the following approximate version of this equivalence when the $P$ function is replaced by the bounded function $f$:

\begin{theorem} 
Let $E\in\mathbb N$ and let \mbox{$0<\eta<\frac2{E}$}. Let also $A=\sum_{k,l=0}^{+\infty}{A_{kl}\ket k\!\bra l}$ be an operator and let \mbox{$\rho=\sum_{k,l=0}^{E}{\rho_{kl}\ket k\!\bra l}$} be a density operator with bounded support.
Then,
\be
\left|\Tr\left(A\rho\right)-\underset{\alpha\leftarrow Q_{\mathrlap\rho}}{\mathbb E}[f_A(\alpha,\eta)]\right|\le\eta K_A,
\label{th1}
\ee
where the function $f$ and the constant $K$ are defined in Eqs.~(\ref{f}) and (\ref{K}).
\label{thmain}
\end{theorem}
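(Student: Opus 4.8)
The plan is to compute the heterodyne expectation $\mathbb{E}_{\alpha\leftarrow Q_\rho}[f_A(\alpha,\eta)]=\int_{\mathbb{C}}Q_\rho(\alpha)f_A(\alpha,\eta)\,d^2\alpha$ exactly as a finite sum, to recognise its leading term as $\Tr(A\rho)$, and to control the remainder by $\eta K_A$; morally $f_A(\cdot,\eta)$ is a truncated, $\eta$-regularised stand-in for the (generally singular) antinormal symbol of $A$, so that the exact optical-equivalence identity is recovered as $\eta\to0$. First I would write $Q_\rho(\alpha)=\frac1\pi e^{-|\alpha|^2}\sum_{m,n=0}^E\rho_{mn}\frac{\alpha^{*m}\alpha^n}{\sqrt{m!n!}}$ from $\langle n|\alpha\rangle=e^{-|\alpha|^2/2}\alpha^n/\sqrt{n!}$ and rescale $\beta=\alpha/\sqrt\eta$. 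The prefactors of $f_A$ are engineered so that the Gaussians collapse, $e^{-\eta|\beta|^2}e^{(\eta-1)|\beta|^2}=e^{-|\beta|^2}$, and the $1/\eta$ cancels the Jacobian, leaving $\frac1\pi\int e^{-|\beta|^2}\big(\sum_{mn}\rho_{mn}\eta^{(m+n)/2}\frac{\beta^{*m}\beta^n}{\sqrt{m!n!}}\big)\big(\sum_{kl}\eta^{-(k+l)/2}A_{kl}\mathcal{L}_{kl}(\beta)\big)\,d^2\beta$. Since both sums are finite (bounded support of $\rho$ and the cutoff $E$ in $f_A$), interchanging sum and integral is unproblematic.

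The key computation is $\frac1\pi\int e^{-|\beta|^2}\beta^{*m}\beta^n\mathcal{L}_{kl}(\beta)\,d^2\beta$. Using $\mathcal{L}_{kl}(\beta)e^{-|\beta|^2}=\frac{(-1)^{k+l}}{\sqrt{k!l!}}\partial_\beta^k\partial_{\beta^*}^l e^{-|\beta|^2}$ and integrating by parts $k$ times in $\beta$ and $l$ times in $\beta^*$ (boundary terms vanish by Gaussian decay) turns this into $\frac1{\pi\sqrt{k!l!}}\int(\partial_\beta^k\beta^n)(\partial_{\beta^*}^l\beta^{*m})e^{-|\beta|^2}\,d^2\beta$; the Wirtinger derivatives together with the elementary moments $\frac1\pi\int\beta^a\beta^{*b}e^{-|\beta|^2}d^2\beta=a!\,\delta_{ab}$ evaluate it to $\frac{n!\,m!}{\sqrt{k!l!}(m-l)!}\delta_{n-k,m-l}$ for $n\ge k,\ m\ge l$. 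Setting $d:=n-k=m-l\ge0$ collapses the double sum and yields $\mathbb{E}_{\alpha\leftarrow Q_\rho}[f_A(\alpha,\eta)]=\sum_{k,l=0}^E A_{kl}\sum_{d\ge0}\rho_{(l+d)(k+d)}\,\eta^d\,\frac{\sqrt{(k+d)!(l+d)!}}{\sqrt{k!l!}\,d!}$. The $d=0$ term is exactly $\sum_{k,l}A_{kl}\rho_{lk}=\Tr(A\rho)$, so the error is precisely the tail $d\ge1$.

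It remains to bound this tail by $\eta K_A$. After the triangle inequality I would invoke positive semidefiniteness, $|\rho_{(l+d)(k+d)}|\le\sqrt{\rho_{(l+d)(l+d)}\,\rho_{(k+d)(k+d)}}$, and factor $\frac{\sqrt{(k+d)!(l+d)!}}{\sqrt{k!l!}\,d!}=\sqrt{\binom{l+d}{l}}\sqrt{\binom{k+d}{k}}$. Cauchy--Schwarz in $d$ then separates the indices and bounds the inner sum by $\sqrt{\Phi_l\,\Phi_k}$, where $\Phi_l:=\sum_{d\ge1}\rho_{(l+d)(l+d)}\binom{l+d}{l}\eta^d$. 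The crux is $\Phi_l\le\eta(l+1)$: writing $\Phi_l=\eta\sum_{d\ge1}\rho_{(l+d)(l+d)}h(d)$ with $h(d)=\binom{l+d}{l}\eta^{d-1}$, one checks $h(1)=l+1$ and, for every $d$ with $l+d+1\le E$, that $h(d+1)/h(d)=\frac{l+d+1}{d+1}\eta\le\frac{E}{d+1}\eta<\frac{2}{d+1}\le1$, so $h$ is non-increasing over the support of $\rho$; hence $h(d)\le l+1$ and $\Phi_l\le\eta(l+1)\sum_{d\ge1}\rho_{(l+d)(l+d)}\le\eta(l+1)$ by $\Tr\rho=1$. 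Substituting back gives $\big|\Tr(A\rho)-\mathbb{E}_{\alpha\leftarrow Q_\rho}[f_A(\alpha,\eta)]\big|\le\sum_{k,l}|A_{kl}|\sqrt{\Phi_l\Phi_k}\le\eta\sum_{k,l}|A_{kl}|\sqrt{(k+1)(l+1)}=\eta K_A$.

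The main obstacle is this final step. A crude estimate $|\rho_{(l+d)(k+d)}|\le1$ term by term already saturates the target at $d=1$ and then overshoots (it yields only a constant of order $e^2-1$ times $\eta K_A$). Recovering exactly $\eta K_A$ forces the Cauchy--Schwarz split into diagonal factors combined with the monotonicity of $h$, and it is precisely there that both hypotheses are consumed: the finite support guarantees $l+d\le E$ so that $\frac{l+d+1}{d+1}\le\frac{E}{d+1}$, and $\eta<\frac2E$ then drives the ratio below $1$. I would verify carefully that the monotonicity chain reaches every $d$ in the support (each intermediate step $d'<d$ satisfies $l+d'+1\le l+d\le E$), which makes the bound $h(d)\le l+1$ rigorous for all contributing terms.
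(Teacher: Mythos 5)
Your proposal is correct, and its overall architecture matches the paper's: both evaluate $\mathbb{E}_{\alpha\leftarrow Q_\rho}[f_A(\alpha,\eta)]$ exactly as a finite sum whose $d=0$ term is $\Tr(A\rho)$, and both bound the tail $d\ge1$ by combining positive semidefiniteness ($|\rho_{(l+d)(k+d)}|\le\sqrt{\rho_{(l+d)(l+d)}\,\rho_{(k+d)(k+d)}}$), Cauchy--Schwarz, $\Tr\rho=1$, and a ratio/monotonicity argument that consumes exactly the two hypotheses $\eta<\frac2E$ and bounded support; indeed your exact formula is precisely the paper's intermediate lemma for $\mathbb{E}[f_{\ket l\bra k}(\alpha,\eta)]$, extended to general $A$ by linearity (the paper instead reduces to $A=\ket l\bra k$ at the outset via the triangle inequality, which is cosmetic). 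Where you genuinely diverge is in how that exact formula is obtained. The paper expands $\mathcal{L}_{k,l}$ as an explicit polynomial, computes the Gaussian integrals in polar coordinates, and then needs a nontrivial alternating binomial identity, $\sum_p(-1)^p\binom kp\binom lp/\binom qp$, quoted from Gould's combinatorial tables, to collapse the resulting sum. You instead exploit the Rodrigues form $\mathcal{L}_{k,l}(\beta)e^{-|\beta|^2}=\frac{(-1)^{k+l}}{\sqrt{k!\,l!}}\,\partial_\beta^k\partial_{\beta^*}^l e^{-|\beta|^2}$ and integrate by parts with Wirtinger derivatives, so that only the elementary moment $\frac1\pi\int\beta^a\beta^{*b}e^{-|\beta|^2}d^2\beta=a!\,\delta_{ab}$ is needed; this is more self-contained and eliminates the table lookup. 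The tail estimates are the same ingredients in reverse order: you apply Cauchy--Schwarz in $d$ first and then bound each single-index sum $\Phi_l\le\eta(l+1)$ via monotonicity of $h(d)=\binom{l+d}{l}\eta^{d-1}$, whereas the paper first proves the uniform coefficient bound $\eta^s\sqrt{\binom{s+k}k\binom{s+l}l}\le\eta\sqrt{(k+1)(l+1)}$ by induction on $s$ and then applies Cauchy--Schwarz to the diagonal entries. Your explicit check that the monotonicity chain stays inside the support (each intermediate $d'$ satisfies $l+d'+1\le l+d\le E$) is the correct analogue of the paper's use of $s+k,s+l\le E$ in its induction, so no step is missing.
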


\noindent For all theorems, the proof techniques are given in section~\ref{sec:proof} and the detailed proofs may be found in the appendix. This result provides an estimator for the expected value of any operator $A$ acting on a
continuous variable state $\rho$ with bounded support over the Fock basis. 
This estimator is the expected value of a bounded function $f_A$ over samples drawn 
from the probability density corresponding to a Gaussian measurement of $\rho$, 
namely heterodyne detection. The optical equivalence theorem for antinormal ordering corresponds to the limit $\eta\to0$.
The right hand side of Eq.~(\ref{th1}) is an energy bound, which depends on the operator $A$, the value $E$ and the precision parameter $\eta$.

When the operator $A$ is the density matrix of a continuous variable pure state $\ket\Psi$, the previous estimator approximates the fidelity $F(\Psi,\rho)=\braket{\Psi|\rho|\Psi}$ between $\ket\Psi\!\bra\Psi$ and $\rho$. With the same notations:

\begin{corollary} 
Let $E\in\mathbb N$ and let \mbox{$0<\eta<\frac2{E}$}. Let also $\ket\Psi\!\bra\Psi=\sum_{k,l=0}^{+\infty}{\psi_k\psi_l^*\ket k\!\bra l}$ be a normalised pure state and let \mbox{$\rho=\sum_{k,l=0}^{E}{\rho_{kl}\ket k\!\bra l}$} be a density operator with bounded support.
Then,
\be
\ba
\left|F\left(\Psi,\rho\right)-\underset{\alpha\leftarrow Q_{\mathrlap\rho}}{\mathbb E}[f_\Psi(\alpha,\eta)]\right|&\le\eta K_\Psi\le\frac\eta2(E+1)(E+2),
\ea
\label{co1}
\ee
where the function $f_A$ and the constant $K_A$ are defined in Eqs.~(\ref{f}) and (\ref{K}), for $A=\ket\Psi\!\bra\Psi$.
\label{corofidelity}
\end{corollary}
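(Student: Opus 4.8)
The plan is to obtain this as a direct specialisation of Theorem~\ref{thmain}, followed by an elementary estimate on the constant $K_\Psi$. The first step is to recognise that the fidelity is itself the expectation value of the rank-one operator $A=\ket\Psi\!\bra\Psi$, since
\be
F(\Psi,\rho) = \braket{\Psi|\rho|\Psi} = \Tr\!\left(\ket\Psi\!\bra\Psi\,\rho\right) = \Tr(A\rho).
\ee
Under this identification the auxiliary objects match exactly: $f_\Psi = f_A$ and $K_\Psi = K_A$ for $A=\ket\Psi\!\bra\Psi$, whose matrix elements are $A_{kl}=\psi_k\psi_l^*$. The hypotheses of Theorem~\ref{thmain} are satisfied, namely $E\in\mathbb N$, $0<\eta<\frac2E$, and $\rho$ supported up to level $E$, so the theorem delivers the first inequality of~(\ref{co1}) verbatim.

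It then remains to bound $K_\Psi$ by $\frac12(E+1)(E+2)$. The key observation is that for a rank-one operator the double sum defining the constant factorises into a perfect square:
\be
K_\Psi = \sum_{k,l=0}^E |\psi_k||\psi_l|\sqrt{(k+1)(l+1)} = \left(\sum_{k=0}^E |\psi_k|\sqrt{k+1}\right)^{\!2}.
\ee
I would then apply the Cauchy--Schwarz inequality to the single sum, splitting each summand as $|\psi_k|\cdot\sqrt{k+1}$, to obtain
\be
\sum_{k=0}^E |\psi_k|\sqrt{k+1} \le \sqrt{\sum_{k=0}^E |\psi_k|^2}\;\sqrt{\sum_{k=0}^E (k+1)}.
\ee
Normalisation of $\ket\Psi$ gives $\sum_{k=0}^E|\psi_k|^2\le\sum_{k=0}^{+\infty}|\psi_k|^2=1$, while the arithmetic series evaluates to $\sum_{k=0}^E(k+1)=\frac{(E+1)(E+2)}2$. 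Squaring the resulting estimate yields $K_\Psi\le\frac12(E+1)(E+2)$, and multiplying through by $\eta$ completes~(\ref{co1}).

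There is no substantial obstacle: the first inequality is an instance of Theorem~\ref{thmain} with $A=\ket\Psi\!\bra\Psi$, and the second follows from normalisation together with an arithmetic sum. The only point worth isolating is the product structure of $K_\Psi$, which renders the Cauchy--Schwarz estimate transparent and shows that the final energy bound is independent of the particular pure state, depending only on its Fock support cutoff $E$.
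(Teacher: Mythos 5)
Your proposal is correct and follows essentially the same route as the paper's own proof: apply Theorem~\ref{thmain} with $A=\ket\Psi\!\bra\Psi$ (using $F(\Psi,\rho)=\braket{\Psi|\rho|\Psi}=\Tr(A\rho)$ for a pure target), factor $K_\Psi$ as $\left(\sum_{k=0}^E|\psi_k|\sqrt{k+1}\right)^2$, and bound it via Cauchy--Schwarz, normalisation, and the arithmetic sum $\sum_{k=0}^E(k+1)=\frac{(E+1)(E+2)}{2}$. The only cosmetic difference is that you apply Cauchy--Schwarz to the single sum and then square, whereas the paper applies it directly to the square; these are the same estimate.
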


\noindent This result provides an estimator for the fidelity between any target pure state $\ket\Psi$ and any continuous variable (mixed) state $\rho$ with bounded support over the Fock basis. This estimator is the expected value of a bounded function $f_\Psi$ over samples 
drawn from the probability density corresponding to a Gaussian measurement of $\rho$, namely heterodyne detection. 
The right hand side of Eq.~(\ref{co1}) is an energy bound, which may be refined depending on the expression of $\ket\Psi$. In particular, the second bound is independent of the target state $\ket\Psi$. The assumption of bounded support makes sense for tomography, but not necessarily in an adversarial setting. We will relax this condition for the certification and verification protocols in the following, and indeed estimate the energy bound from the heterodyne measurements. Errors in this estimation are taken into account in the confidence statements.

Given these results, one may choose a target pure state $\ket\Psi$, and measure with heterodyne detection various copies of the output (mixed) state $\rho$ of a quantum device with bounded support over the Fock basis. Then, using the samples obtained, one may estimate the expected value of $f_\Psi$, thus obtaining an estimate of the fidelity between the states $\ket\Psi\!\bra\Psi$ and $\rho$. Using this result, we introduce a reliable method for performing continuous variable quantum state tomography using heterodyne detection.


\section{Reliable continuous variable state tomography}
\label{sec:2homQST}

\noindent Continuous variable quantum state tomography methods usually make two assumptions: firstly that the measured states are independent identical copies (i.i.d.\@ assumption, for \textit{independently and identically distributed}), and secondly that the measured states have a bounded support over the Fock basis~\cite{lvovsky2009continuous}.
With the same assumptions, we present a reliable method for state tomography with heterodyne detection which has the advantage of providing analytical confidence intervals.
Our method directly provides estimates of the elements of the state density matrix, 
phase included. As such, neither mathematical reconstruction of the phase, nor binning of the sample space is needed, since the samples are used only to compute expected values of bounded functions. 
Moreover, only a single fixed Gaussian measurement setting is needed, namely heterodyne detection (Fig.~\ref{fig:2homodyne}).

For tomographic application, all copies of the state are measured. For $n\ge1$, let $\alpha_1,\dots,\alpha_n\in\mathbb C$ be samples from heterodyne detection
of $n$ copies of a quantum state $\rho$. For $\epsilon>0$ and $k,l\in\mathbb N$, we define
\be
\rho^\epsilon_{kl}=\frac{1}{n}\sum_{i=1}^n{f_{\ket l\!\bra k}\left(\alpha_i,\epsilon/K_{\ket l\!\bra k}\right)},
\label{Fkleps}
\ee
where the function $f_A$ and the constant $K_A$ are defined in Eqs.~(\ref{f}) and (\ref{K}), for $A=\ket l\!\bra k$, and where $\epsilon>0$ is a free parameter. The quantity $\rho^\epsilon_{kl}$ is the average of the function $f_{\ket l\!\bra k}$ over the samples $\alpha_1,\dots,\alpha_n$. The next result shows that this estimator approximates the matrix element $k,l$ of this state with high probability.
We use the notations of Theorem~\ref{thmain}.

\begin{theorem}[Reliable heterodyne tomography]\label{thQST}
Let $\epsilon,\epsilon'>0$, $n\ge1$ and $\alpha_1,\dots,\alpha_n$ be samples obtained by measuring with heterodyne detection $n$ copies of a state $\rho=\sum_{k,l=0}^E{\rho_{kl}\ket k\!\bra l}$ with bounded support, for $E\in\mathbb N$. Then
\be
\left|\rho_{kl}-\rho^\epsilon_{kl}\right|\le\epsilon+\epsilon',
\label{QST}
\ee
for all $0\le k, l\le E$, with probability greater than
\be
1-4\smashoperator{\sum_{0\le k \le l \le E}}\exp\left[{-\frac{n\epsilon^{2+k+l}\epsilon'^2}{4C_{kl}}}\right],
\ee
where the estimate $\rho^\epsilon_{kl}$ is defined in Eq.~(\ref{Fkleps}) and where
\be
C_{kl}=\left[(k+1)(l+1)\right]^{1+\frac{k+l}2}2^{|l-k|}\binom{\max{(k,l)}}{\min{(k,l)}}
\label{Ckl}
\ee
is a constant independent of $\rho$.
\end{theorem}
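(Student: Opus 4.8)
The plan is to split the estimation error $|\rho_{kl}-\rho^\epsilon_{kl}|$ into a deterministic bias and a statistical fluctuation, and to control each separately. For the bias, note that $\Tr(\ket l\!\bra k\rho)=\bra k\rho\ket l=\rho_{kl}$, that the operator $\ket l\!\bra k$ has a single nonzero Fock coefficient so that $K_{\ket l\!\bra k}=\sqrt{(k+1)(l+1)}$ by Eq.~(\ref{Kmain}), and that the estimate $\rho^\epsilon_{kl}$ of Eq.~(\ref{Fkleps}) is exactly the empirical mean over the $n$ i.i.d.\ heterodyne outcomes $\alpha_i\leftarrow Q_\rho$ of the bounded function $f_{\ket l\!\bra k}(\,\cdot\,,\eta)$ with $\eta=\epsilon/K_{\ket l\!\bra k}$. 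Its expectation is thus $\underset{\alpha\leftarrow Q_\rho}{\mathbb E}[f_{\ket l\!\bra k}(\alpha,\eta)]$. After checking that this choice of $\eta$ satisfies the hypothesis $0<\eta<2/E$ of Theorem~\ref{thmain}, the latter applied with $A=\ket l\!\bra k$ gives, via Eq.~(\ref{th1}), the bias bound $\big|\rho_{kl}-\underset{\alpha\leftarrow Q_\rho}{\mathbb E}[\rho^\epsilon_{kl}]\big|\le\eta K_{\ket l\!\bra k}=\epsilon$. By the triangle inequality it then suffices to prove that the event $\big|\rho^\epsilon_{kl}-\underset{\alpha\leftarrow Q_\rho}{\mathbb E}[\rho^\epsilon_{kl}]\big|>\epsilon'$ occurs only with the complementary failure probability stated.

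For the statistical fluctuation I would apply a Hoeffding-type concentration inequality to the i.i.d.\ complex summands $f_{\ket l\!\bra k}(\alpha_i,\eta)$, treating their real and imaginary parts as separate bounded real random variables. The quantity that governs the bound is the uniform magnitude $M_{kl}:=\sup_{z\in\mathbb C}\big|f_{\ket l\!\bra k}(z,\eta)\big|$. Substituting $w=z/\sqrt\eta$ in Eq.~(\ref{fmain}) rewrites the summand as
\[
f_{\ket l\!\bra k}(z,\eta)=\eta^{-1-\frac{k+l}{2}}\,\mathcal L_{l,k}(w)\,e^{(\eta-1)ww^*},
\]
so that bounding $|f|$ amounts to bounding the $2$D Laguerre polynomial $\mathcal L_{l,k}$ tempered by the Gaussian factor $e^{(\eta-1)ww^*}$, whose decay (for $\eta<1$) overcomes the polynomial growth and renders the supremum finite. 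This is the step I expect to be the main obstacle. Expanding $\mathcal L_{l,k}$, equivalently relating it to an associated Laguerre polynomial $L^{(|l-k|)}_{\min(k,l)}(ww^*)$, invoking the standard estimate $|L_n^{(\alpha)}(x)|\le\binom{n+\alpha}{n}e^{x/2}$ for $x\ge0$, and maximising the remaining monomial-times-Gaussian factor, I expect a bound of the form $M_{kl}^2\le 2^{|l-k|}\binom{\max(k,l)}{\min(k,l)}\,\eta^{-(2+k+l)}$, with $\binom{\max}{\min}$ originating from the associated Laguerre bound. Inserting $\eta=\epsilon/\sqrt{(k+1)(l+1)}$ and using $[(k+1)(l+1)]^{\frac{2+k+l}{2}}=[(k+1)(l+1)]^{1+\frac{k+l}{2}}$ turns this into $M_{kl}^2\le C_{kl}\,\epsilon^{-(2+k+l)}$ with $C_{kl}$ exactly as in Eq.~(\ref{Ckl}).

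With $M_{kl}$ in hand, Hoeffding's inequality bounds the deviation of each of the real and imaginary parts of $\rho^\epsilon_{kl}$ from its mean by $2\exp\!\big(-n s^2/(2M_{kl}^2)\big)$ at threshold $s$. Choosing $s=\epsilon'/\sqrt2$, so that controlling both parts forces $\big|\rho^\epsilon_{kl}-\underset{\alpha\leftarrow Q_\rho}{\mathbb E}[\rho^\epsilon_{kl}]\big|\le\epsilon'$, and union-bounding over the two parts yields a failure probability at most $4\exp\!\big(-n\epsilon'^2/(4M_{kl}^2)\big)\le 4\exp\!\big(-n\epsilon^{2+k+l}\epsilon'^2/(4C_{kl})\big)$ for the pair $(k,l)$; the factor $4$ accounts for the two parts and the two-sided tails.

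Finally, since $\rho$ is Hermitian we have $\rho_{lk}=\rho_{kl}^*$, and since $\overline{\mathcal L_{k,l}}=\mathcal L_{l,k}$ gives $\overline{f_{\ket k\!\bra l}}=f_{\ket l\!\bra k}$ (the scalar prefactor in Eq.~(\ref{fmain}) being real) and hence $\rho^\epsilon_{lk}=(\rho^\epsilon_{kl})^*$, while $\eta$ is unchanged under $k\leftrightarrow l$, the error incurred for $(l,k)$ coincides with that for $(k,l)$. It therefore suffices to union-bound the failure events over the pairs $0\le k\le l\le E$, which combined with the bias bound and the triangle inequality gives Eq.~(\ref{QST}) with the claimed success probability $1-4\sum_{0\le k\le l\le E}\exp[-n\epsilon^{2+k+l}\epsilon'^2/(4C_{kl})]$. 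Apart from the uniform Laguerre estimate, every step is a routine bias--variance decomposition followed by Hoeffding and union bounds.
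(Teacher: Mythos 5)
Your proposal is correct and follows essentially the same route as the paper's proof: bias controlled by Theorem~\ref{thmain} applied to $A=\ket l\!\bra k$ with $\eta=\epsilon/K_{\ket l\!\bra k}=\epsilon/\sqrt{(k+1)(l+1)}$, fluctuation controlled by Hoeffding applied separately to real and imaginary parts at threshold $\epsilon'/\sqrt2$ (whence the factor $4$), then a triangle inequality and a union bound over $0\le k\le l\le E$. The one genuinely different ingredient is how you obtain the uniform bound $M_{kl}$ on $f_{\ket l\!\bra k}$. The paper writes $f_{\ket l\!\bra k}$ in terms of generalised Laguerre polynomials in \emph{two} ways, applies the positive-order bound $|L_n^{(\alpha)}(x)|\le \frac{\Gamma(n+\alpha+1)}{n!\,\Gamma(\alpha+1)}e^{x/2}$ on the region $|z|\le a$ and the negative-order bound $|L_n^{(\alpha)}(x)|\le 2^{-\alpha}e^{x/2}$ on $|z|\ge a$, and then chooses the radius $a$ to equalise the two estimates; this two-region optimisation is exactly where the factor $2^{|l-k|}\binom{\max(k,l)}{\min(k,l)}$ comes from. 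You instead keep the single positive-order representation $L^{(|l-k|)}_{\min(k,l)}$ everywhere and absorb the leftover monomial $|z|^{|l-k|}$ by maximising it against the surviving Gaussian $e^{(1-\frac{1}{2\eta})|z|^2}$. Carrying out that maximisation ($\sup_{x\ge0}x^{m}e^{-cx^2}=(m/2ce)^{m/2}$ with $m=|l-k|$, $c=\frac{1}{2\eta}-1$) and using Stirling's lower bound on $(|l-k|)!$ shows your route indeed yields $M_{kl}^2\le 2^{|l-k|}\binom{\max(k,l)}{\min(k,l)}\eta^{-(2+k+l)}$, in fact with a slightly better constant; so your argument is sound and arguably more elementary, needing only one Laguerre estimate instead of two. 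Two remarks: both your argument and the paper's implicitly require $\eta$ bounded away from $1/2$ (the paper needs $\eta\le1/2$ to discard $e^{(1-\frac{1}{2\eta})|z|^2}$, your maximisation needs roughly $\eta\le1/4$), a restriction absent from the theorem statement and equally glossed over in the paper; and your explicit Hermiticity step ($\rho^\epsilon_{lk}=(\rho^\epsilon_{kl})^*$ via $\overline{\mathcal L_{k,l}}=\mathcal L_{l,k}$), which justifies summing only over $k\le l$, is spelled out more carefully than in the paper, where this reduction is left implicit.
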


\noindent In light of this result, the principle for performing reliable heterodyne tomography is straightforward and as follows: $n$ identical copies $\rho^{\otimes n}$ of the output quantum state of a physical experiment or quantum device are measured with heterodyne detection, yielding the values $\alpha_1,\dots,\alpha_n$. These values are used to compute the estimates $\rho^\epsilon_{kl}$, defined in Eq.~(\ref{Fkleps}), for all $k,l$ in the range of energy of the experiment. Then, Theorem~\ref{thQST} directly provides confidence intervals for all these estimates of $\rho_{kl}$, the matrix elements of the density operator $\rho$, without the need for a binning of the sample space or any additional data reconstruction, using a single measurement setting. For a desired precision $\epsilon$ and a failure probability $\delta$, the number of samples needed scales as $n=\poly(1/\epsilon,\log(1/\delta))$.

Both homodyne and heterodyne quantum state tomography assume a bounded support over the Fock basis for the output state considered, i.e., that all matrix elements are equal to zero beyond a certain value, and that the output quantum states are i.i.d.\@, i.e., that all measured output states are independent and identical. While these assumptions are natural when looking at the output of a physical experiment, corresponding to a noisy partially trusted quantum device with bounded energy, they may be questionable in the context of untrusted devices. We remove these assumptions in what follows: we first drop the bounded support assumption, deriving a certification protocol for continuous variable quantum states of an i.i.d.\@ device with heterodyne detection ; then, we drop both assumptions, deriving a general verification protocol for continuous variable quantum states against an adversary who can potentially be fully malicious.


\section{State certification with Gaussian measurements}
\label{sec:Vi.i.d.}

\noindent Given an untrusted source of quantum states, the purpose of state certification and state verification protocols is to check whether if its output state is close to a given target state, or far from it. To achieve this, a verifier tests the output state of the source. Ideally, one would like to obtain an upper bound on the probability that the state is not close from the target state, given that it passed a test. However, this is known to be impossible without prior knowledge of the tested state distribution~\cite{gheorghiu2017verification}. Indeed, writing this conditional probability
\be
\Prb[\text{incorrect}|\text{accept}]=\frac{\Prb[\text{incorrect}\cap\text{accept}]}{\Prb[\text{accept}]},
\ee
in a situation where the device always produces a bad output state, it is rejected by the verifier's test most of the time, so the acceptance probability is very small and the conditional probability is equal to $1$. Therefore, the quantity that will always be bounded in certification and verification protocols, in which one does not have prior knowledge of the device, is the joint probability that the tested state is not close to the target state \textit{and} that it passes the test. Equivalently, we obtain lower bounds on the probability that the tested state is close to the target state or that it fails the test.

We first consider the certification of the output of an i.i.d.\@ quantum device, i.e., which output state is the same at each round. However, we do not assume that the output states of the device have bounded support over the Fock basis anymore. This is instead ensured probabilistically using the samples from heterodyne detection.

Our continuous variable quantum state certification protocol is then as follows: let $\ket\Psi$ be a target pure state, of which one wants to certify $m$ copies. The values $s$ and $E$ are free parameters of the protocol. One instructs the i.i.d.\@ device to prepare $n+m$ copies of $\ket\Psi$, and the device outputs an i.i.d.\@ (mixed) state $\rho^{\otimes(n+m)}$. One keeps $m$ copies $\rho^{\otimes m}$, and measures the $n$ others with heterodyne detection, obtaining the samples $\alpha_1,\dots,\alpha_n$. One records the number $r$ of samples such that $|\alpha_i|^2>E$. We refer to this step as \textit{support estimation}. For a given $\epsilon>0$, one also computes with the same samples the estimate
\be
F_\Psi(\rho)=\left[\frac{1}{n}\sum_{i=1}^n{f_\Psi\left(\alpha_i,\epsilon/(mK_\Psi)\right)}\right]^m,
\label{tildeF}
\ee
where the function $f_A$ and the constant $K_A$ are defined in Eqs.~(\ref{f}) and (\ref{K}), for $A=\ket\Psi\!\bra\Psi$, and where $\epsilon>0$ is a free parameter. The next result quantifies how close this estimate is from the fidelity between the remaining $m$ copies of the output state $\rho^{\otimes m}$ of the tested device and $m$ copies of the target state $\ket\Psi\!\bra\Psi^{\otimes m}$.

\begin{theorem}[Gaussian certification of continuous variable quantum states]\label{thi.i.d.}
Let $\epsilon,\epsilon'>0$, let $s\le n$, and let $\alpha_1,\dots,\alpha_n$ be samples obtained by measuring with heterodyne detection $n$ copies of a state $\rho$. Let $E$ in $\mathbb N$, and let $r$ be the number of samples such that $|\alpha_i|^2>E$. Let also $\ket\Psi$ be a pure state. Then for all $m\in\mathbb N^*$,
\be
\left|F(\Psi^{\otimes m},\rho^{\otimes m})-F_\Psi(\rho)\right|\le\epsilon+\epsilon',
\ee
or $r>s$, with probability greater than
\be
1-\left(P_{\text{support}}^{iid}+P_{\text{Hoeffding}}^{iid}\right),
\ee
\vspace{-5pt}
where
\vspace{-5pt}
\be
P_{\text{support}}^{iid}=\frac{(s+1)^{3/2}}n\exp\left[{\frac{(s+1)^2}{n+1}}\right],
\ee
\vspace{-10pt}
\be
P_{\text{Hoeffding}}^{iid}=2\exp\left[{-\frac{n\epsilon^{2+2E}\epsilon'^2}{2m^{4+2E}C^2_\Psi}}\right],
\ee
where the estimate $F_\Psi(\rho)$ is defined in Eq.~(\ref{tildeF}), and where
\be
C_\Psi=\sum_{k,l=0}^E{|\psi_k\psi_l|\left(\frac\epsilon m\right)^{E-\frac{k+l}2}K_\psi^{1+\frac{k+l}2}\sqrt{2^{|l-k|}\binom{\max{(k,l)}}{\min{(k,l)}}}}
\label{Cpsi}
\ee
is a constant independent of $\rho$, with the constant $K$ defined in Eq.~(\ref{Kmain}).
\end{theorem}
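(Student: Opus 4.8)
The plan is to reduce the $m$-copy statement to a single-copy fidelity estimate and then lift it to the $m$-th power. Since fidelity is multiplicative on product states, $F(\Psi^{\otimes m},\rho^{\otimes m})=F(\Psi,\rho)^m$, and writing $\hat F=\frac1n\sum_{i=1}^n f_\Psi(\alpha_i,\eta)$ with $\eta=\epsilon/(mK_\Psi)$ we have $F_\Psi(\rho)=\hat F^m$. Because $F(\Psi,\rho)\in[0,1]$ and $\hat F$ may be clipped to $[0,1]$ without increasing its distance to $F(\Psi,\rho)$, the elementary inequality $|a^m-b^m|\le m|a-b|$ for $a,b\in[0,1]$ reduces the claim to showing $|F(\Psi,\rho)-\hat F|\le(\epsilon+\epsilon')/m$ whenever $r\le s$, except with the stated probability. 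I would then split this single-copy error through the mean $\mu=\underset{\alpha\leftarrow Q_\rho}{\mathbb E}[f_\Psi(\alpha,\eta)]$ via $|F(\Psi,\rho)-\hat F|\le|F(\Psi,\rho)-\mu|+|\mu-\hat F|$, aiming for a bias of $\epsilon/m$ and a fluctuation of $\epsilon'/m$.

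The fluctuation term is handled by Hoeffding's inequality on the i.i.d.\ real samples $f_\Psi(\alpha_i,\eta)$ (the function is real since $\ket\Psi\!\bra\Psi$ is Hermitian). First I would bound $f_\Psi(\cdot,\eta)$ uniformly: it is a polynomial of degree $\le E$ in $\alpha,\alpha^*$ times the Gaussian $e^{(1-1/\eta)\alpha\alpha^*}$, and a term-by-term estimate of the Laguerre-$2$D polynomials $\mathcal L_{k,l}$ shows that $\|f_\Psi\|_\infty$ is of order $(m/\epsilon)^{E+1}C_\Psi$, with $C_\Psi$ as in Eq.~(\ref{Cpsi}). Hoeffding with target deviation $\epsilon'/m$ then gives $|\mu-\hat F|\le\epsilon'/m$ except with probability $P_{\text{Hoeffding}}^{iid}$; the factors $\epsilon^{2+2E}$ and $m^{4+2E}$ in the exponent are exactly $(\epsilon'/m)^2$ divided by $\|f_\Psi\|_\infty^2$.

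The bias term is the heart of the proof, because Corollary~\ref{corofidelity} only controls $|F(\Psi,\rho)-\mu|$ when $\rho$ has support $\le E$, which we no longer assume. My plan is to decompose $\rho$ using the projector $\Pi_E$ onto $\mathrm{span}\{\ket{0},\ldots,\ket{E}\}$. Since the target obeys $\ket\Psi=\Pi_E\ket\Psi$, the fidelity only sees the truncated block, $F(\Psi,\rho)=\Tr(\ket\Psi\!\bra\Psi\,\Pi_E\rho\Pi_E)$, so Theorem~\ref{thmain} applied to the bounded-support operator $\Pi_E\rho\Pi_E$ yields $|\Tr(\ket\Psi\!\bra\Psi\,\Pi_E\rho\Pi_E)-\underset{\alpha\leftarrow Q_{\Pi_E\rho\Pi_E}}{\mathbb E}[f_\Psi(\alpha,\eta)]|\le\eta K_\Psi=\epsilon/m$. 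It then remains to bound the difference between the mean under $Q_\rho$ and under $Q_{\Pi_E\rho\Pi_E}$, i.e.\ the contribution of the tail and cross terms $\rho-\Pi_E\rho\Pi_E$. The delicate point is that $\|f_\Psi\|_\infty$ is large, so a crude bound fails; instead one must exploit that $f_\Psi$ carries the decaying Gaussian $e^{(1-1/\eta)|\alpha|^2}$ (with $1-1/\eta<0$ since $\eta<1$), which is tiny precisely on the large-$|\alpha|^2$ region where the high-Fock weight $\Tr((I-\Pi_E)\rho)$ of $\rho$ places its heterodyne mass.

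This is where the support estimation finally enters. Writing $p=\Prb_{\alpha\leftarrow Q_\rho}[|\alpha|^2>E]$ for the single-shot probability of a high-energy outcome, the count $r$ is $\mathrm{Binomial}(n,p)$, and I would bound the residual tail bias by a function of $p$ using the relation between the heterodyne high-energy probability and the Fock-basis tail of $\rho$. The remaining step is a concentration/moment argument showing that the event ``$r\le s$ while the tail bias exceeds its allotted budget'' has probability at most $P_{\text{support}}^{iid}$; the polynomial-in-$n$ shape $(s+1)^{3/2}/n$, as opposed to an exponential Chernoff bound, reflects that $s$ sits near the binomial mean $np$, so only a second-moment estimate on the count is available. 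A union bound over $P_{\text{support}}^{iid}$ and $P_{\text{Hoeffding}}^{iid}$ then closes the argument. I expect the two main technical hurdles to be the tail bound that trades the largeness of $\|f_\Psi\|_\infty$ against the Gaussian decay of $f_\Psi$, and the precise binomial-moment estimate producing the exact constant in $P_{\text{support}}^{iid}$.
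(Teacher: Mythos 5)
Your first two steps are exactly the paper's: the clipping of the empirical mean to $[0,1]$ followed by the reduction $|a^m-b^m|\le m|a-b|$, the split of the single-copy error into bias and fluctuation, and the real-valued Hoeffding bound with $\|f_\Psi\|_\infty\le M_\Psi(\eta)/\eta^{1+E}$ and $\eta=\epsilon/(mK_\Psi)$, which reproduces $P_{\text{Hoeffding}}^{iid}$ with the correct constants. The gap sits in the part you yourself call the heart of the proof. First, a budget problem: with $\eta=\epsilon/(mK_\Psi)$, the bounded-support bias from Corollary~\ref{corofidelity} already consumes the full allowance $\epsilon/m$, so any additional tail bias, however well you bound it, breaks the claimed $\epsilon+\epsilon'$; you would have to split $\eta$, and the theorem's constants would change. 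Moreover, the deferred lemma bounding the difference of means under $Q_\rho$ and $Q_{\Pi_{\le E}\rho\Pi_{\le E}}$ is not a matter of Gaussian decay alone: the operator $\rho-\Pi_{\le E}\rho\Pi_{\le E}$ contains the coherences $\Pi_{\le E}\rho(1-\Pi_{\le E})+\mathrm{h.c.}$, whose contribution is not localised at $|\alpha|^2>E$, and Cauchy--Schwarz leaves you with $\sqrt{\Tr\left((1-\Pi_{\le E})\rho\right)}$ rather than the tail weight itself.

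Second, and more fundamentally, your final probabilistic step cannot produce $P_{\text{support}}^{iid}$. ``The tail bias exceeds its budget'' is a deterministic property of $\rho$, not an event, so the only statement available on your route is: if the bias bound forces $p=\Prb_{\alpha\leftarrow Q_\rho}\left[|\alpha|^2>E\right]$ above some threshold $p_0(\epsilon,m,E)$, then $\Prb\left[r\le s\right]=\Prb\left[\mathrm{Bin}(n,p)\le s\right]$ is small. In the critical regime $p\approx s/n$ this probability is $\Theta(1)$, so your argument needs $np_0\gg s$, an extra constraint tying $n$ to $\epsilon,m,E,s$ that is absent from the theorem, and the resulting failure probability would depend on $\epsilon,m,E$, whereas $P_{\text{support}}^{iid}=\frac{(s+1)^{3/2}}{n}\exp\left[\frac{(s+1)^2}{n+1}\right]$ depends only on $s$ and $n$; your reading of this formula as a second-moment estimate near the binomial mean is not where it comes from. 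The paper's mechanism is different: it introduces a virtual $(n+1)$-th copy on which one imagines measuring $\{\Pi_{\le E},1-\Pi_{\le E}\}$, and uses the operator inequality $1-\Pi_{\le E}\le 2T$ with $T=\frac1\pi\int_{|\alpha|^2\ge E}\ket\alpha\!\bra\alpha\, d^2\alpha$ (Leverrier) to tie the projection-failure probability of that copy to the same parameter $p=\Tr(T\rho)$ governing the observed counts. The joint probability is then bounded uniformly in $p$ by $\sum_{r\le s}2\binom nr p^{r+1}(1-p)^{n-r}$, maximised at $p=(r+1)/(n+1)$; the extra factor of $p$ contributed by the virtual copy is precisely what produces the $1/n$ (without it the maximum over $p$ is only $\Theta(1/\sqrt r)$). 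The bias analysis is then carried out entirely under the bounded-support hypothesis and glued to this support-estimation event by a union bound, so no tail bias is ever estimated. That extra-copy trick, which turns ``unbounded support'' into a random event with probability at most $2p$, is the missing idea.
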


\noindent This results implies that the quantity $F_\Psi(\rho)$ is a good estimate of the fidelity $F(\Psi^{\otimes m},\rho^{\otimes m})$, or the score at the support estimation step is higher than $s$, with high probability.
The values of the energy parameters $E$ and $s$ should be chosen to guarantee completeness, i.e., that if the correct state $\ket\Psi$ is sent, then $r\le s$ with high probability. 

This theorem is valid for all continuous variable target pure states $\ket\Psi$, and the failure probability may be greatly reduced depending on the expression of $\ket\Psi$.
The number of samples needed for certifying a given number of copies $m$ with a precision $\epsilon$ and a failure probability $\delta$ scales as \mbox{$n=\text{poly}(m,1/\epsilon,1/\delta)$}. Note that the same protocol may be used to obtain reliable estimates of $\Tr(A\rho)$ for any operator $A$ under the i.i.d.\@ assumption, by setting $m=1$ and replacing $\Psi$ by $A$ in Eq.~(\ref{tildeF}).

This certification protocol is promoted to a verification protocol in the following section, by removing the i.i.d.\@ assumption.


\section{State verification with Gaussian measurements}
\label{sec:Vgeneral}

\noindent We now consider an adversarial setting, where a verifier delegates the preparation of a continuous variable quantum state to a potentially malicious party, called the \textit{prover}. One could see the verifier as the experimentalist in the laboratory and the prover as the noisy device, where we aim not to make any assumptions about its correct functionality or noise model. Given the absence of any direct error correction mechanism that permits a fault tolerant run of the device, the aim of verification is to ensure that a wrong outcome is not being accepted. In the context of state verification, this amounts to making sure that the output state of the tested device is close to an ideal target state. 

The prover is not supposed to have i.i.d.\@ behaviour. In particular, when asked for various
copies of the same state, the prover may actually send a large state entangled over all subsystems, 
possibly also entangled with a quantum system on his side.
In that case, the certification protocol derived in the previous section is not reliable. With usual tomography measurements, the number of samples needed for a given precision of the fidelity estimate scales exponentially in the number of copies to verify. This is an essential limitation of quantum tomography techniques, because they check all possible correlations between the different subsystems. 

However we prove that, because of the symmetry of the protocol, the verifier can assume that the prover is sending permutation-invariant states, i.e., states that are invariant under any permutation of their subsystems. With a specific support estimation step, reduced states of permutation-invariant states are close to mixture almost-i.i.d.\@ states, i.e., states that are i.i.d.\@ on almost all subsystems. At the heart of this reduction is the de Finetti theorem for infinite-dimensional systems of~\cite{renner2009finetti}, which allows restricting to an almost-i.i.d.\@ prover. 

Our verification protocol is then as follows: the verifier wants to verify $m$ copies of a target pure state $\ket\Psi$. The values $n$, $k$, $q$, $s$ and $E$ are free parameters of the protocol. The prover is instructed to prepare $n+k$ copies of $\ket\Psi$ and send them to the verifier. The verifier picks $k$ subsystems at random and measures them with heterodyne detection, obtaining the samples $\beta_1,\dots,\beta_k$, and records the number $r$ of values $|\beta_i|^2>E$. The verifier discards $4q$ subsystems at random and measures all the others but $m$ chosen at random with heterodyne detection, obtaining the samples $\alpha_1,\dots,\alpha_{n-4q-m}$. Finally, the verifier computes with these samples the estimate
\be
F_\Psi(\rho)=\left[\frac{1}{n-4q-m}\smashoperator{\sum_{i=1}^{n-4q-m}}{f_\Psi\left(\alpha_i,\epsilon/(mK_\Psi)\right)}\right]^m,
\label{tildeF2}
\ee
where the function $f_A$ and the constant $K_A$ are defined in Eqs.~(\ref{f}) and (\ref{K}), for $A=\ket\Psi\!\bra\Psi$ and where $\epsilon>0$ is a free parameter. Note that this estimate is identical to the one defined in Eq.~(\ref{tildeF}), replacing $n$ by $n-4q-m$.
 
\begin{theorem}[Gaussian verification of continuous variable quantum states] 
Let $n\ge1$, let $s\le k$, and let $\rho^{n+k}$ be a state over $n+k$ subsystems. Let $\beta_1,\dots,\beta_k$ be samples obtained by measuring $k$ subsystems at random with heterodyne detection and let $\rho^n$ be the remaining state after the measurement. Let $E$ in $\mathbb N$, and let $r$ be the number of samples such that $|\beta_i|^2>E$. Let also $q\ge m$, and let $\rho^m$ be the state remaining after discarding $4q$ subsystems of $\rho^n$ at random, and measuring $n-4q-m$ other subsystems at random with heterodyne detection, yielding the samples $\alpha_1,\dots,\alpha_{n-4q-m}$. Let $\epsilon,\epsilon'>0$ and let $\ket\Psi$ be a target pure state. Then,
\be
\left|F\left(\Psi^{\otimes m},\rho^m\right)-F_\Psi(\rho)\right|\le\epsilon+\epsilon'+P_{deFinetti},
\ee
or $r>s$, with probability greater than
\be
1-\left(P_{\text{support}}+P_{\text{deFinetti}}+P_{\text{choice}}+P_{\text{Hoeffding}}\right),
\ee
where
\be
P_{\text{support}}=8k^{3/2}\exp\left[{-\frac{k}{9}\left(\frac{q}{n}-\frac{2s}{k}\right)^2}\right],
\ee
\be
P_{\text{deFinetti}}=q^{(E+1)^2/2}\exp\left[{-\frac{2q(q+1)}{n}}\right],
\ee
\be
P_{\text{choice}}=\frac{m(4q+m-1)}{n-4q},
\ee
\be
P_{\text{Hoeffding}}=2\binom{n-4q}{4q}\exp\left[{-\frac{n-8q}{2m^{4+2E}}\left(\frac{\epsilon^{1+E}\epsilon'}{C_\Psi}-\frac{8qm^{2+E}}{n-4q-m}\right)^2}\right],
\ee
where the estimate $F_\Psi(\rho)$ is defined in Eq.~(\ref{tildeF2}), and where $C_\Psi$ is a constant independent of $\rho$ defined in Eq.~(\ref{Cpsi}).
\label{thVUCVQC}
\end{theorem}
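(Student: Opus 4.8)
The plan is to reduce the fully adversarial (non-i.i.d.) setting to the i.i.d.\@ setting of Theorem~\ref{thi.i.d.} by exploiting the permutation symmetry of the protocol together with the infinite-dimensional de Finetti reduction of~\cite{renner2009finetti}, and then to control the remaining statistical fluctuations by a concentration argument. Each of the four terms in the failure probability corresponds to one stage of this reduction. First I would symmetrise: since every subsystem that the verifier measures, discards, or keeps is chosen uniformly at random, the joint distribution of all classical outcomes (the samples $\beta_i$, $\alpha_i$, the count $r$) together with the kept reduced state $\rho^m$ is invariant under relabelling the input subsystems. Hence we may replace the prover's state $\rho^{n+k}$ by its symmetrisation over the permutation group without changing any of the quantities in the statement, and assume from now on that $\rho^{n+k}$ is permutation-invariant. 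This is precisely what licenses the use of a de Finetti theorem.

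Next I would handle the support estimation step. Measuring $k$ subsystems and counting the number $r$ with $|\beta_i|^2 > E$ is sampling without replacement from a permutation-invariant state, so a Serfling-type tail bound relates the observed fraction $r/k$ to the true fraction of high-energy subsystems among the remaining $n$; this yields $P_{\text{support}}$ and guarantees, conditioned on $r\le s$, that with high probability few of the remaining subsystems carry energy above $E$, i.e.\@ that the remaining state is close to one supported on the low-energy subspace of local dimension $E+1$. I would then apply the de Finetti reduction of~\cite{renner2009finetti} to this effectively truncated permutation-invariant state: keeping $q$ subsystems, its reduced state is $P_{\text{deFinetti}}$-close in trace distance to a mixture of i.i.d.\@ states $\sigma^{\otimes q}$, with $q^{(E+1)^2/2}$ the effective-dimension cost of the representation (the space of density matrices on an $(E+1)$-dimensional space has dimension $(E+1)^2$) and $\exp[-2q(q+1)/n]$ the penalty for retaining a fraction $q/n$ of the subsystems. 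This is the step that converts the adversarial state into a convex combination of i.i.d.\@ states; the residual trace distance is exactly why $P_{deFinetti}$ is added to $\epsilon+\epsilon'$ in the deviation bound itself and not only in the failure probability.

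Finally I would pass from the mixture of i.i.d.\@ states to the empirical estimate. The term $P_{\text{choice}}$ is a collision (birthday-type) bound, of order $m(4q+m-1)/(n-4q)$, ensuring that the $m$ kept subsystems and the measured subsystems do not overlap in a way that spoils the required independence. Conditioned on this and on the de Finetti replacement, each i.i.d.\@ component is exactly the situation of Theorem~\ref{thi.i.d.}, so the empirical average of the bounded function $f_\Psi$ concentrates around $\mathbb{E}_{\alpha\leftarrow Q_\rho}[f_\Psi(\alpha,\eta)]$, which by Corollary~\ref{corofidelity} approximates the fidelity. A Hoeffding bound over the $n-4q$ relevant samples, summed against the $\binom{n-4q}{4q}$ placements of the discarded subsystems and carrying the correction term for the almost-i.i.d.\@ (rather than exactly i.i.d.) structure, yields $P_{\text{Hoeffding}}$. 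Combining the four stages by a union bound produces the claimed deviation with the stated probability.

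The main obstacle, I expect, is the infinite-dimensional de Finetti step. A naive application fails because the local Hilbert space is infinite-dimensional, and the entire purpose of the support-estimation stage is to certify \emph{probabilistically} an effective truncation to dimension $E+1$ so that a finite-dimensional de Finetti bound becomes usable. Making the truncation, the de Finetti approximation, the collision bound, and the concentration error combine into a single clean statement --- in particular propagating the trace-distance error from the de Finetti reduction through the highly non-linear $m$-th power estimator $F_\Psi(\rho)$ --- is the delicate bookkeeping at the heart of the proof.
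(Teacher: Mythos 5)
Your proposal follows essentially the same route as the paper's proof: symmetrisation to a permutation-invariant state, a Serfling-type support estimation step yielding $P_{\text{support}}$, the exponential de Finetti reduction of~\cite{renner2009finetti} yielding $P_{\text{deFinetti}}$ (correctly placed both in the failure probability and in the deviation bound), a subsystem-selection bound yielding $P_{\text{choice}}$, and a Hoeffding bound adapted to the almost-i.i.d.\@ structure yielding $P_{\text{Hoeffding}}$, all combined by a union bound with the de Finetti error propagated through the non-linear estimator via trace-distance monotonicity. The one imprecision is your middle description of the de Finetti step --- the theorem is applied to a purification in the symmetric subspace of $\bar{\mathcal H}\otimes\bar{\mathcal H}$ (whence the $(E+1)^2$ exponent) and yields a mixture of \emph{almost}-i.i.d.\@ states on $n-4q$ subsystems (i.i.d.\@ only on $n-8q$ of them), not exactly i.i.d.\@ states $\sigma^{\otimes q}$ --- but your final paragraph implicitly corrects this, since the $\binom{n-4q}{4q}$ factor and the correction term in $P_{\text{Hoeffding}}$, as well as $P_{\text{choice}}$ itself, exist precisely to handle that residual non-i.i.d.\@ block.
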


\noindent This result implies that the quantity $F_\Psi(\rho)$ is a good estimate of the fidelity $F(\Psi^{\otimes m},\rho^m)$, or the score at the support estimation step is higher than $s$, with high probability.
Like for the certification protocol, the values of the energy parameters $E$ and $s$ should be chosen by the verifier to guarantee completeness, i.e., that if the prover sends the correct state $\ket\Psi$, then $r\le s$ with high probability. 

For specific choices of the free parameters of the protocol either the estimate $F_\Psi(\rho)$ is polynomially precise in $m$, or $r>s$, with polynomial probability in $m$, with $n,k,q=\text{poly}(m)$.
In particular, the efficiency of the protocol may be greatly refined by taking into account the expression of $\ket\Psi$ in the Fock basis, and optimizing over the free parameters.

This verification protocol let the verifier gain confidence about the precision of the estimate of the fidelity in Eq.~(\ref{tildeF2}). If the value of the estimate is close enough to $1$, the verifier may decide to use the state to run a computation. Indeed, statements on the fidelity of a state allow inferring the correctness of any trusted computation done afterwards using this state.
Let $\beta>0$, and let $\mathcal O$ be the observable corresponding to the result
of the trusted computation performed on $\rho^m$, the reduced state over $m$ subsystems
 instead of $\ket\Psi^{\otimes m}$, $m$ copies of the target state $\ket\Psi$. In other words, $\mathcal O$ encodes the resources 
which the verifier can perform perfectly (ancillary states, evolution and measurements),
the imperfections being encoded in $\rho$.
Then, $F\left(\Psi^{\otimes m},\rho^m\right)\ge1-\beta$ implies the following bound
on the total variation distance between the probability densities of the computation
output of the actual and the target computations:
\be
  \|P_{\Psi^{\otimes m}}^{\mathcal O}-P_{\rho^m}^{\mathcal O}\|_{tvd}
    \le D(\Psi^{\otimes m},\rho^m)
    \le\sqrt\beta,
\ee 
by standard properties of the trace distance $D$~\cite{fuchs1999cryptographic}. What this means is that the distribution of outcomes for the state $\rho^m$ sent by the prover 
is almost indistinguishable from the distribution of outcomes for $m$ copies of the ideal
state $\ket\Psi$, when the fidelity is close enough to one.


\section{Discussion}

\noindent Determining an unknown continuous variable quantum state is especially difficult since it is described by possibly infinitely many complex parameters. Existing methods like homodyne quantum state tomography require many different measurement settings, and heavy classical post-processing. For that purpose, we have introduced a reliable method for heterodyne quantum state tomography, which uses heterodyne detection as a single Gaussian measurement setting, and allows the retrieval of the density matrix of an unknown quantum state without the need for data reconstruction nor binning of the sample space. 
For data reconstruction methods such as Maximum Likelihood, errors from the reconstruction procedure are usually indistinguishable from errors coming from the tested quantum device. For that reason, such methods do not extend well to the task of verification, unlike our method.

Building on these tomography techniques, and with the addition of cryptographic techniques such as the de Finetti theorem, we have derived a protocol for verifying various copies of a continuous variable quantum state, without i.i.d.\@ assumption, with Gaussian measurements. This protocol is robust, as it directly gives a confidence interval on an estimate of the fidelity between the tested state and the target pure state. We emphasize that, while the target state is pure, the tested state is not required to be pure. 

Our verification protocol is complementary to the approach of~\cite{takeuchi2018resource}, in which a measurement-only verifier performs continuous variable quantum computing by delegating the preparation of Gaussian cluster states to a prover, and has to perform non-Gaussian measurements. In our approach, the measurement-only verifier may perform continuous variable quantum computing by delegating the preparation of non-Gaussian states to the prover, and has to perform Gaussian measurement, which are much easier to perform experimentally.

Our protocol may be tailored to different uses and assumptions, from tomography to verification, simply by changing the classical post-processing.
We expect this protocol to be useful for the validation of continuous variable 
quantum devices in the NISQ~\cite{preskill2018quantum} era and onwards.

In particular, an interesting perspective would be fine-tuning the various parameters of the protocol for specific target states in order to optimise its efficiency, thus reducing the number of samples needed for a given confidence interval. 
Another interesting prospect would be extending our main technical result, Theorem~\ref{thmain}, which applies to operators, to quantum maps. Also, in the case where the operator is the density matrix of a target pure state, our result provide an estimate for the fidelity, and it would be interesting to extend this to target mixed states.


\section{Proof techniques}
\label{sec:proof}

\noindent This section contains the primary mathematical tools used in the 
proofs of the theorems, along with some intuition. The full technical proofs are
detailed in the appendix.

The function $z\mapsto f_A(z,\eta)$ defined in Eq.~(\ref{fmain}) for $\eta>0$ is a bounded approximation of the Glauber-Sudarshan function $P_A$ of the operator $A$. This approximation is parametrised by a precision $\eta$, and a cutoff value $E$. The optical equivalence theorem for antinormal ordering~\cite{cahill1969density} reads
\be
\Tr(A\rho)=\int{Q_\rho(\alpha)P_A(\alpha)d^2\alpha}.
\ee
Given that
\be
\underset{\alpha\leftarrow Q_{\mathrlap\rho}}{\mathbb E}[f_A(\alpha,\eta)]=\int{Q_\rho(\alpha)f_A(\alpha,\eta)d^2\alpha},
\ee
we can expect that $\underset{\alpha\leftarrow Q_{\mathrlap{\rho}}}{\mathbb E}[f_A(\alpha,\eta)]$ is an approximation of $\Tr(A\rho)$ parametrised by $\eta$ and $E$. Theorem~\ref{thmain} makes this statement precise. We refer the reader to appendix~\ref{app:proofmaingen} for a detailed proof.

The proof of Theorem~\ref{thQST} combines Theorem~\ref{thmain} with Hoeffding inequality~\cite{hoeffding1963probability}, which quantifies the speed of convergence of the sample mean towards the expected value of a bounded i.i.d.\@ random variable:

\begin{lemma}\textbf{(Hoeffding)} Let $\lambda>0$, let $n\ge1$, let $z_1,\dots,z_n$ be i.i.d.\@ complex random variables from a probability density $D$ over $\mathbb R$, and let $f:\mathbb C\mapsto\mathbb R$ such that $|f(z)|\le M$, for $M>0$ and all $z\in\mathbb C$. Then
\be
\Prb\left[\left|\frac{1}{n}\sum_{i=1}^n{f(z_i)}- \underset{z\leftarrow D}{\mathbb E}[f(z)]\right|\ge\lambda\right] \leq 2\exp\left[{-\frac{n\lambda^2}{2M^2}}\right].
\ee
\end{lemma}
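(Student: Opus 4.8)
The plan is to recognise this as the classical Hoeffding inequality applied to the real bounded random variables $X_i := f(z_i)$, and to prove it from scratch via the exponential-moment method. Since $f$ takes real values with $|f(z)|\le M$, each $X_i$ lies in $[-M,M]$ regardless of the distribution of the $z_i$, and the $X_i$ are i.i.d.\@ with common mean $\mu:=\underset{z\leftarrow D}{\mathbb E}[f(z)]$. Writing $S:=\frac1n\sum_{i=1}^n X_i$, I would first split the two-sided bound by a union bound,
\[
\Prb\left[\left|S-\mu\right|\ge\lambda\right]\le\Prb\left[S-\mu\ge\lambda\right]+\Prb\left[\mu-S\ge\lambda\right],
\]
and observe that the second tail is the first with $X_i$ replaced by $-X_i$ (which lies in the same interval); so it suffices to bound one tail, and the factor $2$ in the statement is exactly this symmetrisation.

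For the upper tail I would use the Chernoff bound: for any $t>0$, Markov's inequality applied to $\exp\big(t\sum_i(X_i-\mu)\big)$ together with independence gives
\[
\Prb\left[S-\mu\ge\lambda\right]\le e^{-tn\lambda}\prod_{i=1}^n\mathbb E\left[e^{t(X_i-\mu)}\right].
\]
The crux is the single-variable estimate $\mathbb E[e^{t(X_i-\mu)}]\le e^{t^2M^2/2}$, which I establish separately below. Granting it, the product is bounded by $e^{nt^2M^2/2}$, so the right-hand side is $\exp\big(nt^2M^2/2-tn\lambda\big)$; minimising the exponent over $t>0$ at $t=\lambda/M^2$ yields $\exp\big(-n\lambda^2/(2M^2)\big)$. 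Adding the symmetric lower tail produces the claimed factor $2$ and finishes the argument.

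The hard part is the moment-generating-function bound, which is Hoeffding's lemma. Setting $Y:=X_i-\mu\in[a,b]$ with $a=-M-\mu$, $b=M-\mu$ and $\mathbb E[Y]=0$, I would use convexity of $y\mapsto e^{ty}$ to write, for $y\in[a,b]$,
\[
e^{ty}\le\frac{b-y}{b-a}\,e^{ta}+\frac{y-a}{b-a}\,e^{tb}.
\]
Taking expectations and using $\mathbb E[Y]=0$ reduces $\mathbb E[e^{tY}]$ to a function of $u:=t(b-a)$ of the form $\exp(\varphi(u))$ with $\varphi(u)=-pu+\log(1-p+pe^{u})$ and $p=-a/(b-a)\in[0,1]$. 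I would then check $\varphi(0)=\varphi'(0)=0$ and bound the second derivative by $\varphi''(u)=s(1-s)\le\frac14$, where $s=pe^u/(1-p+pe^u)\in[0,1]$; a second-order Taylor expansion then gives $\varphi(u)\le u^2/8=t^2(b-a)^2/8$. Since $b-a=2M$, this is exactly $t^2M^2/2$, as required. Everything apart from this convexity estimate is routine, so the lemma, and hence the statement, follows.
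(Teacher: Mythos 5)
Your proposal is correct and matches the paper's approach: the paper proves this lemma in one line, by observing that the $f(z_i)$ are bounded real i.i.d.\@ random variables and invoking the classical Hoeffding inequality, which is exactly your reduction. The only difference is that you additionally supply the standard textbook proof (Chernoff bound plus Hoeffding's lemma via convexity) of the classical inequality that the paper simply cites, and that part of your argument is also correct.
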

\noindent The proof then follows by applying this inequality for $D=Q_\rho$, and $f=f_{\ket k\bra l}$, for all values of $k,l$ between $0$ and $E$, together with the union bound. We refer the reader appendix~\ref{app:proofQST} for a detailed proof.

Theorem~\ref{thi.i.d.} removes the bounded support assumption and its proof is similar to the one of Theorem~\ref{thQST}, with the addition of a support estimation step, using samples from heterodyne detection. The main result utilised here is the fact that for all $E$~\cite{leverrier2013security}
\be
1-\Pi_{\le E}=\smashoperator{\sum_{n=E+1}^{+\infty}}{\ket n\bra n}\le\frac2\pi\smashoperator{\int_{\quad|\alpha|^2\ge E}}{\ket\alpha\bra\alpha d^2\alpha},
\ee
where $\Pi_{\le E}$ is the projector onto the space of states of support bounded by $E$. This result allows to bound the probability of having a large support and obtaining a low score at the support estimation step. We refer the reader to appendix~\ref{app:certification} for a detailed proof.

The proof of Theorem~\ref{thVUCVQC} is the most technical. This proof combines three main ingredients: a support estimation step for permutation-invariant states using samples from heterodyne detection, the de Finetti reduction from~\cite{renner2009finetti}, and a refined version of Hoeffding inequality for superpositions of almost-i.i.d.\@ states under a product measurement. We refer the reader to appendix~\ref{app:verification} for a detailed proof.


\bibliography{bibliography}


\appendix


\,\newpage


\begin{center}
{\LARGE\textbf{Appendix}}\\
\end{center}

\noindent In what follows, we prove the theorems from the main text. We first introduce a few notations and technical results, and then provide the detailed proofs. Proofs of intermediate technical results are indicated by a vertical bar.

\section{Preliminary material}
\label{app:fidelity}

\noindent The fidelity between two states $\rho,\sigma$ is defined as~\cite{nielsen2002quantum}
\be
F(\rho,\sigma)=\Tr\left(\sqrt{\sqrt\sigma\rho\sqrt\sigma}\right).
\ee
When at least one of the two states is a pure state, this expression reduces to
\be
F\left(\Psi,\rho\right)=\braket{\Psi|\rho|\Psi}.
\label{fidepure}
\ee
We write the Schatten $1$-norm of a bounded operator $T$ as
\be
\|T\|_1=\Tr\left(\sqrt{T^\dag T}\right)=\Tr(|T|).
\ee
The fidelity is related to the trace distance $D(\rho,\sigma)=\frac{1}{2}\|\rho-\sigma\|_1=\frac{1}{2}\Tr(|\rho-\sigma|)$ by~\cite{fuchs1999cryptographic}
\be
1-\sqrt{F(\rho,\sigma)}\le D(\rho,\sigma)\le \sqrt{1-F(\rho,\sigma)}.
\label{td1}
\ee
The trace distance verifies
\be
D(\rho,\sigma)=\max_{\mathcal O}{\|P_\rho^{\mathcal O}-P_\sigma^{\mathcal O}\|_{tvd}},
\label{td2}
\ee
where $P_\rho^{\mathcal O}$ (resp.\@ $P_\sigma^{\mathcal O}$) is the probability density associated to measuring the observable $\mathcal O$ for the state $\rho$ (resp.\@ $\sigma$), and where the maximum of the total variation distance $\|\cdot\|_{tvd}$ is taken over all observables. 
We introduce the following result:

\begin{lem} Let $0<\beta<1$. Let $\rho_1,\rho_2$ be two states such that $F(\rho_1,\rho_2)>1-\beta$. Let $\ket\Phi$ be a pure state, then
\be
\left|F(\Phi,\rho_1)-F(\Phi,\rho_2)\right|\le D(\rho_1,\rho_2)\le\sqrt\beta.
\ee
\label{lem:fidelitytriangular}
\end{lem}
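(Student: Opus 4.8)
The plan is to prove the two inequalities separately, both relying on the standard trace-distance facts recalled in Eqs.~(\ref{td1}) and~(\ref{td2}).

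The second inequality I would treat first, as it is immediate. Since $F(\rho_1,\rho_2)>1-\beta$, the right-hand side of Eq.~(\ref{td1}) gives
\be
D(\rho_1,\rho_2)\le\sqrt{1-F(\rho_1,\rho_2)}<\sqrt{1-(1-\beta)}=\sqrt\beta,
\ee
so no further work is needed for the bound $D(\rho_1,\rho_2)\le\sqrt\beta$.

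For the first inequality, I would use that $\ket\Phi$ is pure, so by Eq.~(\ref{fidepure}) we have $F(\Phi,\rho_i)=\braket{\Phi|\rho_i|\Phi}=\Tr(\ket\Phi\!\bra\Phi\,\rho_i)$ for $i=1,2$. The key idea is to interpret $\ket\Phi\!\bra\Phi$ as one effect of a two-outcome projective measurement, with POVM $\{\ket\Phi\!\bra\Phi,\Id-\ket\Phi\!\bra\Phi\}$. For a state $\rho_i$, this measurement produces a Bernoulli distribution with success probability $F(\Phi,\rho_i)$. Denoting by $\mathcal O$ this observable, the total variation distance between the two induced distributions is
\be
\|P_{\rho_1}^{\mathcal O}-P_{\rho_2}^{\mathcal O}\|_{tvd}=\frac12\Big(\module{F(\Phi,\rho_1)-F(\Phi,\rho_2)}+\module{(1-F(\Phi,\rho_1))-(1-F(\Phi,\rho_2))}\Big)=\module{F(\Phi,\rho_1)-F(\Phi,\rho_2)}.
\ee
Since Eq.~(\ref{td2}) expresses $D(\rho_1,\rho_2)$ as the maximum of this total variation distance over all observables, choosing this particular $\mathcal O$ yields $\module{F(\Phi,\rho_1)-F(\Phi,\rho_2)}\le D(\rho_1,\rho_2)$, which is exactly the first inequality. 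Chaining the two bounds then gives the claim.

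I do not expect a substantial obstacle here: the only point requiring a little care is verifying that the two-outcome distributions attached to the projective measurement $\ket\Phi\!\bra\Phi$ have total variation distance equal to $\module{F(\Phi,\rho_1)-F(\Phi,\rho_2)}$, which is the direct computation displayed above. Alternatively, one could bypass Eq.~(\ref{td2}) entirely and invoke the variational characterisation $D(\rho_1,\rho_2)=\max_{0\le P\le\Id}\Tr\big(P(\rho_1-\rho_2)\big)$, applied to $P=\ket\Phi\!\bra\Phi$ and to $P=\Id-\ket\Phi\!\bra\Phi$ in order to recover both signs of the absolute value; this leads to the same conclusion.
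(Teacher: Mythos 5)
Your proposal is correct and follows essentially the same route as the paper's own proof: both arguments interpret $\ket\Phi\!\bra\Phi$ as one effect of the binary measurement $\{\ket\Phi\!\bra\Phi,\Id-\ket\Phi\!\bra\Phi\}$, identify $\left|F(\Phi,\rho_1)-F(\Phi,\rho_2)\right|$ with the total variation distance of the induced Bernoulli distributions, bound it by $D(\rho_1,\rho_2)$ via Eq.~(\ref{td2}), and conclude with Eq.~(\ref{td1}). The only differences are cosmetic (you handle the second inequality first, and mention the variational characterisation of the trace distance as an optional shortcut), so nothing further is needed.
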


\begin{leftbar}
\begin{proof} Let us write $P^\Phi_{\rho_1}$ and $P^\Phi_{\rho_2}$ the probability distributions associated to the binary measurement $\{\ket\Phi\bra\Phi,I-\ket\Phi\bra\Phi\}$ of the states $\rho_1$ and $\rho_2$, respectively. Then, $P^\Phi_{\rho_1}(0)+P^\Phi_{\rho_1}(1)=P^\Phi_{\rho_2}(0)+P^\Phi_{\rho_2}(1)=1$, and
\be
\ba
\|P^\Phi_{\rho_1}-P^\Phi_{\rho_2}\|_{tvd}&=\frac12\left(|P^\Phi_{\rho_1}(0)-P^\Phi_{\rho_2}(0)|+|P^\Phi_{\rho_1}(1)-P^\Phi_{\rho_2}(1)|\right)\\
&=|P^\Phi_{\rho_1}(0)-P^\Phi_{\rho_2}(0)|.
\ea
\ee
Hence,
\be
\ba
\left|F(\Phi,\rho_1)-F(\Phi,\rho_2)\right|&=\left|\braket{\Phi|\rho_1|\Phi}-\braket{\Phi|\rho_2|\Phi}\right|\\
&=|P^\Phi_{\rho_1}(0)-P^\Phi_{\rho_2}(0)|\\
&=\|P^\Phi_{\rho_1}-P^\Phi_{\rho_2}\|_{tvd}\\
&\le D(\rho_1,\rho_2)\\
&\le\sqrt{1-F(\rho_1,\rho_2)}\\
&\le\sqrt\beta,
\ea
\ee
where we used Eqs.~(\ref{td1}) and (\ref{td2}).

\end{proof}
\end{leftbar}

\noindent We also introduce the following simple results :

\begin{lem} Let $\eta>0$ and $a,b\in[0,1]$ such that $|a-b|\le\eta$. Then for all $m\ge1$,
\be
\left|a^m-b^m\right|\le m|a-b|\le m\eta.
\ee
\label{lem:simplem}
\end{lem}

\begin{leftbar}
\begin{proof}
With the notations of the lemma,
\be
\ba
\left|a^m-b^m\right|&=|a-b|\left|\sum_{j=0}^{m-1}{a^jb^{m-j-1}}\right|\\
&\le m|a-b|\\
&\le m\eta.
\ea
\ee
\end{proof}
\end{leftbar}

\noindent Let $E\in\mathbb N$, we write $\mathcal H$ the single-mode infinite-dimensional separable Hilbert space with the countable Fock basis $\{\ket n\}_{n\in\mathbb N}$, and $\bar{\mathcal H}=\text{span}(\ket0,\dots\ket E)$ the Hilbert space of states with bounded support over the Fock basis of size at most $E+1$, which has dimension $E+1$. 

We denote by $\underset{\alpha\leftarrow D}{\mathbb E}[f(\alpha)]$ the expected value of a function $f$ for samples drawn from a distribution $D$. \\

\noindent Let us introduce for $k,l\ge0$ the polynomials
\be
\ba
\mathcal{L}_{k,l}(z)&:=e^{zz^*}\frac{(-1)^{k+l}}{\sqrt{k!}\sqrt{l!}}\frac{\partial^{k+l}}{\partial z^k\partial z^{*l}}e^{-zz^*}\\
&=\smashoperator{\sum_{p=0}^{\min{(k,l)}}}{\frac{\sqrt{k!}\sqrt{l!}(-1)^p}{p!(k-p)!(l-p)!}z^{l-p}z^{*k-p}},
\ea
\label{2DL}
\ee
for all $z\in\mathbb C$, which are, up to a normalisation, the Laguerre $2$D polynomials, appearing in particular in the expressions of Wigner function of Fock states~\cite{wunsche1998laguerre}. 
For any operator $A=\sum_{k,l=0}^{+\infty}{A_{kl}\ket k\bra l}$ and all $E\in\mathbb N$, we define from these polynomials the function
\be
f_A(z,\eta):=\frac1\eta e^{\left(1-\frac{1}{\eta}\right)zz^*}\sum_{k,l=0}^{E}{\frac{A_{kl}}{\sqrt{\eta^{k+l}}}\text{ }\mathcal{L}_{k,l}\left(\frac z{\sqrt{\eta}}\right)},
\label{f}
\ee
for all $z\in\mathbb C$, and all $0<\eta<1$. We omit the dependencies in $E$. The function $z\mapsto f_A(z,\eta)$,
being a polynomial multiplied by a converging Gaussian function, 
is bounded over $\mathbb C$. 
With the same notations, we also define the following constant:
\be
K_A:=\sum_{k,l=0}^E{|A_{kl}|\sqrt{(k+1)(l+1)}}.
\label{K}
\ee
%


\section{Proof of Theorem~\ref{thmain} and Corollary~\ref{corofidelity}}
\label{app:proofmaingen}

\noindent The function $f_A$ defined in Eq.~(\ref{f}) is a bounded approximation of the Glauber-Sudarshan function $P_A$ of the operator $A$. This approximation is parametrised by a precision $\eta$, and a cutoff value $E$. The optical equivalence theorem for antinormal ordering~\cite{cahill1969density} reads
\be
\Tr(A\rho)=\int{Q_\rho(\alpha)P_A(\alpha)d^2\alpha}.
\ee
Given that
\be
\underset{\alpha\leftarrow Q_{\mathrlap\rho}}{\mathbb E}[f_A(\alpha,\eta)]=\int{Q_\rho(\alpha)f_A(\alpha,\eta)d^2\alpha},
\ee
we can expect that $\underset{\alpha\leftarrow Q_{\mathrlap{\rho}}}{\mathbb E}[f_A(\alpha,\eta)]$ is an approximation of $\Tr(A\rho)$ parametrised by $\eta$ and $E$. Theorem~\ref{thmain} makes this statement more precise, and we prove it in the following.

\subsection{Proof of Theorem~\ref{thmain}}
\label{app:proofmain}

\noindent We recall Theorem~\ref{thmain} from the main text:

\begin{theo} 
Let $E\in\mathbb N$ and let \mbox{$0<\eta<\frac2{E}$}. Let also $A=\sum_{k,l=0}^{+\infty}{A_{kl}\ket k\bra l}$ be an operator and let \mbox{$\rho=\sum_{k,l=0}^{E}{\rho_{kl}\ket k\bra l}$} be a density operator with bounded support.
Then,
\be
\left|\Tr\left(A\rho\right)-\underset{\alpha\leftarrow Q_{\mathrlap\rho}}{\mathbb E}[f_A(\alpha,\eta)]\right|\le\eta K_A,
\ee
where $f_A$ is defined in Eq.~(\ref{f}) and $K_A$ is defined in Eq.~(\ref{K}).
\end{theo}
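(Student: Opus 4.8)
The plan is to recognise the estimator as the trace of $\rho$ against a regularised operator, and then to control the difference operator-by-operator in the Fock basis. Writing $Q_\rho(\alpha)=\frac1\pi\Tr(\ket\alpha\!\bra\alpha\rho)$ and exchanging the integral with the trace, I would set $\tilde A_\eta=\frac1\pi\int \ket\alpha\!\bra\alpha\,f_A(\alpha,\eta)\,d^2\alpha$, so that $\mathbb E_{\alpha\leftarrow Q_\rho}[f_A(\alpha,\eta)]=\Tr(\tilde A_\eta\rho)$. Because $f_A$ is linear in the coefficients $A_{kl}$ (Eq.~(\ref{f})), it suffices to treat $A=\ket k\!\bra l$ and then recombine: since $\Tr(A\rho)=\sum_{k,l}A_{kl}\langle l|\rho|k\rangle$, the triangle inequality gives $|\Tr(A\rho)-\Tr(\tilde A_\eta\rho)|\le\sum_{k,l=0}^E|A_{kl}|\,|\langle l|\rho|k\rangle-\Tr(\tilde B^{(k,l)}_\eta\rho)|$, where $\tilde B^{(k,l)}_\eta$ is the regularisation of $\ket k\!\bra l$. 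Comparing with Eq.~(\ref{K}), it is then enough to establish the single-operator bound $|\langle l|\rho|k\rangle-\Tr(\tilde B^{(k,l)}_\eta\rho)|\le\eta\sqrt{(k+1)(l+1)}$.

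Next I would compute the Fock matrix elements of $\tilde B^{(k,l)}_\eta$. Using $\langle n|\alpha\rangle\langle\alpha|m\rangle=e^{-|\alpha|^2}\alpha^n(\alpha^*)^m/\sqrt{n!m!}$ and the substitution $\beta=\alpha/\sqrt\eta$, the $(n,m)$ entry reduces to $\eta^{[(n+m)-(k+l)]/2}\,I_{klnm}/\sqrt{n!m!}$ with the Laguerre integral $I_{klnm}=\frac1\pi\int e^{-|\beta|^2}\beta^n(\beta^*)^m\mathcal L_{k,l}(\beta)\,d^2\beta$. Expanding $\mathcal L_{k,l}$ as in Eq.~(\ref{2DL}) and using the Gaussian monomial integral $\frac1\pi\int e^{-|\beta|^2}\beta^a(\beta^*)^b\,d^2\beta=\delta_{ab}\,a!$, the angular selection rule forces $I_{klnm}=0$ unless $n-m=k-l$; on that diagonal $I_{klnm}$ is a finite alternating sum that I would evaluate as an associated-Laguerre value. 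The three facts I need from this evaluation are: (i) $I_{klnm}=0$ whenever $n+m<k+l$, so that no diverging negative power of $\eta$ survives; (ii) the term $(n,m)=(k,l)$ gives $I_{klkl}=\sqrt{k!l!}$, i.e.\ the $\eta^0$ contribution reproduces $\langle l|\rho|k\rangle$ exactly; and (iii) the first surviving correction $(n,m)=(k+1,l+1)$ carries power $\eta^1$ and coefficient exactly $\sqrt{(k+1)(l+1)}$.

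With these in hand the single-operator difference becomes $\langle l|\rho|k\rangle-\Tr(\tilde B^{(k,l)}_\eta\rho)=-\sum\eta^{[(n+m)-(k+l)]/2}\,(I_{klnm}/\sqrt{n!m!})\,\langle m|\rho|n\rangle$, the sum running over $n-m=k-l$, $n+m>k+l$, $0\le n,m\le E$. I would bound the density-matrix entries by $|\langle m|\rho|n\rangle|\le\sqrt{\rho_{nn}\rho_{mm}}\le\frac12(\rho_{nn}+\rho_{mm})$, factor out one power of $\eta$, and use the constraint $\eta<\frac2E$ to guarantee that the coefficients $\eta^{[(n+m)-(k+l)]/2-1}\,|I_{klnm}|/\sqrt{n!m!}$ are decreasing in $n+m$, hence dominated by their leading value $\sqrt{(k+1)(l+1)}$. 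Finally, because each diagonal weight $\rho_{nn}$ appears only a bounded number of times and $\sum_n\rho_{nn}=\Tr\rho=1$, the AM--GM sum collapses to at most $\eta\sqrt{(k+1)(l+1)}$, which recombines via the triangle inequality to the claimed $\eta K_A$.

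The main obstacle is step (iii) together with the final summation: the explicit evaluation of the alternating Laguerre sums $I_{klnm}$, and above all the requirement that the series sum to \emph{exactly} $\sqrt{(k+1)(l+1)}$ rather than some looser constant. This is where the hypothesis $\eta<\frac2E$ does the real work, taming the tail of the $\eta$-expansion so that the leading correction dominates; the density-matrix constraints ($\Tr\rho=1$ and positivity through Cauchy--Schwarz) are what convert the per-entry estimates into a single clean factor independent of $\rho$.
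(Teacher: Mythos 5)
Your proposal follows essentially the same route as the paper's proof: reduction by linearity to $A=\ket k\!\bra l$, an explicit Fock-basis computation of the regularised estimator with the angular selection rule (your facts (i)--(iii) are precisely the paper's intermediate lemma, proved there via a Gould-type alternating binomial identity, which yields the exact coefficients $\eta^s\sqrt{\tbinom{s+k}{k}\tbinom{s+l}{l}}$ for the surviving terms), domination of the tail coefficients by the leading term $\eta\sqrt{(k+1)(l+1)}$ via an induction that uses $\eta<2/E$, and positivity of $\rho$ together with $\Tr\rho=1$ to collapse the sum. The only cosmetic differences are your operator-valued framing $\tilde A_\eta$ (the paper computes $\mathbb E_{\alpha\leftarrow Q_\rho}[f_A(\alpha,\eta)]$ directly, which is the same calculation after exchanging trace and integral) and your use of AM--GM where the paper applies Cauchy--Schwarz to the final sum; both give the identical bound.
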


\begin{proof}

With Eq.~(\ref{f}) we obtain
\be
\ba
\left|\Tr\left(A\rho\right)-\underset{\alpha\leftarrow Q_{\mathrlap{\rho}}}{\mathbb E}[f_A(\alpha,\eta)]\right|&=\left|\sum_{k,l=\mathrlap 0}^{+\infty}{A_{lk}\Tr\left(\ket l\bra k\rho\right)}-\smashoperator{\sum_{k,l=0}^E}{A_{lk}\underset{\alpha\leftarrow Q_{\mathrlap{\rho}}}{\mathbb E}[f_{\ket l\bra k}(\alpha,\eta)]}\right|\\
&=\left|\sum_{k,l=\mathrlap 0}^{E}{A_{lk}\left(\Tr\left(\ket l\bra k\rho\right)-\underset{\alpha\leftarrow Q_{\mathrlap{\rho}}}{\mathbb E}[f_{\ket l\bra k}(\alpha,\eta)]\right)}\right|\\
&\le\smashoperator{\sum_{k,l=0}^E}{\left|A_{lk}\right|\left|\Tr\left(\ket l\bra k\rho\right)-\underset{\alpha\leftarrow Q_{\mathrlap{\rho}}}{\mathbb E}[f_{\ket l\bra k}(\alpha,\eta)]\right|},
\ea
\label{app:triang}
\ee
where we used in the second line the fact that $\rho$ has a bounded support over the Fock basis. This shows that it is sufficient to prove the theorem for $A=\ket l\bra k$, for all $k,l$ from $0$ to $E$. We first introduce the following result:

\begin{lem} For all $0\le k,l\le E$,
\be
\underset{\alpha\leftarrow Q_{\mathrlap\rho}}{\mathbb E}[f_{\ket l\bra k}(\alpha,\eta)]=\rho_{kl}+\smashoperator{\sum_{\substack{m>k,n>l\\m-n=k-l}}^E}{\rho_{mn}\eta^{\frac{m+n-k-l}2}\sqrt{\binom mk\binom nl}}.
\ee
\label{lem:Eflk}
\end{lem}

\begin{leftbar}
\begin{proof}

Let us fix $k,l$ in $0,\dots,E$. By Eqs.~(\ref{2DL}) and (\ref{f}) we have, for all $z\in\mathbb C$,
\be
\ba
f_{\ket l\bra k}(z)&=\left(\frac{1}{\eta}\right)^{1+\frac{k+l}2}e^{\left(1-\frac{1}{\eta}\right)zz^*}
  \mathop{\mathcal{L}_{l,k}}\left(\frac{z}{\sqrt{\eta}}\right)\\
&=\left(\frac{1}{\eta}\right)^{1+\frac{k+l}2}e^{zz^*}\frac{(-1)^{\mathrlap{k+l}}}{\sqrt{k!}\sqrt{l!}}\left.\frac{\partial^{\mathrlap{k+l}}}{\partial u^{*k}\partial u^l}e^{-uu^*}\right\vert_{u=\frac{z}{\sqrt{\eta}}}\\
&=\frac{1}{\eta}e^{\left(1-\frac{1}{\eta}\right)zz^*}
   \sum_{p=0}^{\min{(\mathrlap{k,l)}}}{\frac{(-1)^p\sqrt{k!}\sqrt{l!}}{p!(k-p)!(l-p)!}
       \left(\frac1\eta\right)^{k+l-\mathrlap{p}}z^{k-p}z^{*l-p}}.
\ea
\ee
Moreover, for all $\alpha\in\mathbb C$,
\be
\ba
Q_\rho(\alpha)&=\frac1\pi\braket{\alpha|\rho|\alpha}\\
&=\frac1\pi\sum_{m,n=0}^E{\rho_{mn}\braket{\alpha|m}\braket{n|\alpha}}\\
&=\frac1\pi\sum_{m,n=0}^E{\rho_{mn}\frac{\alpha^{*m}\alpha^n}{\sqrt{m!n!}}}e^{-|\alpha|^2}.
\ea
\ee
Combining these expressions we obtain
\be
\ba
\underset{\alpha\leftarrow Q_{\mathrlap\rho}}{\mathbb E}[f_{\ket l\bra k}(\alpha,\eta)]
  &=\int{Q_\rho(\alpha)f_{\ket l\bra k}(\alpha,\eta)d^2\alpha}\\
  &=\frac{1}{\pi\eta}\sum_{m,n\mathrlap{=0}}^E\rho_{mn}\frac{\sqrt{k!}\sqrt{l!}}{\sqrt{m!}\sqrt{n!}}
    \sum_{p=0}^{\min{(\mathrlap{k,l)}}}\frac{(-1)^p}{p!(k-p)!(l-p)!}
    \left(\frac1\eta\right)^{k+l-\mathrlap{p}}\\
    &\quad\quad\quad\quad\quad\times\int{\alpha^{k+n-p}\alpha^{*(l+m-p)}e^{-\frac1\eta|\alpha|^2}d^2\alpha}.
\ea
\ee
Setting $\alpha=re^{i\theta}$, we have $d^2\alpha=rdrd\theta$ and the integral on the last line may be computed as
\be
\ba
\int{\alpha^{k+n-p}\alpha^{*(l+m-p)}e^{-\frac1\eta|\alpha|^2}d^2\alpha}
  &=\smashoperator{\int_0^{+\infty}}{r^{k+l+m+n-2p+1}e^{-\frac{r^2}\eta}dr}
    \smashoperator{\int_0^{2\pi}}{e^{i(k+n-l-m)\theta}d\theta}\\
&=\begin{cases} \pi\left(\frac{k+l+m+n}{2}-p\right)!\eta^{\frac{k+l+m+n}{2}-p+1}&\text{ for }k-l=m-n,\\&\\0&\text{ for }k-l\neq m-n,\end{cases}
\ea
\ee
where we used $\int_0^{+\infty}{r^{2t+1}e^{-\frac{r^2}\eta}}=\frac12t!\eta^{t+1}$ for $t=\frac{k+l+m+n}{2}-p$, which is obtained directly by induction and integration by parts (note that for $k-l=m-n$, and $p\le\min(k,l)$, we have indeed $t\in\mathbb N$). Hence,
\be
\ba
\underset{\alpha\leftarrow Q_{\mathrlap\rho}}{\mathbb E}[f_{\ket l\bra k}(\alpha,\eta)]
  &=\smashoperator{\sum_{\substack{m,n=0\\m-n=k-l}}^E}
      {\rho_{mn}\frac{\sqrt{k!}\sqrt{l!}}{\sqrt{m!}\sqrt{n!}}
      \sum_{p=0}^{\min{(k,l\mathrlap{)}}}{\frac{(-1)^p\left(\frac{k+l+m+n}{2}-p\right)!}{p!(k-p)!(l-p)!}\eta^{\frac{m+n-k-l}2}}}\\
  &=\smashoperator{\sum_{\substack{m,n=0\\m-n=k-l}}^E}{\rho_{mn}\eta^{\frac{m+n-k-l}2}\frac{\left(\frac{k+l+m+n}{2}\right)!}{\sqrt{m!}\sqrt{n!}\sqrt{k!}\sqrt{l!}}
  \sum_{p=0}^{\min(\mathrlap{k,l)}}{(-1)^p\frac{\binom kp\binom lp}{\binom{\frac{k+l+m+n}{2}}p}}}.
\ea
\label{interE}
\ee
Now for $k\le l$ we have, for all $q\in\mathbb N$ (see, e.g., result 7.1 of~\cite{gould1972combinatorial}),
\be
\sum_{p=0}^k{(-1)^p\frac{\binom kp\binom lp}{\binom qp}}=\begin{cases}\frac{\binom{q-l}{k}}{\binom{q}{k}}&\text{ for }q\ge k+l,\\&\\0&\text{ for }q<k+l.\end{cases}
\ee
When $k\le l$, Eq~(\ref{interE}) thus yields
\be
\ba
\underset{\alpha\leftarrow Q_{\mathrlap\rho}}{\mathbb E}[f_{\ket l\bra k}(\alpha,\eta)]
  &=\smashoperator{\sum_{\substack{m,n=0\\m-n=k-l\\m+n\ge k+l}}^E}
   {\rho_{mn}\eta^{\frac{m+n-k-l}2}\frac{\left(\frac{k+l+m+n}{2}\right)!}{\sqrt{m!}\sqrt{n!}\sqrt{k!}\sqrt{l!}}\frac{\binom{\frac{k+l+m+n}{2}-l}{k}}{\binom{\frac{k+l+m+n}{2}}{k}}}\\
  &=\smashoperator{\sum_{\substack{m\ge k,n\ge l\\m-n=k-l}}^E}
   {\rho_{mn}\eta^{\frac{m+n-k-l}2}\frac1{\sqrt{m!}\sqrt{n!}\sqrt{k!}\sqrt{l!}}\frac{\left(\frac{k-l+m+n}{2}\right)!\left(\frac{-k+l+m+n}{2}\right)!}{\left(\frac{-k-l+m+n}{2}\right)!}}\\
  &=\smashoperator{\sum_{\substack{m\ge k,n\ge l\\m-n=k-l}}^E}
    {\rho_{mn}\eta^{\frac{m+n-k-l}2}\frac{\sqrt{m!}\sqrt{n!}}{\sqrt{k!}\sqrt{l!}\sqrt{(m-k)!}\sqrt{(n-l)!}}}\\
  &=\smashoperator{\sum_{\substack{m\ge k,n\ge l\\m-n=k-l}}^E}
    {\rho_{mn}\eta^{\frac{m+n-k-l}2}\sqrt{\binom mk\binom nl}},
\ea
\ee
where we used that within the summation $m-n=k-l$. This formula is also valid for $l\le k$, with the same reasoning. We finally obtain, for any $k,l$ in $0,\dots,E$
\be
\ba
\underset{\alpha\leftarrow Q_{\mathrlap\rho}}{\mathbb E}[f_{\ket l\bra k}(\alpha,\eta)]
  &=\smashoperator{\sum_{\substack{m\ge k,n\ge l\\m-n=k-l}}^E}{\rho_{mn}\eta^{\frac{m+n-k-l}2}\sqrt{\binom mk\binom nl}}\\
&=\rho_{kl}
  +\smashoperator{\sum_{\substack{m>k,n>l\\m-n=k-l}}^E}{\rho_{mn}\eta^{\frac{m+n-k-l}2}\sqrt{\binom mk\binom nl}}.
\ea
\ee

\end{proof}
\end{leftbar}

\noindent Using Lemma~\ref{lem:Eflk}, we obtain
\be
\ba
\left|\Tr(\ket l\bra k\rho)-\underset{\alpha\leftarrow Q_{\mathrlap\rho}}{\mathbb E}
  [f_{\ket l\bra k}(\alpha,\eta)]\right|
&=\left|\rho_{kl}-\underset{\alpha\leftarrow Q_{\mathrlap\rho}}{\mathbb E}[f_{\ket l\bra k}(\alpha,\eta)]\right|\\
&=\left|\phantom{{}_{m.}}\smashoperator{\sum_{\substack{m>k,n>l\\m-n=k-l}}^E}
  {\rho_{mn}\eta^{\frac{m+n-k-l}2}\sqrt{\binom mk\binom nl}}\right|\\
&\le\smashoperator{\sum_{\substack{m>k,n>l\\m-n=k-l}}^E}{|\rho_{mn}|\eta^{\frac{m+n-k-l}2}\sqrt{\binom mk\binom nl}}\\
&= \sum_{s=1}^{\mathllap E-\mathrlap{\max{(k,l)}}}
  {|\rho_{s+k,s+l}|\eta^s\sqrt{\binom{s+k}k\binom{s+l}l}}\\
&\le \sum_{s=1}^{\mathllap E-\max\mathrlap{(k,l)}}
  {\eta^s\sqrt{\binom{s+k}k\binom{s+l}l}\sqrt{\rho_{s+k,s+k}}\sqrt{\rho_{s+l,s+l}}},
\ea
\label{interE2}
\ee
where we set $s=m-k=n-l=\frac{m+n-k-l}2$ in the third line, and where we used $|\rho_{s+k,s+l}|\le\sqrt{\rho_{s+k,s+k}}\sqrt{\rho_{s+l,s+l}}$ in the last line, since $\rho$ is a positive semidefinite matrix.
In order to obtain an upper bound independent of $\rho$, we now show for all $s$ that $\eta^s\sqrt{\binom{s+k}k\binom{s+l}l}\le\eta\sqrt{(k+1)(l+1)}$ for $\eta\le\frac2{E}$.
For all $k,l$ in $0,\dots,E$ and for all $s$ in $2,\dots,E-\max({k,l})$, we have
\be
\frac{\sqrt{s+k}\sqrt{s+l}}{s}\le\frac E2.
\ee
This in turn implies that for all $s$ in $2,\dots,E-\max({k,l})$
\be
\ba
\eta^s\sqrt{\binom{s+k}k\binom{s+l}l}&=\eta\frac{\sqrt{(s+k)(s+l)}}s\eta^{s-1}\sqrt{\binom{s-1+k}k\binom{s-1+l}l}\\
&\le\frac{\eta E}2\eta^{s-1}\sqrt{\binom{s-1+k}k\binom{s-1+l}l}\\
&\le\eta^{s-1}\sqrt{\binom{s-1+k}k\binom{s-1+l}l},
\ea
\ee
since we assumed $\eta\le\frac2{E}$. Hence by induction, for all $s$ in $2,\dots,E-\max({k,l})$,
\be
\eta^s\sqrt{\binom{s+k}k\binom{s+l}l}\le\eta^1\sqrt{\binom{1+k}k\binom{1+l}l}=\eta\sqrt{(k+1)(l+1)}.
\ee
Combining this with Eq.~(\ref{interE2}) yields
\be
\ba
\left|\Tr(\ket l\bra k\rho)-\underset{\alpha\leftarrow Q_{\mathrlap\rho}}{\mathbb E}[f_{\ket l\bra k}(\alpha,\eta)]\right|
&\le\eta\sqrt{(k+1)(l+1)}
 \sum_{s=1}^{\mathllap E-\mathrlap{\max(k,l)}}
 {\sqrt{\rho_{s+k,s+k}}\sqrt{\rho_{s+l,s+l}}}\\
&\le\eta\sqrt{(k+1)(l+1)}\sqrt{\sum_{s=1}^{\mathllap E-\mathrlap{\max(k,l)}}
 {\rho_{s+k,s+k}}\sum_{s=1}^{\mathllap E-\mathrlap{\max(k,l)}}{\rho_{s+l,s+l}}}\\ 
&\le\eta\sqrt{(k+1)(l+1)},
\ea
\ee
for all $k,l$ in $0,\dots,E$, where we used Cauchy-Schwarz inequality and the fact that $\Tr(\rho)=1$. Together with Eq.~(\ref{app:triang}) we obtain
\be
\ba
\left|\Tr\left(A\rho\right)-\underset{\alpha\leftarrow Q_{\mathrlap\rho}}{\mathbb E}
 [f_A(\alpha,\eta)]\right|
&\le\eta\smashoperator{\sum_{k,l=0}^E}{\left|A_{kl}\right|\sqrt{(k+1)(l+1)}}\\
&=\eta K_A,
\ea
\ee
by Eq.~(\ref{K}).

\end{proof}

\subsection{Proof of Corollary~\ref{corofidelity}}
\label{app:proofcoro}

\noindent We recall Corollary~\ref{corofidelity} from the main text:

\begin{coro} 
Let $E\in\mathbb N$ and let \mbox{$0<\eta<\frac2{E}$}. Let also $\ket\Psi\bra\Psi=\sum_{k,l=0}^{+\infty}{\Psi_k\Psi_l^*\ket k\bra l}$ be a normalised pure state and let \mbox{$\rho=\sum_{k,l=0}^{E}{\rho_{kl}\ket k\bra l}$} be a density operator with bounded support.
Then,
\be
\ba
\left|F\left(\Psi,\rho\right)-\underset{\alpha\leftarrow Q_{\mathrlap\rho}}{\mathbb E}[f_\Psi(\alpha,\eta)]\right|&\le\eta K_\Psi\le\frac\eta2(E+1)(E+2),
\ea
\label{co1app}
\ee
where the function $f$ is defined in Eq.~(\ref{f}) and the constant $K$ in Eq.~(\ref{K}).
\end{coro}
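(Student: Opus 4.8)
The plan is to recognise this statement as a direct specialisation of Theorem~\ref{thmain} followed by an elementary bound on the constant $K_\Psi$. First I would set $A=\ket\Psi\bra\Psi$ and observe that, since one of the two states is pure, Eq.~(\ref{fidepure}) gives $F(\Psi,\rho)=\braket{\Psi|\rho|\Psi}=\Tr(\ket\Psi\bra\Psi\,\rho)=\Tr(A\rho)$. With the matrix elements $A_{kl}=\psi_k\psi_l^*$, Theorem~\ref{thmain} applies verbatim---the hypotheses $0<\eta<2/E$ and the bounded support of $\rho$ are exactly those of the corollary---and yields
\[
\left|F(\Psi,\rho)-\underset{\alpha\leftarrow Q_{\mathrlap\rho}}{\mathbb E}[f_\Psi(\alpha,\eta)]\right|\le\eta K_\Psi,
\]
which is the first inequality in Eq.~(\ref{co1app}).

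It then remains to bound $K_\Psi$ independently of $\ket\Psi$. Since $|A_{kl}|=|\psi_k||\psi_l|$ for $A=\ket\Psi\bra\Psi$, the definition Eq.~(\ref{K}) factorises as
\[
K_\Psi=\sum_{k,l=0}^E|\psi_k||\psi_l|\sqrt{(k+1)(l+1)}=\left(\sum_{k=0}^E|\psi_k|\sqrt{k+1}\right)^{\!2}.
\]
I would then apply the Cauchy--Schwarz inequality to the inner single sum, using the normalisation $\sum_{k=0}^E|\psi_k|^2\le\sum_{k=0}^{+\infty}|\psi_k|^2=1$ together with the arithmetic series $\sum_{k=0}^E(k+1)=\tfrac{(E+1)(E+2)}2$. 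This gives $\sum_{k=0}^E|\psi_k|\sqrt{k+1}\le\sqrt{(E+1)(E+2)/2}$, and hence $K_\Psi\le\tfrac12(E+1)(E+2)$. Multiplying through by $\eta$ produces the second inequality of Eq.~(\ref{co1app}).

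There is no serious obstacle here, as the statement is a corollary and every step is elementary. The only point requiring a little care is the factorisation of $K_\Psi$ for a rank-one operator, which is precisely what allows Cauchy--Schwarz to be applied to a single sum rather than to the double sum; once that is observed, the arithmetic series evaluation and the normalisation of $\ket\Psi$ close the argument immediately. I would also note in passing that the first bound $\eta K_\Psi$ is typically much smaller than $\tfrac\eta2(E+1)(E+2)$ for structured target states, which is why the corollary records both estimates.
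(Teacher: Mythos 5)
Your proposal is correct and follows exactly the paper's own argument: apply Theorem~\ref{thmain} with $A=\ket\Psi\!\bra\Psi$, use $F(\Psi,\rho)=\braket{\Psi|\rho|\Psi}$ for a pure target, factorise $K_\Psi$ as $\left(\sum_{n=0}^E|\psi_n|\sqrt{n+1}\right)^2$, and close with Cauchy--Schwarz, the normalisation $\sum_n|\psi_n|^2\le1$, and the arithmetic series $\sum_{n=0}^E(n+1)=\tfrac12(E+1)(E+2)$. No gaps; nothing further is needed.
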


\begin{proof}

In order to prove Corollary~\ref{corofidelity}, we apply Theorem~\ref{thmain} for $A=\ket\Psi\bra\Psi$ a pure state. We obtain
\be
\ba
\left|\braket{\Psi|\rho|\Psi}-\underset{\alpha\leftarrow Q_{\mathrlap\rho}}{\mathbb E}[f_{\Psi}(\alpha,\eta)]\right|&\le\eta K_\Psi\\
&=\eta\smashoperator{\sum_{k,l=0}^E}{|\psi_k\psi_l|\sqrt{(k+1)(l+1)}}\\
&=\eta\left(\smashoperator{\sum_{n=0}^E}{|\psi_n|\sqrt{n+1}}\right)^2\\
&\le\eta\sum_{n=0}^{E}{|\psi_n|^2}\sum_{n=0}^E{(n+1)}\\
&\le\frac\eta2(E+1)(E+2),
\ea
\ee
where we used Cauchy-Schwarz inequality, and $\sum_{n=0}^{E}{|\psi_n|^2}\le\Tr(\ket\Psi\bra\Psi)=1$. 
Since $\ket\Psi$ is a pure state, we have $F(\Psi,\rho)=\braket{\Psi|\rho|\Psi}$, which concludes the proof.

\end{proof}


\section{Proof of Theorem~\ref{thQST}}
\label{app:proofQST}

\noindent Let $\epsilon,\epsilon'>0$, let $n\ge1$, and let $\bm{\alpha}=\alpha_1,\dots,\alpha_n$ be samples obtained by measuring with heterodyne detection $n$ copies of a state $\rho=\sum_{k,l=0}^E{\rho_{kl}\ket k\bra l}$ with bounded support, for $E\in\mathbb N$. Let us define
\be
\rho^\epsilon_{kl}=\frac{1}{n}\sum_{i=1}^n{f_{\ket l\bra k}\left(\alpha_i,\frac\epsilon{K_{\ket l\bra k}}\right)},
\label{Fklepsapp}
\ee
where the function $f$ and the constant $K$ are defined in Eqs.~(\ref{f}) and (\ref{K}).
In this section, we prove the following result:

\begin{theo} For all $0\le k, l\le E$,
\be
\left|\rho_{kl}-\rho^\epsilon_{kl}\right|\le\epsilon+\epsilon',
\label{QSTapp}
\ee
with probability greater than $1-4\smashoperator{\sum_{0\le k \le l \le E}}\exp\left[{-\frac{N\epsilon^{2+k+l}\epsilon'^2}{4C_{kl}}}\right]$, where
\be
C_{kl}=\left[(k+1)(l+1)\right]^{1+\frac{k+l}2}2^{|l-k|}\binom{\max{(k,l)}}{\min{(k,l)}}.
\ee
\end{theo}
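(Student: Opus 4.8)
The plan is to combine the exact expectation formula from Theorem~\ref{thmain} with a pointwise bound on the summand function $f_{\ket l\bra k}$, and then invoke the Hoeffding inequality once per matrix element. First I would set $\eta=\epsilon/K_{\ket l\bra k}$ in Eq.~(\ref{Fklepsapp}), so that $A=\ket l\bra k$ gives $K_A=K_{\ket l\bra k}=\sqrt{(k+1)(l+1)}$ and the constraint $0<\eta<\frac2E$ becomes a mild condition satisfied for small enough $\epsilon$. With this choice, Theorem~\ref{thmain} yields
\be
\left|\rho_{kl}-\underset{\alpha\leftarrow Q_{\mathrlap\rho}}{\mathbb E}[f_{\ket l\bra k}(\alpha,\eta)]\right|\le\eta K_{\ket l\bra k}=\epsilon,
\ee
which controls the \emph{bias} of the estimator. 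The remaining task is to control the statistical fluctuation of the empirical average $\rho^\epsilon_{kl}$ around its expectation by at most $\epsilon'$ with high probability.

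For the fluctuation I would apply the Hoeffding lemma stated in the excerpt to the bounded function $f=f_{\ket l\bra k}(\cdot,\eta)$ with $\lambda=\epsilon'$, which requires an explicit uniform bound $M=\sup_z|f_{\ket l\bra k}(z,\eta)|$. The main work is therefore to compute $M$ from the closed form in Eq.~(\ref{f}): the function is $\frac1\eta\,e^{(1-1/\eta)zz^*}$ times a single Laguerre polynomial $\mathcal L_{l,k}(z/\sqrt\eta)$, so I would bound the polynomial in terms of $|z|$ using the explicit sum in Eq.~(\ref{2DL}), absorb the resulting powers of $|z|$ against the decaying Gaussian $e^{-zz^*/\eta}$ by maximising each monomial $|z|^{t}e^{-|z|^2/\eta}$, and collect the binomial factors $2^{|l-k|}\binom{\max(k,l)}{\min(k,l)}$. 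Tracking the $\eta$-dependence through the prefactor $\eta^{-(1+\frac{k+l}2)}$ and substituting $\eta=\epsilon/\sqrt{(k+1)(l+1)}$ should produce $M^2$ proportional to $\epsilon^{-(2+k+l)}C_{kl}$ with $C_{kl}$ as in the statement; this is the step where the specific power $\epsilon^{2+k+l}$ and the constant $C_{kl}$ must match precisely, and it is the main obstacle.

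Once $M^2\le C_{kl}\,\epsilon^{-(2+k+l)}$ is established, the Hoeffding lemma gives, for each fixed $(k,l)$,
\be
\Prb\left[\left|\rho^\epsilon_{kl}-\underset{\alpha\leftarrow Q_{\mathrlap\rho}}{\mathbb E}[f_{\ket l\bra k}(\alpha,\eta)]\right|\ge\epsilon'\right]\le2\exp\left[{-\frac{n\epsilon'^2}{2M^2}}\right]\le2\exp\left[{-\frac{n\epsilon^{2+k+l}\epsilon'^2}{2C_{kl}}}\right].
\ee
Combining this with the bias bound via the triangle inequality gives $|\rho_{kl}-\rho^\epsilon_{kl}|\le\epsilon+\epsilon'$ for that $(k,l)$. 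Finally I would take a union bound over all $0\le k,l\le E$; using Hermiticity $\rho^\epsilon_{lk}=(\rho^\epsilon_{kl})^*$ it suffices to range over $0\le k\le l\le E$, and the factor $4$ in the claimed failure probability accounts for splitting each complex estimate into real and imaginary parts (each handled by Hoeffding), which doubles the count and combines with the off-diagonal symmetry. This yields the stated overall success probability $1-4\sum_{0\le k\le l\le E}\exp[-n\epsilon^{2+k+l}\epsilon'^2/(4C_{kl})]$, where the denominator $4$ (rather than $2$) reflects the real/imaginary splitting that halves the effective $\epsilon'$ per component.
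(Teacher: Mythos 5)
Your proposal is correct and follows essentially the same route as the paper: bias from Theorem~\ref{thmain} with $\eta=\epsilon/K_{\ket l\bra k}$, a uniform bound of the form $M_{kl}\,\eta^{-(1+\frac{k+l}2)}$ on $f_{\ket l\bra k}$ obtained by playing the Laguerre polynomial against the Gaussian, Hoeffding for the statistical fluctuation, and a union bound restricted to $0\le k\le l\le E$ via Hermiticity $\rho^\epsilon_{lk}=(\rho^\epsilon_{kl})^*$. The one caveat is that your displayed inequality applies the real-valued Hoeffding lemma (constants $2$ and $2M^2$) directly to the complex-valued $f_{\ket l\bra k}$, which is not literally valid for $k\ne l$; the fix you describe in your final paragraph---splitting into real and imaginary parts with deviation $\epsilon'/\sqrt2$ each---is exactly what the paper formalises as a separate ``Hoeffding for complex random variables'' lemma, yielding the constants $4$ and $4M^2$ that appear in the theorem.
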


\noindent The law of large numbers ensures that the sample average from independently and identically distributed (i.i.d.\@) random variables converges to the expected value of these random variables, when the number of samples goes to infinity. The following lemma refines this statement and quantifies the speed of convergence:

\begin{lem}\textbf{(Hoeffding)} Let $\lambda>0$, let $n\ge1$, let $z_1,\dots,z_n$ be i.i.d.\@ complex random variables from a probability density $D$ over $\mathbb R$, and let $f:\mathbb C\mapsto\mathbb R$ such that $|f(z)|\le M$, for $M>0$ and all $z\in\mathbb C$. Then
\be
\Prb\left[\left|\frac{1}{n}\sum_{i=1}^n{f(z_i)}- \underset{z\leftarrow D}{\mathbb E}[f(z)]\right|\ge\lambda\right] \leq 2\exp\left[{-\frac{n\lambda^2}{2M^2}}\right].
\ee
\label{lem:HoeffdingR}
\end{lem}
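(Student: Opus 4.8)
The plan is to establish this through the standard Chernoff exponential-moment method, treating the real-valued i.i.d.\ samples $X_i := f(z_i)\in[-M,M]$ with common mean $\mu := \mathbb{E}_{z\leftarrow D}[f(z)]$; note that only the values $f(z_i)$ enter, so the complex nature of the $z_i$ is irrelevant. First I would bound the upper tail. For any $t>0$, Markov's inequality applied to the nonnegative random variable $\exp\!\big(t\sum_{i=1}^n(X_i-\mu)\big)$ gives
\[
\Prb\!\left[\tfrac1n\sum_{i=1}^n X_i-\mu\ge\lambda\right]
=\Prb\!\left[\sum_{i=1}^n(X_i-\mu)\ge n\lambda\right]
\le e^{-tn\lambda}\,\mathbb{E}\!\left[e^{t\sum_{i=1}^n(X_i-\mu)}\right].
\]
Because the $X_i$ are independent, the expectation factorises as $\prod_{i=1}^n\mathbb{E}[e^{t(X_i-\mu)}]$, reducing the problem to controlling the moment generating function of a single centred, bounded variable $X_i-\mu$.

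The key step is Hoeffding's lemma: for any random variable $Y$ with $\mathbb{E}[Y]=0$ and $a\le Y\le b$ almost surely, one has $\mathbb{E}[e^{tY}]\le e^{t^2(b-a)^2/8}$. I would prove this from the convexity estimate $e^{ty}\le\frac{b-y}{b-a}e^{ta}+\frac{y-a}{b-a}e^{tb}$, valid for $y\in[a,b]$, by taking expectations to eliminate the linear term (using $\mathbb{E}[Y]=0$), and then showing that the resulting log-moment-generating function $\psi(t):=\log\!\big(\frac{b}{b-a}e^{ta}-\frac{a}{b-a}e^{tb}\big)$ satisfies $\psi(0)=\psi'(0)=0$ and $\psi''(t)\le(b-a)^2/4$, so that a second-order Taylor expansion yields $\psi(t)\le t^2(b-a)^2/8$. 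This convexity-plus-Taylor argument is the main technical obstacle, since it is where the precise constant $1/8$ (and hence the $1/2$ in the final exponent) originates. Applying the lemma to each $Y=X_i-\mu$, which lies in an interval of length $b-a\le 2M$, gives $\mathbb{E}[e^{t(X_i-\mu)}]\le e^{t^2M^2/2}$, so the product is bounded by $e^{nt^2M^2/2}$.

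Combining these estimates yields $\Prb\!\big[\tfrac1n\sum_i X_i-\mu\ge\lambda\big]\le\exp\!\big(-tn\lambda+nt^2M^2/2\big)$ for every $t>0$. Optimising the exponent over $t$ by setting $t=\lambda/M^2$ gives the one-sided bound $\exp\!\big(-n\lambda^2/(2M^2)\big)$. The lower tail $\Prb\!\big[\tfrac1n\sum_i X_i-\mu\le-\lambda\big]$ is handled identically by applying the same argument to $-X_i$, which also lies in $[-M,M]$. A union bound over the two tails then produces the factor $2$ and the claimed inequality $2\exp\!\big[-n\lambda^2/(2M^2)\big]$.
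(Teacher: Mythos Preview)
Your argument is correct: reducing to the real bounded i.i.d.\ variables $X_i=f(z_i)\in[-M,M]$, applying the Chernoff bound together with Hoeffding's lemma, optimising over $t$, and union-bounding the two tails is exactly the standard derivation of Hoeffding's inequality with the right constant. The paper does not actually prove this lemma; it simply invokes Hoeffding's original result~\cite{hoeffding1963probability} applied to $f(z_1),\dots,f(z_n)$, so your proposal is strictly more detailed than what the paper provides.
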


\noindent This comes directly from Hoeffding inequality~\cite{hoeffding1963probability} applied to the real bounded i.i.d.\@ random variables $f(z_1),\dots,f(z_N)$.
When dealing with complex random variables, we use the following result instead:

\begin{lem}\textbf{(Hoeffding for complex random variables)} Let $\lambda>0$, let $n\ge1$, let $z_1,\dots,z_n$ be i.i.d.\@ complex random variables from a probability density $D$ over $\mathbb C$, and let $f:\mathbb C\mapsto\mathbb C$ such that $|f(z)|\le M$, for $M>0$ and all $z\in\mathbb C$. Then
\be
\Prb\left[\left|\frac{1}{n}\sum_{i=1}^n{f(z_i)}- \underset{z\leftarrow D}{\mathbb E}[f(z)]\right|\ge\lambda\right] \leq 4\exp\left[{-\frac{n\lambda^2}{4M^2}}\right].
\ee
\label{lem:HoeffdingC}
\end{lem}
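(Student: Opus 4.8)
The plan is to reduce the complex statement to the real-valued Hoeffding inequality already recorded as Lemma~\ref{lem:HoeffdingR}, by splitting $f$ into its real and imaginary parts. Writing $f=\re f+i\,\im f$ and using linearity of expectation, the sample-mean deviation decomposes as
\be
\frac1n\sum_{i=1}^n f(z_i)-\underset{z\leftarrow D}{\mathbb E}[f(z)]=X+iY,
\ee
where $X$ is the deviation associated to the real-valued function $\re f$ and $Y$ the one associated to $\im f$. Since $|\re f(z)|\le|f(z)|\le M$ and likewise $|\im f(z)|\le M$ for every $z\in\mathbb C$, both $\re f$ and $\im f$ satisfy the hypotheses of Lemma~\ref{lem:HoeffdingR} with the very same bound $M$, so each of $X,Y$ concentrates at the real rate.

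Next I would pass from the complex modulus to the two real deviations. The key geometric observation is that
\be
\{\,|X+iY|\ge\lambda\,\}\subseteq\{\,|X|\ge\tfrac\lambda{\sqrt2}\,\}\cup\{\,|Y|\ge\tfrac\lambda{\sqrt2}\,\},
\ee
because if both $|X|<\lambda/\sqrt2$ and $|Y|<\lambda/\sqrt2$ held, then $X^2+Y^2<\lambda^2$, i.e. $|X+iY|<\lambda$. A union bound then gives
\be
\Prb\left[|X+iY|\ge\lambda\right]\le\Prb\left[|X|\ge\tfrac\lambda{\sqrt2}\right]+\Prb\left[|Y|\ge\tfrac\lambda{\sqrt2}\right].
\ee

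Finally I would apply Lemma~\ref{lem:HoeffdingR} to each of the two terms with threshold $\lambda/\sqrt2$ and bound $M$, giving $2\exp[-n(\lambda/\sqrt2)^2/(2M^2)]=2\exp[-n\lambda^2/(4M^2)]$ for each, and summing the two contributions yields the claimed bound $4\exp[-n\lambda^2/(4M^2)]$. There is no genuine obstacle in this argument; it is essentially bookkeeping. The only point requiring care is the handling of the $\sqrt2$ factor in the split of the event, since it is precisely this step that degrades the exponent constant from $2$ to $4$ and doubles the prefactor relative to the real-variable statement, thereby producing the two numerical changes that distinguish this lemma from Lemma~\ref{lem:HoeffdingR}.
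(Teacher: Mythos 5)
Your proof is correct and follows essentially the same route as the paper: the paper likewise observes that $|z|\ge a$ forces $|\re(z)|\ge a/\sqrt2$ or $|\im(z)|\ge a/\sqrt2$, applies the real-valued Hoeffding bound (Lemma~\ref{lem:HoeffdingR}) separately to $\re(f)$ and $\im(f)$ with threshold $\lambda/\sqrt2$, and concludes by the union bound. Your bookkeeping of the $\sqrt2$ factor, which turns the constants $(2,2)$ into $(4,4)$, matches the paper exactly.
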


\begin{leftbar}
\begin{proof} For all $a>0$ and all $z\in\mathbb C$, $|z|=\sqrt{\re{(z)}^2+\im{(z)}^2}\ge a$ implies $|\re{(z)}|\ge a/\sqrt2$ or $|\im{(z)}|\ge a/\sqrt2$. Hence,
\be
\Prb\left[|z|\ge a\right]\le\Prb\left[|\re{(z)}|\ge \frac{a}{\sqrt2}\right]+\Prb\left[|\im{(z)}|\ge \frac{a}{\sqrt2}\right],
\ee
so applying twice Lemma~\ref{lem:HoeffdingR} for the real random variables $\re{(f(z))}$ and $\im{(f(z))}$, respectively, yields Lemma~\ref{lem:HoeffdingC}.

\end{proof}
\end{leftbar}

\noindent In order to prove Theorem~\ref{thQST}, we apply Lemma~\ref{lem:HoeffdingC} to the functions $z\mapsto f_{\ket k\bra l}(z,\eta)$ defined in Eq.~(\ref{f}). We first need to bound them:

\begin{lem} For all $k,l\ge0$, define
\be
M_{kl}:=\sqrt{2^{|l-k|}\binom{\max{(k,l)}}{\min{(k,l)}}}.
\label{Mkl}
\ee
Then for all $k,l$ and all $z\in\mathbb C$,
\be
\left|f_{\ket k\bra l}(z,\eta)\right|\le\frac{M_{kl}}{\eta^{1+\frac{k+l}2}}.
\label{maxfkl}
\ee
\label{lem:boundfkl}
\end{lem}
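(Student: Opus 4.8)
The plan is to peel off the $\eta$-dependence into the prefactor and then prove a fixed, $\eta$-controlled bound on a single polynomial times a Gaussian. Specializing Eq.~(\ref{f}) to $A=\ket k\bra l$ kills all terms of the sum except the one indexed by $(k,l)$, giving
\[
f_{\ket k\bra l}(z,\eta)=\frac{1}{\eta^{1+\frac{k+l}2}}\,e^{\left(1-\frac1\eta\right)zz^*}\,\mathcal{L}_{k,l}\!\left(\frac{z}{\sqrt\eta}\right).
\]
Hence the factor $\eta^{-(1+\frac{k+l}2)}$ already matches the claimed right-hand side, and it suffices to show $\big|e^{(1-1/\eta)zz^*}\mathcal{L}_{k,l}(z/\sqrt\eta)\big|\le M_{kl}$. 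Writing $w=z/\sqrt\eta$, the exponent becomes $(1-1/\eta)zz^*=-(1-\eta)|w|^2$, so the task reduces to bounding the single function $e^{-(1-\eta)|w|^2}\,|\mathcal{L}_{k,l}(w)|$ over $w\in\mathbb C$.

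For this I would use the explicit monomial expansion in the second line of Eq.~(\ref{2DL}). Since $\mathcal{L}_{k,l}(w)=\overline{\mathcal{L}_{l,k}(w)}$, the modulus is symmetric under $k\leftrightarrow l$, so I may assume $k\le l$ and rewrite $\mathcal{L}_{k,l}(w)$ as $|w|^{l-k}$ times (a constant multiple of) the associated Laguerre polynomial $L_k^{(l-k)}(|w|^2)$; the triangle inequality then gives $|\mathcal{L}_{k,l}(w)|\le\sqrt{k!/l!}\,|w|^{l-k}\sum_{j=0}^{k}\binom{l}{k-j}|w|^{2j}/j!$. Each resulting monomial in $x=|w|^2$ is controlled against the Gaussian weight through the elementary maximisation $x^{a}e^{-bx}\le(a/(be))^{a}$ for $x\ge0$. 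Summing these term-by-term estimates and simplifying the binomial coefficients yields a constant of exactly the form $\sqrt{2^{|l-k|}\binom{\max(k,l)}{\min(k,l)}}=M_{kl}$: the binomial traces back to the Laguerre coefficient sum (essentially $L_k^{(l-k)}(0)=\binom{l}{k}$), while the power of two comes out of the Gaussian-versus-polynomial balancing.

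The main obstacle is precisely that balancing. The decay rate $1-\eta$ of the Gaussian weakens as $\eta$ grows, while $\mathcal{L}_{k,l}$ has degree $k+l$, so one must trade polynomial growth against Gaussian decay tightly enough to land on the clean, $\eta$-independent constant $M_{kl}$ rather than an expression that degrades; keeping the bookkeeping of the mixed $w,w^\ast$ monomials organised is the delicate part. A conceptually cleaner alternative, which I would at least mention, is to observe that $e^{-|w|^2/2}|\mathcal{L}_{k,l}(w)|=|\braket{m|D(w)|n}|$ with $m=\max(k,l)$ and $n=\min(k,l)$, a matrix element of the unitary displacement operator $D(w)$, and is therefore bounded by $1$; splitting $e^{-(1-\eta)|w|^2}=e^{-|w|^2/2}e^{-(1/2-\eta)|w|^2}$ then gives the estimate immediately in the relevant regime of small $\eta$. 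I would nonetheless carry out the combinatorial route in full, since it produces the explicit constant $M_{kl}$ that feeds directly into the complex Hoeffding step of Theorem~\ref{thQST}, and conclude by reinstating the prefactor to read off $|f_{\ket k\bra l}(z,\eta)|\le M_{kl}/\eta^{1+(k+l)/2}$.
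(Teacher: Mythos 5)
Your opening reduction is fine and matches the paper's first step: specialising Eq.~(\ref{f}) to $A=\ket k\bra l$ gives $f_{\ket k\bra l}(z,\eta)=\eta^{-(1+\frac{k+l}{2})}e^{(1-1/\eta)|z|^2}\mathcal{L}_{k,l}(z/\sqrt\eta)$, so everything hinges on bounding $e^{-(1-\eta)|w|^2}|\mathcal{L}_{k,l}(w)|$ by $M_{kl}$ with $w=z/\sqrt\eta$. But your main combinatorial route has a genuine gap: the triangle inequality on the monomial expansion followed by term-by-term maximisation $x^ae^{-bx}\le(a/(be))^a$ provably cannot produce the constant $M_{kl}$. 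Take $k=l=1$: then $\mathcal{L}_{1,1}(w)=|w|^2-1$, $M_{11}=1$, and the target inequality is \emph{saturated} at $w=0$ (where $|f_{\ket1\bra1}(0,\eta)|=1/\eta^2$). Your procedure replaces $|x-1|$ by $x+1$ and bounds the two terms at their separate maxima, giving $\sup_x xe^{-(1-\eta)x}+\sup_x e^{-(1-\eta)x}=\frac{1}{(1-\eta)e}+1>1=M_{11}$ for every $\eta$, and $\eta$-dependent besides. The loss is structural, not bookkeeping: the triangle inequality throws away the alternating-sign cancellation in the Laguerre polynomial, and a sum of maxima attained at different points overshoots the maximum of the sum; since the claimed bound is tight, no subsequent ``simplification of binomial coefficients'' can recover $M_{kl}$. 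This is precisely why the paper does something different: it keeps $L_n^{(\alpha)}$ intact and uses two \emph{global} bounds growing only like $e^{x/2}$, namely $|L_n^{(\alpha)}(x)|\le\frac{\Gamma(n+\alpha+1)}{n!\Gamma(\alpha+1)}e^{x/2}$ for $\alpha\ge0$ and $|L_n^{(\alpha)}(x)|\le 2^{-\alpha}e^{x/2}$ for $\alpha\le-\frac12$, applies the first on $|z|\le a$ and the second on $|z|\ge a$, and chooses the crossover $a$ so the two resulting bounds coincide; the constant $\sqrt{2^{|l-k|}\binom{\max(k,l)}{\min(k,l)}}$ is exactly that common value.

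Ironically, the argument you relegate to a side remark is the one you should promote to the proof. The identity $e^{-|w|^2/2}|\mathcal{L}_{k,l}(w)|=|\braket{\max(k,l)|D(w)|\min(k,l)}|\le1$ (unitarity of the displacement operator) immediately gives, for $\eta\le\frac12$, $e^{-(1-\eta)|w|^2}|\mathcal{L}_{k,l}(w)|\le e^{-(\frac12-\eta)|w|^2}\le1\le M_{kl}$, which is even stronger than the statement. The restriction $\eta\le\frac12$ is not a defect of your route: the paper's own proof needs it as well (its last step multiplies $e^{(1-1/\eta)|z|^2}$ by $e^{|z|^2/2\eta}$ and discards the product, which requires $1\le\frac{1}{2\eta}$), and the lemma is in fact false for $\eta$ close to $1$ — again at $k=l=1$, $\sup_{x\ge0}e^{-(1-\eta)x}|x-1|=\frac{e^{-(2-\eta)}}{1-\eta}>1$ once $\eta\gtrsim0.75$ — so some such restriction is unavoidable; it is harmless because every application takes $\eta=O(\epsilon)$. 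So: drop the term-by-term computation, state the displacement-operator bound with the hypothesis $\eta\le\frac12$, and you have a proof that is shorter and tighter than the paper's, at the cost of not reproducing the specific (looser) constant $M_{kl}$ by the paper's mechanism.
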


\begin{leftbar}
\begin{proof} For $k$ or $l>E$ the inequality is trivial. For all $k,l\le E$ and all $z\in\mathbb C$,
\be
\ba
\left|f_{\ket k\bra l}(z,\eta)\right|&=\left(\frac{1}{\eta}\right)^{1+\frac{k+l}2}e^{\left(1-\frac{1}{\eta}\right)|z|^2}\left|\mathcal{L}_{k,l}\left(\frac{z}{\sqrt\eta}\right)\right|\\
&=\frac{1}{\eta}e^{\left(1-\frac{1}{\eta}\right)|z|^2}\frac{1}{\sqrt{k!}\sqrt{l!}}\left|\sum_{p=0}^{\min\mathrlap{(k,l)}}{\frac{(-1)^pk!l!}{p!(k-p)!(l-p)!}\frac1{\eta^{k+l-p}}z^{l-p}z^{*(k-p)}}\right|,
\ea
\label{maxfkl1}
\ee
where we used Eq.~(\ref{2DL}). Now for all $z\in\mathbb C^*$ and all $a>0$ we have~\cite{wunsche1998laguerre}
\be
\ba
\left|\sum_{p=0}^{\min\mathrlap{(k,l)}}{\frac{(-1)^pk!l!}{p!(k-p)!(l-p)!}a^{k+l-p}z^{l-p}z^{*(k-p)}}\right|&=a^kl!|z|^{k-l}\left|L_l^{(k-l)}\left(a|z|^2\right)\right|\\
&=a^lk!|z|^{l-k}\left|L_k^{(l-k)}\left(a|z|^2\right)\right|,
\ea
\ee
where
\be
L_n^{(\alpha)}(x)=\sum_{q=0}^n{\frac{(-1)^q}{q!}\binom{n+\alpha}{n-q}x^q}
\ee
are the generalised Laguerre polynomials~\cite{abramowitz1965handbook}, defined for $\alpha\in\mathbb R$ and $n\in\mathbb N$. Plugging this relation into Eq.~(\ref{maxfkl1}) we obtain
\be
\ba
\left|f_{\ket k\bra l}(z,\eta)\right|&=e^{\left(1-\frac{1}{\eta}\right)|z|^2}\frac{|z|^{l-k}}{\eta^{1+l}}\frac{\sqrt{k!}}{\sqrt{l!}}\left|L_k^{(l-k)}\left(\frac{|z|^2}{\eta}\right)\right|\\
&=e^{\left(1-\frac{1}{\eta}\right)|z|^2}\frac{|z|^{k-l}}{\eta^{1+k}}\frac{\sqrt{l!}}{\sqrt{k!}}\left|L_l^{(k-l)}\left(\frac{|z|^2}{\eta}\right)\right|,
\ea
\label{maxfkl2}
\ee
for all $z\in\mathbb C$. The generalised Laguerre polynomials are bounded as~\cite{rooney1985further}
\be
\left|L_n^{(\alpha)}(x)\right|\le\frac{\Gamma(n+\alpha+1)}{n!\Gamma(\alpha+1)}e^{\frac{x}{2}},
\label{boundLag1}
\ee
for all $x\ge0$, all $\alpha\ge0$ and all $n\in\mathbb N$, and as
\be
\left|L_n^{(\alpha)}(x)\right|\le2^{-\alpha}e^{\frac{x}{2}},
\label{boundLag2}
\ee
for all $x\ge0$, all $\alpha\le-\frac12$ and all $n\in\mathbb N$. 

Let $a>0$. Assuming $k<l$, we have $|z|^{l-k}\le a^{l-k}$ for $|z|\le a$, and $|z|^{k-l}\le a^{k-l}$ for $|z|\ge a$. Thus, the first line of Eq.~(\ref{maxfkl2}), together with Eq.~(\ref{boundLag1}), give
\be
\ba
\left|f_{\ket k\bra l}(z,\eta)\right|&\le e^{\left(1-\frac{1}{\eta}\right)|z|^2}\frac{a^{l-k}}{\eta^{1+l}}\frac{\sqrt{k!}}{\sqrt{l!}}\frac{l!}{k!(l-k)!}e^{\frac{|z|^2}{2\eta}}\\
&\le\frac{a^{l-k}}{\eta^{1+l}}\frac{\sqrt{l!}}{(l-k)!\sqrt{k!}},
\ea
\label{maxfkl3}
\ee
for $|z|\le a$ and $k<l$. Similarly, the second line of Eq.~(\ref{maxfkl2}), together with Eq.~(\ref{boundLag2}), give
\be
\ba
\left|f_{\ket k\bra l}(z,\eta)\right|&\le e^{\left(1-\frac{1}{\eta}\right)|z|^2}\frac{a^{k-l}}{\eta^{1+k}}\frac{\sqrt{l!}}{\sqrt{k!}}2^{l-k}e^{\frac{|z|^2}{2\eta}}\\
&\le\frac{a^{k-l}}{\eta^{1+k}}\frac{\sqrt{l!}}{\sqrt{k!}}2^{l-k},
\ea
\label{maxfkl4}
\ee
for $|z|\ge a$ and $k<l$. These two last bounds~(\ref{maxfkl3}, \ref{maxfkl4}) are equal for $a^{l-k}=(2\eta)^{\frac{l-k}{2}}\sqrt{(l-k)!}$, yielding the bound
\be
\left|f_{\ket k\bra l}(z,\eta)\right|\le\sqrt{\frac{2^{l-k}}{\eta^{2+k+l}}\binom{l}{k}},
\label{maxfklle}
\ee
for all $z\in\mathbb C$ and $k<l$. For $l<k$ the same reasoning gives
\be
\left|f_{\ket k\bra l}(z,\eta)\right|\le\sqrt{\frac{2^{k-l}}{\eta^{2+k+l}}\binom{k}{l}}.
\label{maxfklge}
\ee
Finally, for $k=l$ the previous bounds also hold, by combining Eqs.~(\ref{maxfkl2}) and (\ref{boundLag1}), and this proves Lemma~\ref{lem:boundfkl}.

\end{proof}
\end{leftbar}

\noindent Let $k,l\ge0$, $n\in\mathbb N$ and $\epsilon'>0$. Applying Lemma~\ref{lem:HoeffdingC} to the function $f_{\ket l\bra k}$, with the bound from Lemma~\ref{lem:boundfkl} yields
\be
\Prb\left[\left|\frac{1}{n}\sum_{i=1}^n{f_{\ket l\bra k}(\alpha_i,\eta)}- \underset{\alpha\leftarrow Q_{\mathrlap\rho}}{\mathbb E}[f_{\ket l\bra k}(\alpha,\eta)]\right|\ge\epsilon'\right] \leq 4\exp\left[{-\frac{n\eta^{2+k+l}\epsilon'^2}{4M_{kl}^2}}\right].
\label{Hoeffapp1}
\ee
Applying Theorem~\ref{thmain} for $A=\ket l\bra k$ we also obtain
\be
\left|\rho_{kl}-\underset{\alpha\leftarrow Q_{\mathrlap\rho}}{\mathbb E}\left[f_{\ket l\bra k}(\alpha,\eta)\right]\right|\le\eta\sqrt{k+1}\sqrt{l+1}.
\label{th1app}
\ee
Let $\alpha_1,\dots,\alpha_n$ be samples from the $Q$-function of $\rho$. Combining Eqs.~(\ref{Hoeffapp1}) and (\ref{th1app}), we obtain with the triangular inequality
\be
\left|\rho_{kl}-\frac{1}{n}\sum_{i=1}^n{f_{\ket l\bra k}(\alpha_i,\eta)}\right|\le\eta\sqrt{k+1}\sqrt{l+1}+\epsilon',
\ee
with probability greater than
\be
1-4\exp\left[{-\frac{n\eta^{2+k+l}\epsilon'^2}{4M_{kl}^2}}\right].
\ee
We have $K_{\ket k\bra l}=\sqrt{(k+1)(l+1)}$ by Eq.~(\ref{K}). Taking $\eta=\frac\epsilon{K_{\ket k\bra l}}$ yields
\be
\left|\rho_{kl}-\frac{1}{n}\sum_{i=1}^n{f_{\ket l\bra k}\left(\alpha_i,\frac{\epsilon}{K_{\ket k\bra l}}\right)}\right|\le\epsilon+\epsilon',
\ee
with probability greater than
\be
1-4\exp\left[{-\frac{n\epsilon^{2+k+l}\epsilon'^2}{4C_{kl}}}\right],
\label{probaQSTkl}
\ee
where we defined
\be
\ba
C_{kl}&:=\left[(k+1)(l+1)\right]^{1+\frac{k+l}2}M_{kl}^2\\
&=\left[(k+1)(l+1)\right]^{1+\frac{k+l}2}2^{|l-k|}\binom{\max{(k,l)}}{\min{(k,l)}}.
\ea
\label{Ckl}
\ee
Now this holds for $0\le k,l\le E$. Together with the union bound, this proves the theorem.

\qed


\section{Proof of Theorem~\ref{thi.i.d.}}
\label{app:certification}

\noindent In this section, we detail the proof of Theorem~\ref{thi.i.d.}, which goes along the same lines as the one of Theorem~\ref{thQST}, given in Section~\ref{app:proofQST}, with the addition of a support estimation step for i.i.d.\@ states.

\subsection{Support estimation for i.i.d.\@ states}
\label{app:Etesti.i.d.}

\noindent Let us define the following operators for $E\ge0$:
\be
U=\smashoperator{\sum_{n=E+1}^{+\infty}}{\ket n\bra n}=1-\Pi_{\le E},
\ee
where $\Pi_{\le E}=\sum_{n=0}^E{\ket n\bra n}$ is the projector onto the Hilbert space $\bar{\mathcal H}$ of states with less than $E$ photons, and
\be
T=\frac{1}{\pi}\smashoperator{\int_{\quad|\alpha|^2\ge E}}{\ket\alpha\bra\alpha d^2\alpha},
\ee
where $\ket\alpha$ is a coherent state. We have the following result, proven in~\cite{leverrier2013security} by expanding $T$ in the Fock basis:
\be
U\le2T.
\label{U2T}
\ee
In particular, 
\be
\mathrm{Tr}(U\rho)\le2\mathrm{Tr}(T\rho).
\label{ineqtrUT}
\ee
The probability $P_r$ that exactly $r$ among $n$ values of $|\alpha_i|^2$ are bigger than $E$ and $n-r$ values are lower, and that the projection $\Pi_{\le E}$ of the state $\rho$ onto the Hilbert space $\bar{\mathcal H}$ of states with less than $E$ photons fails is bounded as
\be
\ba
P_r&=\binom nr\Tr\left[\left(1-\Pi_{\le E}\right)T^r(1-T)^{n-r}\rho^{\otimes(n+1)}\right]\\
&=\binom{n}{r}\Tr\left[UT^r(1-T)^{n-r}\rho^{\otimes(n+1)}\right]\\
&\le2\binom{n}{r}\Tr\left(T\rho\right)^{r+1}\Tr\left[(1-T)\rho\right]^{n-r}\\
&\le2\binom{n}{r}\max_p{\left|p^{r+1}(1-p)^{n-r}\right|}\\
& = 2\binom{n}{r}\left(\frac{r+1}{n+1}\right)^{r+1}\left(1-\frac{r+1}{n+1}\right)^{n-r}\\
&\le\frac{2n^r}{r!}\left(\frac{r+1}{n+1}\right)^{r+1}\left(1-\frac{r+1}{n+1}\right)^{n-r}\\
&\le\frac{2n^r}{r!}\frac{(r+1)^{r+1}}{n^{r+1}}\exp\left[{-\frac{(n-r)(r+1)}{n+1}}\right]\\
&\le\frac{2}{n}\frac{r+1}{\sqrt{2\pi(r+1)}}\exp\left[{\frac{(r+1)^2}{n+1}}\right]\\
&\le\frac{\sqrt{r+1}}{n}\exp\left[{\frac{(r+1)^2}{n+1}}\right],
\ea
\ee
where we used $1-x\le e^{-x}$ and $(r+1)!\ge\sqrt{2\pi(r+1)}(r+1)^{r+1}e^{-(r+1)}$. For $s\in\mathbb N$, and for all $r\le s$,
\be
\frac{\sqrt{r+1}}{n}\exp\left[{\frac{(r+1)^2}{n+1}}\right]\le\frac{\sqrt{s+1}}{n}\exp\left[\frac{(s+1)^2}{n+1}\right],
\ee
hence the probability that at most $s$ among $n$ values of $|\alpha_i|^2$ are bigger than $E$, and that the projection $\Pi_{\le E}$ of the state $\rho$ onto the Hilbert space $\bar{\mathcal H}$ of states with less than $E$ photons fails is bounded by
\be
P^{iid}_{\text{support}}:=\frac{(s+1)^{3/2}}{n}\exp\left[{\frac{(s+1)^2}{n+1}}\right].
\label{Etesti.i.d.}
\ee
%
For $1\ll s\ll n$, this implies that either $\rho$ is contained in a lower dimensional subspace, or the score at the support estimation step is higher than $s$, with high probability.

\subsection{Proof of Theorem~\ref{thi.i.d.}}
\label{app:proofi.i.d.}

\noindent Let $\epsilon,\epsilon'>0$, let $n\ge1$, and let $\bm{\alpha}=\alpha_1,\dots,\alpha_n$ be samples obtained by measuring with heterodyne detection $n$ copies of a state $\rho$. Let $E$ in $\mathbb N$, and let $r$ be the number of samples such that $|\alpha_i|^2>E$. Let also $\ket\Psi$ be a pure state. Let us define
\be
F_\Psi(\rho)=\left[\frac{1}{n}\sum_{i=1}^n{f_\Psi\left(\alpha_i,\frac\epsilon{mK_\Psi}\right)}\right]^m,
\label{tildeFapp}
\ee
where the function $f$ and the constant $K$ are defined in Eqs.~(\ref{f}) and (\ref{K}).
Then:

\begin{theo} For all $m,s\le n$,
\be
\left|F(\Psi^{\otimes m},\rho^{\otimes m})-F_\Psi(\rho)\right|\le\epsilon+\epsilon',
\ee
or $r>s$, with probability greater than
\be
1-\left(P_{\text{support}}^{iid}+P_{\text{Hoeffding}}^{iid}\right),
\ee
\vspace{-5pt}
where
\vspace{-5pt}
\be
P_{\text{support}}^{iid}=\frac{(s+1)^{3/2}}{n}\exp\left[{\frac{(s+1)^2}{n+1}}\right],
\ee
and
\be
P_{\text{Hoeffding}}^{iid}=2\exp\left[{-\frac{n\epsilon^{2+2E}\epsilon'^2}{2m^{4+2E}C^2_\Psi}}\right],
\ee
with
\be
C_\Psi=\sum_{k,l=0}^E{|\psi_k\psi_l|\left(\frac\epsilon m\right)^{E-\frac{k+l}2}K_\Psi^{1+\frac{k+l}2}\sqrt{2^{|l-k|}\binom{\max{(k,l)}}{\min{(k,l)}}}}\underset{\epsilon\rightarrow0}{\longrightarrow}|\psi_E|^2K_\Psi^{1+E}.
\ee
\end{theo}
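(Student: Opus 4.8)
The plan is to reduce the $m$-copy statement to a single-copy fidelity estimate and then promote it through the multiplicativity of the fidelity. Since $\Psi^{\otimes m}$ is pure, $F(\Psi^{\otimes m},\rho^{\otimes m})=\braket{\Psi|\rho|\Psi}^m=F(\Psi,\rho)^m$, while by construction $F_\Psi(\rho)=\big(\tfrac1n\sum_{i}f_\Psi(\alpha_i,\epsilon/(mK_\Psi))\big)^m$ is the $m$-th power of an empirical mean. It therefore suffices to control the single-copy quantity $\big|F(\Psi,\rho)-\tfrac1n\sum_i f_\Psi(\alpha_i,\eta)\big|$ at precision $(\epsilon+\epsilon')/m$ with $\eta=\epsilon/(mK_\Psi)$, since Lemma~\ref{lem:simplem} then converts this into the desired bound $|F(\Psi,\rho)^m-F_\Psi(\rho)|\le\epsilon+\epsilon'$ (applied with $a=F(\Psi,\rho)\in[0,1]$ and $b=\tfrac1n\sum_i f_\Psi$; a minor point to check is that the empirical mean lies in $[0,1]$ on the event where the bound is invoked).

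First I would dispose of the missing bounded-support hypothesis, which prevents a direct appeal to Corollary~\ref{corofidelity}. This is exactly the role of the support estimation step of Section~\ref{app:Etesti.i.d.}: appending a virtual $(n{+}1)$-th i.i.d.\ copy on which one applies $U=1-\Pi_{\le E}$, the operator inequality $U\le2T$ of Eq.~(\ref{U2T}) and the binomial computation leading to Eq.~(\ref{Etesti.i.d.}) bound by $P_{\text{support}}^{iid}$ the joint probability that at most $s$ of the samples satisfy $|\alpha_i|^2>E$ while $\rho$ carries weight outside $\bar{\mathcal H}$. Hence, up to probability $P_{\text{support}}^{iid}$, either $r>s$ (the abort branch of the statement) or we may treat $\rho$ as supported on $\bar{\mathcal H}$ and invoke Corollary~\ref{corofidelity}. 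I expect this conditioning to be the main obstacle: one must carefully reconcile the joint event bounded in Eq.~(\ref{Etesti.i.d.}) with the clean hypothesis of Corollary~\ref{corofidelity}, so that the high-energy tail of $\rho$ contributes to the failure probability rather than to the estimate.

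On the bounded-support branch, Corollary~\ref{corofidelity} applied to $A=\ket\Psi\bra\Psi$ with $\eta=\epsilon/(mK_\Psi)$ (which satisfies $\eta<2/E$ in the relevant parameter regime) gives $\big|F(\Psi,\rho)-\mathbb E_{\alpha\leftarrow Q_\rho}[f_\Psi(\alpha,\eta)]\big|\le\eta K_\Psi=\epsilon/m$. For the statistical deviation I would bound $f_\Psi$ pointwise: writing $f_\Psi=\sum_{k,l=0}^E\psi_k\psi_l^*f_{\ket k\bra l}$ and using the triangle inequality together with Lemma~\ref{lem:boundfkl}, one finds $|f_\Psi(z,\eta)|\le\sum_{k,l=0}^E|\psi_k\psi_l|\,M_{kl}\,(mK_\Psi/\epsilon)^{1+\frac{k+l}2}=(m/\epsilon)^{1+E}C_\Psi$, recovering precisely the constant appearing in the exponent of $P_{\text{Hoeffding}}^{iid}$. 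Because $\ket\Psi\bra\Psi$ is Hermitian the function $f_\Psi$ is real-valued, so the real Hoeffding bound of Lemma~\ref{lem:HoeffdingR} applies with $M=(m/\epsilon)^{1+E}C_\Psi$ and deviation $\lambda=\epsilon'/m$, giving $\big|\tfrac1n\sum_i f_\Psi(\alpha_i,\eta)-\mathbb E_{\alpha\leftarrow Q_\rho}[f_\Psi(\alpha,\eta)]\big|\le\epsilon'/m$ except with probability $P_{\text{Hoeffding}}^{iid}$.

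Finally I would combine the two estimates by the triangle inequality to obtain $\big|F(\Psi,\rho)-\tfrac1n\sum_i f_\Psi(\alpha_i,\eta)\big|\le(\epsilon+\epsilon')/m$ on the intersection of the two good events, apply Lemma~\ref{lem:simplem} to reach $|F(\Psi^{\otimes m},\rho^{\otimes m})-F_\Psi(\rho)|\le\epsilon+\epsilon'$, and conclude with a union bound over the support-estimation and Hoeffding failures, whereby the total failure probability is at most $P_{\text{support}}^{iid}+P_{\text{Hoeffding}}^{iid}$.
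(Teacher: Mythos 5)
Your proposal is correct and follows essentially the same route as the paper's proof: the support-estimation bound of Eq.~(\ref{Etesti.i.d.}) to dispose of the bounded-support hypothesis, Corollary~\ref{corofidelity} for the bias term, Lemma~\ref{lem:boundfkl} plus real-valued Hoeffding (Lemma~\ref{lem:HoeffdingR}) for the statistical deviation with exactly the constant $C_\Psi$, then Lemma~\ref{lem:simplem} and a union bound. The only cosmetic difference is that you work directly at scale $\eta=\epsilon/(mK_\Psi)$, $\lambda=\epsilon'/m$, whereas the paper first proves the $m=1$ case with $\eta=\epsilon/K_\Psi$ and then substitutes $\epsilon\to\epsilon/m$, $\epsilon'\to\epsilon'/m$; these are the same computation, and the clipping of the empirical mean to $[0,1]$ that you flag is indeed handled in the paper by setting the estimate to $1$ in the pathological case.
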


\noindent We first consider the case of $m=1$, from which we deduce the general case.\\

Let us write $\ket\Psi=\sum_{n\ge0}\psi_n\ket n$. For $\eta>0$, the function $z\mapsto f_\Psi(z,\eta)$ is real-valued, since $\ket\Psi\bra\Psi$ is hermitian. It is bounded as
\begin{align*}
\left|f_{\Psi}(\alpha,\eta)\right|&=\left|\sum_{k,l=0}^E{\psi_k\psi_l^*f_{\ket k\bra l}(\alpha,\eta)}\right|\\
&\le\sum_{k,l=0}^E{\left|\psi_k\psi_l^*f_{\ket k\bra l}(\alpha,\eta)\right|}\\
&\le\sum_{k,l=0}^E{\left|\psi_k\psi_l\right|\frac{M_{kl}}{\eta^{1+\frac{k+l}2}}}\\
&=\frac{1}{\eta^{1+E}}\sum_{k,l=0}^E{\left|\psi_k\psi_l\right|\eta^{E-(k+l)/2}M_{kl}}\\
&=\frac{M_\Psi(\eta)}{\eta^{1+E}},
\end{align*}
where we used Lemma~\ref{lem:boundfkl}, and where we defined
\be
M_\Psi(\eta):=\sum_{k,l=0}^E{\left|\psi_k\psi_l\right|\eta^{E-(k+l)/2}M_{kl}}.
\label{boundftau}
\ee
Applying Lemma~\ref{lem:HoeffdingR} to the real-valued function $z\mapsto f_\Psi(z,\eta)$ thus yields
\be
\Pr\left[\left|\frac{1}{n}\sum_{i=1}^n{f_{\Psi}(\alpha_i,\eta)}-\underset{\alpha\leftarrow Q_{\mathrlap\rho}}{\mathbb E}[f_{\Psi}(\alpha,\eta)]\right|\ge\epsilon'\right] \leq 2\exp\left[{-\frac{n\eta^{2+2E}\epsilon'^2}{2M_\Psi^2(\eta)}}\right],
\label{apphoeffding}
\ee
for $\epsilon',\eta>0$, where the probability is over i.i.d.\@ samples from heterodyne detection of $\rho$.

\medskip

In what follows, we first assume that $\rho\in\bar{\mathcal H}$. By Corollary~\ref{corofidelity} we have
\be
\left|F\left(\Psi,\rho\right)-\underset{\alpha\leftarrow Q_{\mathrlap\rho}}{\mathbb E}[f_{\Psi}(\alpha,\eta)]\right|\le\eta K_\Psi.
\label{appth4}
\ee
Combining Eqs.~(\ref{apphoeffding}) and (\ref{appth4}) yields
\be
\left|F\left(\Psi,\rho\right)-\frac{1}{n}\sum_{i=1}^n{f_{\Psi}(\alpha_i,\eta)}\right|\le\eta K_\Psi+\epsilon',
\ee
with probability greater than $1-2\exp\left[{-\frac{n\eta^{2+2E}\epsilon'^2}{2M_\Psi^2(\eta)}}\right]$.
Setting $\eta=\frac\epsilon{K_\Psi}$ yields
\be
\left|F\left(\Psi,\rho\right)-\frac{1}{n}\sum_{i=1}^n{f_{\Psi}\left(\alpha_i,\frac\epsilon{K_\Psi}\right)}\right|\le\epsilon+\epsilon',
\label{m1}
\ee
with probability greater than $1-2\exp\left[{-\frac{n\epsilon^{2+2E}\epsilon'^2}{2C_{\Psi,1}^2(\epsilon)}}\right]$,
where we defined
\be
\ba
C_{\Psi,1}(\epsilon)&:=K_\Psi^{1+E}M_\Psi\left(\frac\epsilon{K_\Psi}\right)\\
&=\sum_{k,l=0}^E{|\psi_k\psi_l|\epsilon^{E-\frac{k+l}2}K_\Psi^{1+\frac{k+l}2}\sqrt{2^{|l-k|}\binom{\max{(k,l)}}{\min{(k,l)}}}}.
\ea
\label{CtauE}
\ee
Combining Lemma~\ref{lem:simplem} and Eq.~(\ref{m1}) we obtain
\be
\left|F\left(\Psi,\rho\right)^m-\left[\frac{1}{n}\sum_{i=1}^n{f_{\Psi}\left(\alpha_i,\frac\epsilon{K_\Psi}\right)}\right]^m\right|\le m(\epsilon+\epsilon'),
\ee
with probability greater than $1-2\exp\left[{-\frac{n\epsilon^{2+2E}\epsilon'^2}{2C^2_{\Psi,1}(\epsilon)}}\right]$.
Note that we excluded the pathological case $\frac{1}{n}\sum_{i=1}^n{f_{\Psi}(\alpha_i,\epsilon/ K_\Psi)}>1$:
when that is the case we instead set \mbox{$\frac{1}{n}\sum_{i=1}^n{f_{\Psi}(\alpha_i,\epsilon/ K_\Psi)}=1$}. 
The target state $\Psi$ is pure so $F(\Psi^{\otimes m},\rho^{\otimes m})=F\left(\Psi,\rho\right)^m$. Hence, replacing $\epsilon$ and $\epsilon'$ by $\epsilon/m$ and $\epsilon'/m$, respectively, gives
\be
\left|F\left(\Psi,\rho\right)^m-F_\Psi(\rho)\right|\le\epsilon+\epsilon',
\ee
with probability greater than
\be
P_{\text{Hoeffding}}^{iid}:=1-2\exp\left[{-\frac{n\epsilon^{2+2E}\epsilon'^2}{2m^{4+2E}C^2_\Psi}}\right].
\ee
where $F_\Psi(\rho)=\left[\frac{1}{n}\sum_{i=1}^n{f_\Psi\left(\alpha_i,\frac\epsilon{mK_\Psi}\right)}\right]^m$ and where
\be
\ba
C_\Psi&:=C_{\Psi,1}(\epsilon/m)\\
&=\sum_{k,l=0}^E{|\psi_k\psi_l|\left(\frac\epsilon m\right)^{E-\frac{k+l}2}K_\Psi^{1+\frac{k+l}2}\sqrt{2^{|l-k|}\binom{\max{(k,l)}}{\min{(k,l)}}}}.
\ea
\label{Cpsiproof}
\ee
Until now we have assumed $\rho\in\bar{\mathcal H}$. By Eq.~(\ref{Etesti.i.d.}), the probability that at most $s$ among $n$ values of $|\alpha_i|^2$ are bigger than $E$, and that the projection $\Pi_{\le E}$ of the state $\rho$ onto the Hilbert space $\bar{\mathcal H}$ of states with less than $E$ photons fails is bounded by
\be
P^{iid}_{\text{support}}=\frac{(s+1)^{3/2}}{n}\exp\left[{\frac{(s+1)^2}{n+1}}\right].
\ee
With the union bound we thus obtain
\be
\left|F\left(\Psi,\rho\right)^m-F_\Psi(\rho)\right|\le\epsilon+\epsilon',
\ee
or $r>s$, with probability greater than $1-\left(P^{iid}_{\text{support}}+P_{\text{Hoeffding}}^{iid}\right)$.

\qed


\section{Proof of Theorem~\ref{thVUCVQC}}
\label{app:verification}

\noindent In this section, we detail the proof of Theorem~\ref{thVUCVQC}. Compared to the case of the previous section, the i.i.d.\@ states are replaced with permutation-invariant states, which makes the proof more technical. We recall the protocol and the theorem in the following.\\

The verifier wants to verify $m$ copies of a target pure state $\ket\Psi$. The numbers $n$, $k$, $q$, and $E$ are free parameters of the protocol. The verifier instructs the prover to prepare $n+k$ copies of $\ket\Psi$. Let us write $\rho^{n+k}$ the state received by the verifier. It picks $k$ subsystems at random and measures them with heterodyne detection, obtaining the samples $\beta_1,\dots,\beta_k$. It records the number $r$ of values $|\beta_i|^2>E$. It discards $4q$ subsystems at random and measures all the others but $m$ chosen at random with heterodyne detection, obtaining the samples $\alpha_1,\dots,\alpha_{n-4q-m}$. Finally, the verifier computes with these samples the estimate
\be
F_\Psi(\rho)=\left[\frac{1}{n-4q-m}\sum_{i=1}^{n-4q-m}{f_\Psi\left(\alpha_i,\frac\epsilon{mK_\Psi}\right)}\right]^m,
\label{tildeF2app}
\ee
where the function $f$ and the constant $K$ are defined in Eqs.~(\ref{f}) and (\ref{K}), and where $\epsilon>0$ is a free parameter.
 
\begin{theo} Let $\epsilon,\epsilon'>0$. For all $s\le k$,
\be
\left|F\left(\Psi^{\otimes m},\rho^m\right)-F_\Psi(\rho)\right|\le\epsilon+\epsilon'+P_{\text{deFinetti}},
\ee
or $r>s$, with probability greater than
\be
1-\left(P_{\text{support}}+P_{\text{deFinetti}}+P_{\text{choice}}+P_{\text{Hoeffding}}\right),
\ee
where
\be
P_{\text{support}}=8k^{3/2}\exp\left[{-\frac{k}{9}\left(\frac{q}{n}-\frac{2s}{k}\right)^2}\right],
\ee
\be
P_{\text{deFinetti}}=q^{(E+1)^2/2}\exp\left[{-\frac{2q(q+1)}{n}}\right],
\ee
\be
P_{\text{choice}}=\frac{m(4q+m-1)}{n-4q},
\ee
and
\be
P_{\text{Hoeffding}}=2\binom{n-4q}{4q}\exp\left[{-\frac{n-8q}{2m^{4+2E}}\left(\frac{\epsilon^{1+E}\epsilon'}{C_\Psi}-\frac{8qm^{2+E}}{n-4q-m}\right)^2}\right],
\ee
with
\be
C_\Psi=\sum_{k,l=0}^E{|\psi_k\psi_l|\left(\frac\epsilon m\right)^{E-\frac{k+l}2}K_\Psi^{1+\frac{k+l}2}\sqrt{2^{|l-k|}\binom{\max{(k,l)}}{\min{(k,l)}}}}\underset{\epsilon\rightarrow0}{\longrightarrow}|\psi_E|^2K_\Psi^{1+E}.
\ee
\end{theo}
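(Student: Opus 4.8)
The plan is to reduce the fully adversarial setting to the i.i.d.\@ setting of Theorem~\ref{thi.i.d.} through three successive reductions --- symmetrization, support estimation, and a de Finetti reduction --- each contributing one of the failure terms, which are combined at the end by a union bound. First I would exploit the symmetry of the protocol: since the verifier selects the subsystems to measure, discard and keep uniformly at random, the joint distribution of all outcomes produced by the protocol is invariant under permutations of the input subsystems, so I may replace the prover's arbitrary state $\rho^{n+k}$ by its symmetrization $\bar\rho^{n+k}=\frac1{(n+k)!}\sum_\pi\pi\,\rho^{n+k}\,\pi^\dagger$, which is permutation-invariant, without changing any probability computed by the protocol. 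All subsequent steps then assume a permutation-invariant input.

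Next comes support estimation for permutation-invariant states. Using the operator inequality $U\le2T$ of Eq.~(\ref{U2T}), which relates the projector $1-\Pi_{\le E}$ onto high photon number to the heterodyne overlap on the region $|\alpha|^2\ge E$, the event that a measured subsystem gives $|\beta_i|^2>E$ dominates, up to the factor $2$, the event that this subsystem lies outside $\bar{\mathcal H}$. Because the verifier measures $k$ of the $n+k$ subsystems at random, a concentration bound for sampling without replacement lets me convert the observed score $r\le s$ into the guarantee that, with probability at least $1-P_{\text{support}}$, at most $\sim q$ of the $n$ remaining subsystems lie outside $\bar{\mathcal H}$; comparing the allowed fraction $q/n$ with twice the observed fraction $2s/k$ produces the exponent of $P_{\text{support}}$. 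This effectively restricts the analysis to the finite-dimensional space $\bar{\mathcal H}$ of dimension $E+1$, up to at most $q$ exceptional modes.

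Having made the problem finite-dimensional, I would apply the de Finetti reduction of~\cite{renner2009finetti}: after discarding $4q$ subsystems, the reduced permutation-invariant state on the $n-4q$ remaining subsystems is $P_{\text{deFinetti}}$-close in trace distance to a convex mixture of almost-i.i.d.\@ states --- states of the form $\sigma^{\otimes(n-4q)}$ up to the few defective modes --- with the dimension $E+1$ entering through the prefactor $q^{(E+1)^2/2}$. On each i.i.d.\@ component $\sigma$ I then reuse the analysis of Theorem~\ref{thi.i.d.}, controlling the estimator bias with Corollary~\ref{corofidelity}, its fluctuations with the Hoeffding inequality of Lemma~\ref{lem:HoeffdingR}, and the power $m$ with Lemma~\ref{lem:simplem}. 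Since the true reduced state $\rho^m$ is only close to, not equal to, an i.i.d.\@ state $\sigma^{\otimes m}$, I use Lemma~\ref{lem:fidelitytriangular} to turn the de Finetti trace-distance gap into an additive fidelity error --- which is precisely why $P_{\text{deFinetti}}$ appears both inside the bound $\epsilon+\epsilon'+P_{\text{deFinetti}}$ and in the failure probability. The term $P_{\text{choice}}$ then arises from a union bound over the random partition, bounding the probability that the $m$ kept subsystems collide with one of the $\le4q$ defective modes.

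The hard part will be the final Hoeffding step for the almost-i.i.d.\@ states surviving the de Finetti reduction. There the empirical average in Eq.~(\ref{tildeF2app}) is a sum of i.i.d.\@ terms perturbed by a bounded number of arbitrary defective terms, so I must both sum over the $\binom{n-4q}{4q}$ possible locations of the defective modes, yielding the binomial prefactor of $P_{\text{Hoeffding}}$, and absorb their worst-case contribution --- bounded in terms of $C_\Psi$ via Lemma~\ref{lem:boundfkl} --- into the shift $\frac{8qm^{2+E}}{n-4q-m}$ subtracted inside the exponent. Correctly propagating the constant $C_\Psi$ and the energy parameters through this perturbed Hoeffding bound, while keeping all four failure probabilities simultaneously controlled under the final union bound, is the technical crux of the proof.
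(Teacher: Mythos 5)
Your plan reproduces the architecture of the paper's own proof almost step for step: symmetrization of the prover's state so that it may be taken permutation-invariant without loss of generality; support estimation via $U\le2T$ combined with a Serfling-type bound for sampling without replacement (Lemma~\ref{lem:Serf2}); a de Finetti reduction to mixtures of almost-i.i.d.\@ states; the term $P_{\text{choice}}$ from the probability that the $m$ kept subsystems hit a defective mode; a shifted Hoeffding bound with the $\binom{n-4q}{4q}$ prefactor; and a final union bound. The roles you assign to Lemmas~\ref{lem:fidelitytriangular}, \ref{lem:simplem}, \ref{lem:HoeffdingR} and \ref{lem:boundfkl}, and your explanation of why $P_{\text{deFinetti}}$ appears both in the error bound and in the failure probability, also match the paper.

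The genuine gap is in your de Finetti step. The theorem actually invoked (Theorem~\ref{thDeFinetti}, from \cite{renner2009finetti}) applies to \emph{pure} states in the symmetric subspace $\mathrm{Sym}^n(\mathcal K)$, not to permutation-invariant density operators, and its almost-i.i.d.\@ components are spans of permuted vectors $\ket v^{\otimes n-8q}\otimes\ket\phi$ with $\ket v$ a \emph{pure} unit vector, not states ``of the form $\sigma^{\otimes(n-4q)}$ up to defective modes'' as you write. To bridge this, the paper first purifies the permutation-invariant state $\rho^n\in\mathcal S^n_{\bar{\mathcal H}^{\otimes n-q}}$ into $\mathrm{Sym}^n(\mathcal H\otimes\mathcal H)\cap\mathcal S^n_{(\bar{\mathcal H}\otimes\bar{\mathcal H})^{\otimes n-2q}}$ (Lemma~\ref{lem:purif}), applies the de Finetti theorem in the doubled space, and only afterwards traces out the purifying registers, using monotonicity of fidelity under quantum operations. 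This step is not cosmetic: it is where the exponent $(E+1)^2$ in $P_{\text{deFinetti}}$ comes from (the dimension of $\bar{\mathcal H}\otimes\bar{\mathcal H}$, which you quote but cannot derive from a single-copy dimension $E+1$), and where the bookkeeping $q\to2q\to$ i.i.d.\@ on $n-8q$ subsystems originates. It also undercuts your final Hoeffding step as described: because the de Finetti components are coherent superpositions of permuted product vectors (elements of the span $\mathcal S^{n-4q}_{v^{\otimes n-8q}}$), a classical sum ``over the locations of the defective modes'' does not suffice; one needs the superposition inequality $\braket{\psi|A|\psi}\le|\mathcal X|\sum_{x}\braket{\psi^x|A|\psi^x}$ together with the decomposition of $\ket\Phi$ into at most $\binom{n-4q}{4q}$ permuted product vectors (Lemmas 4.5.1 and 4.1.6 of \cite{renner2008security}), and this is the actual source of the binomial prefactor in $P_{\text{Hoeffding}}$. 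Relatedly, the appearance of $P_{\text{deFinetti}}$ in the failure probability is not obtained from Lemma~\ref{lem:fidelitytriangular} alone: one must push the trace-distance bound through the measurement-plus-post-processing channel (the map $\mathcal E$ of Eq.~(\ref{postprocessingmap})) to control the distribution of the estimator itself. With the purification inserted and the superposition lemma replacing your classical union bound over positions, your outline becomes the paper's proof.
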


\noindent The proof builds upon and generalises results from~\cite{renner2008security,renner2008finetti,renner2009finetti}, and follows these steps:

\begin{itemize}
\item \textit{Support estimation}: with probability arbitrarily close to $1$, most of the subsystems of the permutation-invariant state $\rho^{n-q}$ lie in a lower dimensional subspace, or the score of the state $\rho^{n+k}$ at the support estimation step is high (section~\ref{app:Etestsym}) ;
\item \textit{De Finetti reduction}: any permutation-invariant state with most of its subsystems in a lower dimensional subspace admits a purification in the symmetric subspace that still has most of its subsystems in a lower dimensional subspace. This purification is well approximated by a mixture of almost-i.i.d.\@ states (section~\ref{app:deFinetti}) ;
\item \textit{Hoeffding inequality for almost-i.i.d.\@ states}: mixture of almost-i.i.d.\@ states can be certified in a similar fashion as i.i.d.\@ states (section~\ref{app:almost-i.i.d.}).
\end{itemize}

\noindent We conclude the proof in section~\ref{app:finalproof} by combining the previous points.

\subsection{Support estimation for permutation-invariant states}
\label{app:Etestsym}

\noindent We first derive a support estimation step for permutation-invariant states.
We will use in this section the following operators, already introduced in Section~\ref{app:Etesti.i.d.}: for $E\ge0$:
\be
U=\smashoperator{\sum_{n=E+1}^{+\infty}}{\ket n\bra n}=1-\Pi_{\le E},
\ee
where $\Pi_{\le E}=\smashoperator{\sum_{n=0}^E}{\ket n\bra n}$ is the projector onto 
the Hilbert space $\bar{\mathcal H}$ of states with at most $E$ photons, and
\be
T=\frac{1}{\pi}\smashoperator{\int_{\quad|\alpha|^2\ge E}}{\ket\alpha\bra\alpha d^2\alpha},
\ee
where $\ket\alpha$ is a coherent state. We also recall the following result, from Eq.~(\ref{U2T}), proven in~\cite{leverrier2013security}:
\be
U\le2T.
\label{UleT}
\ee
We recall a few notations and results from~\cite{renner2008finetti}: let $\mathcal A=\{A_0,A_1\},\mathcal B=\{B_0,B_1\}$ be two binary POVMs over $\mathcal H$. Define for $\delta>0$,
\be
\gamma_{A\rightarrow B}(\delta)=\underset{\Psi}{\sup}{\left\{\Tr(B\Psi),\text{s.t.}\Tr(A\Psi)\le\delta\right\}}.
\ee
In particular,
\be
\gamma_{T\rightarrow U}(\delta)\le2\delta,
\label{gammaTU}
\ee
by Eq.~(\ref{UleT}). We recall the following result (Lemma~III.1. of~\cite{renner2008finetti}):

\begin{lem} 
Let $n\ge2k$, let $\delta>0$, let $\mathcal A=\{A_0,A_1\}$ and $\mathcal B=\{B_0,B_1\}$ be two binary POVMs over $\mathcal H$, and let $x_1,\dots,x_{n+k}$ the $(n+k)$-partite classical outcome of the measurement $\mathcal{A}^{\otimes n}\otimes \mathcal{B}^{\otimes k}$ applied to any permutation-invariant state $\rho^{n+k}$. Then
\be
\Pr\left[\frac{x_{1}+\dots+x_{n}}{n}>\gamma_{B_1\rightarrow A_1}\left(\frac{x_{n+1}+\dots+x_{n+k}}{k}+\delta\right)+\delta\right]\le8k^{3/2}e^{-k\delta^2}.
\ee
\label{lem:Serf2}
\end{lem}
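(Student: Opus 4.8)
The statement is the sampling lemma of~\cite{renner2008finetti}, so it could simply be invoked; but if I were to prove it I would proceed in three stages: first establish the structural properties of $\gamma_{B_1\to A_1}$ that turn the quantum problem into a classical one, then reduce to a sampling-without-replacement estimate using permutation invariance, and finally apply a Serfling-type concentration bound. The first stage is to record that $\delta\mapsto\gamma_{B_1\to A_1}(\delta)$ is non-decreasing and concave. Monotonicity is immediate, since enlarging $\delta$ only enlarges the feasible set $\{\Psi:\Tr(B_1\Psi)\le\delta\}$. Concavity holds because $\gamma$ is the optimal value of the maximisation of the linear functional $\Psi\mapsto\Tr(A_1\Psi)$ over this convex constraint set, and the optimal value of such a program is a concave function of the right-hand side of its constraints. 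Concavity is the decisive ingredient: by Jensen's inequality it upgrades the defining pure-state bound to the single-subsystem bound $\Tr(A_1\sigma)\le\gamma_{B_1\to A_1}(\Tr(B_1\sigma))$, valid for every (possibly mixed) single-subsystem state $\sigma$. This is the bridge that transports a constraint on the $\mathcal B$-statistics of a reduced state into a bound on its $\mathcal A$-statistics.

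The second stage exploits the permutation invariance of $\rho^{n+k}$. Since $\rho^{n+k}$ is invariant under any relabelling of its subsystems, the joint outcome distribution of $\mathcal A^{\otimes n}\otimes\mathcal B^{\otimes k}$ is invariant under simultaneously permuting the subsystems and their measurement assignments; equivalently, the $k$ subsystems measured with $\mathcal B$ may be regarded as a uniformly random size-$k$ subset of all $n+k$ subsystems. This is exactly the combinatorial structure of sampling without replacement, in which the empirical $B_1$-frequency on the random test block estimates the frequency on the untouched complement. Combining this with the single-subsystem bound lets me control, in expectation over the reduced state of the data block conditioned on the test information, the empirical $A_1$-frequency by $\gamma_{B_1\to A_1}$ evaluated at the test frequency.

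For the finite-sample fluctuations I would invoke a Serfling/Hoeffding bound for sampling without replacement, in the spirit of Lemma~\ref{lem:HoeffdingR}, which yields the exponential tail $e^{-k\delta^2}$ once the two $\delta$-slacks are inserted: the inner slack (inside $\gamma$, used with monotonicity) passes from the test frequency to a safe overestimate of the population frequency, and the outer slack absorbs the concentration of the data block around its conditional mean. The polynomial prefactor $8k^{3/2}$ arises from discretising: the argument of $\gamma$ is the random quantity $(x_{n+1}+\dots+x_{n+k})/k$, which takes only the $k+1$ values $0,\tfrac1k,\dots,1$, so I would union-bound over these values, freezing $\gamma$ at a fixed argument in each branch, the resulting factor of order $k$ combining with the prefactor of the underlying Serfling estimate to give the stated $k^{3/2}$ growth. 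The main obstacle is genuinely quantum: because $\mathcal A$ and $\mathcal B$ are incompatible, no single classical exchangeable sequence underlies both blocks, and one cannot measure the data block with both POVMs. The entire argument therefore rests on the concavity of $\gamma$ together with permutation invariance, which certify that the \emph{observed} $\mathcal B$-statistics on the test block constrain the \emph{observed} $\mathcal A$-statistics on the data block even though the two are never measured jointly; making this conditioning rigorous for correlated, non-i.i.d.\@ permutation-invariant states is the delicate technical point, and is precisely what the argument of~\cite{renner2008finetti} supplies.
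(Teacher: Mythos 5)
The paper does not prove this lemma at all: it is imported verbatim as Lemma~III.1 of~\cite{renner2008finetti} and used as a black box, so your opening move --- simply invoking that reference --- is exactly what the paper does. Your subsequent three-stage sketch (monotonicity and concavity of $\gamma_{B_1\rightarrow A_1}$, reduction to sampling without replacement via permutation invariance, and a Serfling-type tail with a union bound over the $k+1$ possible test frequencies to produce the $8k^{3/2}$ prefactor) is a plausible reconstruction of the argument in that reference, but since you yourself defer the genuinely hard step --- making the conditioning rigorous for correlated permutation-invariant states measured with incompatible POVMs --- back to~\cite{renner2008finetti}, your proposal neither conflicts with nor goes beyond the paper's treatment.
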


\noindent This result is a refined version of Serfling's bound~\cite{serfling1974probability}. It relates the outcomes of a measurement on some subsystems of a symmetric state with the outcomes of a related measurement on the rest of the subsystems. With this technical Lemma, we derive in what follows a support estimation step for permutation-invariant states using samples from heterodyne detection.\\

Let $\rho^{n+k}$ be a state over $n+k$ subsystems. Applying a random permutation to this state and measuring its last $k$ subsystems with heterodyne detection is equivalent to measuring $k$ subsystems at random. We thus assume in the following that the state $\rho^{n+k}$ is a permutation-invariant state, without loss of generality, and that the verifier measures its last $k$ subsystems with heterodyne detection. \\

Let $\mathcal T=\{1-T,T\}$ and $\mathcal U=\{1-U,U\}$. Let $x_1,\dots,x_{n+k}$ the $(n+k)$-partite classical outcome of the measurement $\mathcal{U}^{\otimes n}\otimes \mathcal{T}^{\otimes k}$ applied to the permutation-invariant state $\rho^{n+k}$ sent by the prover. A value $x_i=1$ for $i\in1,\dots,n$ means that the projection of the $i^{th}$ subsystem onto $\bar{\mathcal H}$ failed, while a value $x_j=1$ for $j\in n+1,\dots,n+k$ means that the value $|\beta|^2$ obtained when measuring the $j^{th}$ subsystem with heterodyne detection was bigger than $E$. In particular, the number of values $\beta_i$ satisfying $|\beta_i|^2>E$, is expressed as $x_{n+1}+\dots+x_{n+k}$.
Let $\mathcal T_{\le s}^k$ be the event that at most $s$ of the $k$ values $\beta_i$ satisfy $|\beta_i|^2>E$, and let $\mathcal F_q^n$ be the event that the projection onto $\bar{\mathcal H}$ fails for more than $q$ subsystems of the remaining state $\rho^n$. Then:

\begin{lem} 
\be
\Pr\left[\mathcal F_q^n\cap\mathcal T_{\le s}^k\right]\le P_{\text{support}}.
\ee
where $P_{\text{support}}=8k^{3/2}\exp\left[{-\frac{k}9\left(\frac qn-\frac{2s}k\right)^2}\right]$.
\label{lem:supportpi}
\end{lem}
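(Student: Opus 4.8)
The plan is to reduce the statement to a single application of the refined Serfling bound of Lemma~\ref{lem:Serf2}, instantiated with the two binary POVMs $\mathcal A=\mathcal U=\{1-U,U\}$ acting on the $n$ remaining subsystems and $\mathcal B=\mathcal T=\{1-T,T\}$ acting on the $k$ subsystems measured by heterodyne detection. With this choice $A_1=U$ and $B_1=T$, so the comparison function appearing in the lemma is $\gamma_{B_1\rightarrow A_1}=\gamma_{T\rightarrow U}$, for which Eq.~(\ref{gammaTU}) already gives $\gamma_{T\rightarrow U}(\delta)\le 2\delta$ via the operator inequality $U\le 2T$. Writing $x_1,\dots,x_{n+k}$ for the outcomes of $\mathcal U^{\otimes n}\otimes\mathcal T^{\otimes k}$ applied to the permutation-invariant state $\rho^{n+k}$, I would first note that $\mathcal F_q^n$ is exactly the event $\{x_1+\dots+x_n>q\}$ (more than $q$ projection failures onto $\bar{\mathcal H}$ among the remaining subsystems), and $\mathcal T_{\le s}^k$ is exactly $\{x_{n+1}+\dots+x_{n+k}\le s\}$ (at most $s$ heterodyne values with $|\beta_i|^2>E$).

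The key step is to show that $\mathcal F_q^n\cap\mathcal T_{\le s}^k$ is contained in the bad event of Lemma~\ref{lem:Serf2} for the calibrated choice $\delta=\tfrac13\big(\tfrac qn-\tfrac{2s}k\big)$. Assume the regime $\tfrac qn>\tfrac{2s}k$, so that $\delta>0$ (otherwise $P_{\text{support}}\ge 8k^{3/2}\ge 1$ and the bound is trivial). On $\mathcal T_{\le s}^k$ we have $\tfrac{x_{n+1}+\dots+x_{n+k}}k\le\tfrac sk$, and since $\gamma_{T\rightarrow U}$ is non-decreasing in its argument (the feasible set in its defining supremum grows with the constraint) while $\gamma_{T\rightarrow U}(t)\le 2t$,
\be
\gamma_{T\rightarrow U}\!\left(\frac{x_{n+1}+\dots+x_{n+k}}k+\delta\right)+\delta\le 2\left(\frac sk+\delta\right)+\delta=\frac{2s}k+3\delta=\frac qn.
\ee
On $\mathcal F_q^n$ we have $\tfrac{x_1+\dots+x_n}n>\tfrac qn$, so on the intersection the inequality $\tfrac{x_1+\dots+x_n}n>\gamma_{T\rightarrow U}\!\big(\tfrac{x_{n+1}+\dots+x_{n+k}}k+\delta\big)+\delta$ holds, which is precisely the event bounded by Lemma~\ref{lem:Serf2}.

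Applying Lemma~\ref{lem:Serf2} with this $\delta$ then yields
\be
\Pr\left[\mathcal F_q^n\cap\mathcal T_{\le s}^k\right]\le 8k^{3/2}e^{-k\delta^2}=8k^{3/2}\exp\left[-\frac k9\left(\frac qn-\frac{2s}k\right)^2\right]=P_{\text{support}},
\ee
the factor $9$ in the exponent coming from the $3\delta$ that accumulates in the composite threshold. The only ingredients are the monotonicity and linear bound on $\gamma$ furnished by Eq.~(\ref{gammaTU}), together with the sampling-without-replacement concentration already packaged inside Lemma~\ref{lem:Serf2} (which requires $n\ge 2k$). The main, and essentially only, obstacle is the bookkeeping of this containment: all the combinatorics of comparing a measurement on $k$ sampled subsystems to a virtual measurement on the $n$ remaining ones is delegated to Lemma~\ref{lem:Serf2}, so the work here is to translate the two events into the averaged outcomes $\tfrac1n\sum_{i\le n}x_i$ and $\tfrac1k\sum_{i>n}x_i$ and to pick $\delta$ so that $\tfrac{2s}k+3\delta$ lands exactly on $\tfrac qn$.
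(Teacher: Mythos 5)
Your proposal is correct and takes essentially the same route as the paper's proof: the same instantiation of Lemma~\ref{lem:Serf2} with $\mathcal A=\mathcal U$, $\mathcal B=\mathcal T$, the same linear bound $\gamma_{T\rightarrow U}(\delta)\le 2\delta$ coming from $U\le 2T$, the same calibrated choice $\delta_0=\frac13\left(\frac qn-\frac{2s}k\right)$, and the same containment of $\mathcal F_q^n\cap\mathcal T_{\le s}^k$ in the bad event of the refined Serfling bound. One small caveat: your parenthetical dismissal of the regime $\frac qn\le\frac{2s}k$ is inaccurate, since there $P_{\text{support}}=8k^{3/2}\exp\left[-\frac k9\left(\frac qn-\frac{2s}k\right)^2\right]$ need not be $\ge 1$ (the squared term is strictly positive when $\frac qn<\frac{2s}k$, so the bound can be exponentially small), but this does not separate your argument from the paper's, whose proof likewise needs $\delta_0>0$ to invoke Lemma~\ref{lem:Serf2} and thus silently assumes the same regime.
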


\begin{leftbar}
\begin{proof} With Eq.~(\ref{gammaTU}), we have for all $\delta>0$
\be
\gamma_{T\rightarrow U}\left(\frac{x_{n+1}+\dots+x_{n+k}}{k}+\delta\right)+\delta\le2\frac{x_{n+1}+\dots+x_{n+k}}{k}+3\delta.
\ee
Taking $\delta_0=\frac{1}{3}\left(\frac qn-\frac{2s}k\right)$ we obtain
\be
\gamma_{T\rightarrow U}\left(\frac{x_{n+1}+\dots+x_{n+k}}{k}+\delta_0\right)+\delta_0\le\frac qn+2\left(\frac{x_{n+1}+\dots+x_{n+k}}{k}-\frac sk\right),
\ee
so if $x_{1}+\dots+x_{n}>q$ and $x_{n+1}+\dots+x_{n+k}\le s$, then
\be
\gamma_{T\rightarrow U}\left(\frac{x_{n+1}+\dots+x_{n+k}}{k}+\delta_0\right)+\delta_0<\frac{x_{1}+\dots+x_{n}}{n}.
\ee
Hence,
\be
\ba
\Pr\left[\mathcal F_q^n\cap\mathcal T_{\le s}^k\right]&=\Pr\left[\left(x_{1}+\dots+x_{n}>q\right)\cap\left(x_{n+1}+\dots+x_{n+k}\le s\right)\right]\\
&\le\Pr\left[\left(\frac{x_{1}+\dots+x_{n}}{n}>\gamma_{T\rightarrow U}\left(\frac{x_{n+1}+\dots+x_{n+k}}{k}+\delta_0\right)+\delta_0\right)\right]\\
&\le8k^{3/2}e^{-k\delta_0^2}\\
&=8k^{3/2}\exp\left[{-\frac{k}9\left(\frac qn-\frac{2s}k\right)^2}\right],
\ea
\ee
where we used Lemma~\ref{lem:Serf2} for $\mathcal A=\mathcal U$ and $\mathcal B=\mathcal T$.

\end{proof}
\end{leftbar}

\noindent Recall that $\bar{\mathcal H}$ is the Hilbert space of states with at most $E$ photons, of dimension $E+1$.
For $q\le n$, let us define the set of permutation-invariant states over $n$ subsystems, with at most $q$ subsystems out of this lower dimensional subspace (introduced in~\cite{renner2009finetti}):
\be
\mathcal S^n_{\bar{\mathcal H}^{\otimes n-q}}:=\text{span }\underset\pi\bigcup\text{ }\pi\left(\bar{\mathcal H}^{\otimes n-q}\otimes\mathcal H^{\otimes q}\right)\pi^{-1},
\ee
where the union is taken over all permutations. Lemma~\ref{lem:supportpi} then gives
\be
\Pr\left[\mathcal F_q^n\cap\mathcal T_{\le s}^k\right]\le P_{\text{support}},
\ee
where $\mathcal F_q^n$ is the event that the projection of $\rho^n$ (the remaining state after the support estimation step) onto $\mathcal S^n_{\bar{\mathcal H}^{\otimes n-q}}$ fails, and where $P_{\text{support}}=8k^{3/2}\exp\left[{-\frac{k}9\left(\frac qn-\frac{2s}k\right)^2}\right]$. For $1\ll q\ll n$ and $q/n\ll s/k$, this implies that either $\rho^n$ has most of its subsystems in a lower dimensional subspace, or the score at the support estimation step is higher than $s$, with high probability.

\subsection{De Finetti reduction}
\label{app:deFinetti}

\noindent We recall in this section two results from~\cite{renner2009finetti}.

\begin{itemize}

\item The first result says that any permutation-invariant state with most of its subsystems in a lower dimensional subspace has a purification in the symmetric subspace that still has most of its subsystems in a lower dimensional subspace. Formally, for $n\in\mathbb N$, and given a Hilbert space $\mathcal K$, let us write Sym$^n(\mathcal K)=\{\phi\in\mathcal K^{\otimes n},\text{ }\pi\phi=\phi\text{ }(\forall \pi)\}$ the symmetric subspace of a Hilbert space $\mathcal K^{\otimes n}$, then (Lemma 3 of~\cite{renner2009finetti}):

\begin{lem} For all $q\le n$, any permutation-invariant state $\rho^n\in\mathcal S^n_{\bar{\mathcal H}^{\otimes n-q}}$ has a purification $\tilde\rho^n$ in \mbox{$\mathrm{Sym}^n(\mathcal H\otimes\mathcal H)\bigcap\mathcal S^n_{(\bar{\mathcal H}\otimes\bar{\mathcal H})^{\otimes n-2q}}$}.
\label{lem:purif}
\end{lem}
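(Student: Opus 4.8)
The plan is to construct the purification explicitly as the canonical (``vectorized'') purification of $\sqrt{\rho^n}$ and then check the two required properties separately. Let $\{\ket a\}$ denote the Fock product basis of $\mathcal H^{\otimes n}$ and let $\ket\Omega=\sum_a\ket a\otimes\ket a$ be the (unnormalised) maximally entangled vector on $\mathcal H^{\otimes n}\otimes\mathcal H^{\otimes n}$, the second tensor factor playing the role of the purifying register. I would define the candidate purification by $\ket\Psi=(\sqrt{\rho^n}\otimes\mathbb 1)\ket\Omega$ and set $\tilde\rho^n=\ket\Psi\!\bra\Psi$. A one-line computation using $\Tr_B\ket\Omega\!\bra\Omega=\mathbb 1$ shows $\Tr_B\tilde\rho^n=\sqrt{\rho^n}\,\mathbb 1\,\sqrt{\rho^n}=\rho^n$, so $\ket\Psi$ is indeed a purification of $\rho^n$; the whole task is then to verify the two membership claims for this particular choice.

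For the symmetric-subspace property, the key tool is the ``ricochet'' identity $(M\otimes\mathbb 1)\ket\Omega=(\mathbb 1\otimes M^T)\ket\Omega$, valid for any operator $M$. A permutation $\pi\in S_n$ acting on $(\mathcal H\otimes\mathcal H)^{\otimes n}$ by permuting the $n$ pairs corresponds to the diagonal operator $V_\pi=\pi\otimes\pi$ on $\mathcal H^{\otimes n}\otimes\mathcal H^{\otimes n}$. I would then compute
\[
V_\pi\ket\Psi=(\pi\sqrt{\rho^n}\otimes\mathbb 1)(\mathbb 1\otimes\pi)\ket\Omega=(\pi\sqrt{\rho^n}\pi^{-1}\otimes\mathbb 1)\ket\Omega,
\]
where in the second step I use the ricochet identity together with $\pi^T=\pi^{-1}$ (permutation matrices are real orthogonal). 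Since $\rho^n$ is permutation-invariant, so is its positive square root, i.e.\ $\pi\sqrt{\rho^n}\pi^{-1}=\sqrt{\rho^n}$, and hence $V_\pi\ket\Psi=\ket\Psi$ for every $\pi$. This shows $\ket\Psi\in\mathrm{Sym}^n(\mathcal H\otimes\mathcal H)$.

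For the support property, I would note that $\sqrt{\rho^n}$ is Hermitian positive semidefinite, so both its column space and its row space equal $\mathrm{supp}(\rho^n)$, which by hypothesis is contained in $\mathcal S^n_{\bar{\mathcal H}^{\otimes n-q}}$. Reading off $\ket\Psi=\sum_{a,b}(\sqrt{\rho^n})_{ba}\ket b_A\otimes\ket a_B$, a nonzero coefficient forces $\ket b_A$ and $\ket a_B$ into $\mathrm{supp}(\rho^n)$, so $\ket\Psi\in\mathcal S^n_{\bar{\mathcal H}^{\otimes n-q}}\otimes\mathcal S^n_{\bar{\mathcal H}^{\otimes n-q}}$ (using that these subspaces, being spanned by Fock vectors, are stable under complex conjugation). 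The final combinatorial step is to expand each factor over a product spanning set: vectors with at most $q$ of their $n$ single-mode factors outside $\bar{\mathcal H}$. Regrouping the $A$- and $B$-factors pairwise, a pair $\ket{a_i}\!\ket{b_i}$ lies outside $\bar{\mathcal H}\otimes\bar{\mathcal H}$ only if $a_i$ or $b_i$ is excited, so the number of excited pairs is at most $q+q=2q$; hence every such product vector lies in $\mathcal S^n_{(\bar{\mathcal H}\otimes\bar{\mathcal H})^{\otimes n-2q}}$, and therefore so does $\ket\Psi$.

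The main obstacle I anticipate is purely in the bookkeeping of the second register: one must fix a convention so that the purifying system is a genuine copy of $\mathcal H$ (so that $\bar{\mathcal H}\otimes\bar{\mathcal H}$ is meaningful), and then track carefully how the vectorization interacts with transposition and conjugation, since a careless choice of convention can break either the $V_\pi$-invariance (wrong appearance of $\pi^T$ versus $\pi$) or the stability of $\mathcal S^n_{\bar{\mathcal H}^{\otimes n-q}}$ under conjugation. Everything else—the purification check, the ricochet manipulation, and the $q+q=2q$ count—is routine once these conventions are pinned down.
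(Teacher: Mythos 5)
The paper never proves this lemma internally: it is imported verbatim as Lemma~3 of the cited infinite-dimensional de Finetti reference~\cite{renner2009finetti}, so the only thing to compare against is the standard argument behind that citation. Your construction is exactly that standard argument --- canonical purification $(\sqrt{\rho^n}\otimes\mathbb{1})\ket\Omega$, permutation invariance via the ricochet identity plus uniqueness of the positive square root, and the $q+q=2q$ count over spanning product vectors --- and it is correct in outline. Two steps, however, need repair, both local. First, in infinite dimension $\ket\Omega=\sum_a\ket a\otimes\ket a$ is not a vector of $\mathcal H^{\otimes n}\otimes\mathcal H^{\otimes n}$, so you should define $\ket\Psi:=\sum_a\bigl(\sqrt{\rho^n}\ket a\bigr)\otimes\ket a$ directly; this series converges in norm because the cross terms vanish and $\sum_a\braket{a|\rho^n|a}=\Tr(\rho^n)=1$, and all your algebraic manipulations then go through on $\ket\Psi$ itself.

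Second, and more substantively, the sentence ``a nonzero coefficient $(\sqrt{\rho^n})_{ba}$ forces $\ket b_A$ and $\ket a_B$ into $\mathrm{supp}(\rho^n)$'' is false as stated: for $\rho=\ket+\!\bra+$ with $\ket+=(\ket 0+\ket 1)/\sqrt2$ one has $\braket{0|\sqrt\rho|1}\neq0$ although $\ket 0\notin\mathrm{supp}(\rho)$. A nonzero matrix element only says the basis vector has nonzero overlap with the support, not membership in it. The conclusion you want does survive, and follows from the facts you already invoke. Let $P$ be the orthogonal projector onto $\mathrm{supp}(\rho^n)$; since the column and row spaces of $\sqrt{\rho^n}$ equal $\mathrm{supp}(\rho^n)$, we have $\sqrt{\rho^n}=P\sqrt{\rho^n}=\sqrt{\rho^n}P$. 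Hence $\ket\Psi=(P\otimes\mathbb{1})\ket\Psi$, and applying the ricochet identity to the right factor of $P$ gives $\ket\Psi=(\mathbb{1}\otimes P^T)\ket\Psi$, so $\ket\Psi=(P\otimes P^T)\ket\Psi$ lies in $\mathrm{supp}(\rho^n)\otimes\overline{\mathrm{supp}(\rho^n)}\subseteq\mathcal S^n_{\bar{\mathcal H}^{\otimes n-q}}\otimes\mathcal S^n_{\bar{\mathcal H}^{\otimes n-q}}$, where the inclusion of the conjugated factor uses precisely the conjugation-stability of $\mathcal S^n_{\bar{\mathcal H}^{\otimes n-q}}$ that you noted (it is spanned by vectors whose factors lie in Fock-basis-spanned subspaces). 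With this substitution your final bookkeeping step --- regrouping a spanning product vector of $\mathcal S^n_{\bar{\mathcal H}^{\otimes n-q}}\otimes\mathcal S^n_{\bar{\mathcal H}^{\otimes n-q}}$ into $n$ pairs, of which at most $2q$ can fail to lie in $\bar{\mathcal H}\otimes\bar{\mathcal H}$ --- is sound and completes the proof.
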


\end{itemize}

\noindent The states of the form $\ket v^{\otimes n}$ are the so-called \textit{i.i.d.\@ states}. For all $n,r\ge0$ and all $\ket v\in\bar{\mathcal H}\otimes\bar{\mathcal H}$, the set of \textit{almost-i.i.d.\@ states along $\ket v$}, $\mathcal S^n_{v^{\otimes n-r}}$, is defined as the span of all vectors that are, up to reorderings, of the form $\ket v^{\otimes n-r}\otimes\ket\phi$, for an arbitrary $\phi\in\left(\mathcal H\otimes\mathcal H\right)^{\otimes r}$. In the following, we simply refer to these states as \textit{almost-i.i.d.\@ states} (which becomes relevant when $r\ll n$).

\begin{itemize}

\item The second result is a de Finetti theorem for states in $\mathrm{Sym}^n(\mathcal H\otimes\mathcal H)\bigcap\mathcal S^n_{(\bar{\mathcal H}\otimes\bar{\mathcal H})^{\otimes n-2q}}$, which says that reduced states from them are well approximated by mixtures of almost-i.i.d.\@ states. Formally (Theorem 4 of~\cite{renner2009finetti}, applied to $\mathcal K=\mathcal H\otimes\mathcal H$ and $\bar{\mathcal K}=\bar{\mathcal H}\otimes\bar{\mathcal H}$, with dim$(\bar{\mathcal K})=(E+1)^2$):

\begin{theo}
Let $\tilde\rho^n\in\mathrm{Sym}^n(\mathcal H\otimes\mathcal H)\bigcap\mathcal S^n_{(\bar{\mathcal H}\otimes\bar{\mathcal H})^{\otimes n-2q}}$ and let $\tilde\rho^{n-4q}=\mathrm{Tr}_{4q}(\tilde\rho^n)$. Then, there exist a finite set $\mathcal V$ of unit vectors $\ket v\in\bar{\mathcal H}\otimes\bar{\mathcal H}$, a probability distribution $\{p_v\}_{v\in\mathcal V}$ over $\mathcal V$, and almost-i.i.d.\@ states $\tilde\rho_v^{n-4q}\in\mathcal S^{n-4q}_{v^{\otimes n-8q}}$ such that
\be
F\left(\tilde\rho^{n-4q},\sum_{v\in\mathcal V}{p_v\tilde\rho_v^{n-4q}}\right)>1-q^{(E+1)^2}\exp\left[{-\frac{4q(q+1)}{n}}\right].
\ee
\label{thDeFinetti}
\end{theo}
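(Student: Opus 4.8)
The plan is to obtain this statement not by an independent argument but as a direct specialisation of the finite de Finetti theorem of~\cite{renner2009finetti} (their Theorem 4), so the work is to set up the correct parameter dictionary and verify that the generic bound instantiates to the stated one. First I would identify the single-system space with $\mathcal K=\mathcal H\otimes\mathcal H$ and the distinguished finite-dimensional subspace with $\bar{\mathcal K}=\bar{\mathcal H}\otimes\bar{\mathcal H}$, so that $d:=\dim\bar{\mathcal K}=(E+1)^2$. The hypothesis $\tilde\rho^n\in\mathrm{Sym}^n(\mathcal K)\cap\mathcal S^n_{\bar{\mathcal K}^{\otimes n-2q}}$ matches precisely the input required there, with the number of subsystems permitted outside $\bar{\mathcal K}$ set to $2q$ (this is exactly the object produced by the purification in Lemma~\ref{lem:purif}), and the reduction then traces out $4q$ of the $n$ subsystems to form $\tilde\rho^{n-4q}=\mathrm{Tr}_{4q}(\tilde\rho^n)$.

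Second, I would read off the conclusion of the cited theorem under this substitution: it yields a finite set $\mathcal V\subset\bar{\mathcal K}$ of unit vectors, a probability distribution $\{p_v\}_{v\in\mathcal V}$, and almost-i.i.d.\ states $\tilde\rho_v^{n-4q}\in\mathcal S^{n-4q}_{v^{\otimes n-8q}}$, together with a fidelity lower bound of the generic shape $1-d^{\,c}\exp[-\Theta(q^2/n)]$. The only genuine computation is to confirm that the explicit constants of~\cite{renner2009finetti} collapse to the prefactor $q^{(E+1)^2}$ and the exponent $4q(q+1)/n$ when one plugs in $d=(E+1)^2$ together with the chosen truncation and trace-out sizes; this is a substitution and simplification.

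The main obstacle is thus bookkeeping rather than conceptual, namely keeping the several ``defect'' parameters mutually consistent: the $2q$ subsystems allowed outside $\bar{\mathcal K}$ in the input, the $4q$ subsystems traced out, and the resulting almost-i.i.d.\ description which retains $n-8q$ genuinely i.i.d.\ systems out of the $n-4q$ remaining (a defect of $4q$). One must check that for exactly these choices~\cite{renner2009finetti} delivers the stated exponent and polynomial-in-$q$ prefactor, with no off-by-constant slack. If instead one wished to prove the underlying bound from scratch, the route would be Renner's symmetric-subspace argument: use that $\tilde\rho^n$ lives in $\mathrm{Sym}^n(\mathcal K)$, approximate it by an overcomplete integral over i.i.d.\ pure states $\ket v^{\otimes(n-8q)}$ supported on the low-dimensional $\bar{\mathcal K}$, and control the error by the dimension $d=(E+1)^2$ of $\bar{\mathcal K}$ and by the concentration factor $\exp[-\Theta(q^2/n)]$ arising from the discarded systems; there the hard part is the dimension-dependent covering/representation estimate that produces the $d$-exponential prefactor $q^{(E+1)^2}$.
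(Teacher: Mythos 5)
Your proposal matches the paper's treatment exactly: the paper does not prove this statement independently, but obtains it precisely as you describe, by instantiating Theorem 4 of~\cite{renner2009finetti} with $\mathcal K=\mathcal H\otimes\mathcal H$ and $\bar{\mathcal K}=\bar{\mathcal H}\otimes\bar{\mathcal H}$, so that $\dim\bar{\mathcal K}=(E+1)^2$, with the same parameter bookkeeping ($2q$ subsystems outside $\bar{\mathcal K}$ in the input, $4q$ traced out, $n-8q$ i.i.d.\ subsystems remaining). Your identification of the prefactor $q^{(E+1)^2}$ and the exponent $4q(q+1)/n$ as direct substitutions into the generic bound is exactly how the paper uses the cited result.
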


\end{itemize}

\noindent Given a state $\rho^n\in\mathcal S^n_{\bar{\mathcal H}^{\otimes n-q}}$, applying Theorem~\ref{thDeFinetti} to the purification $\tilde\rho^n$ given by Lemma~\ref{lem:purif} shows that the reduced state $\tilde\rho^{n-4q}$ is close in fidelity to a mixture of states that are i.i.d.\@ on $n-8q$ subsystems.

\subsection{Hoeffding inequality for almost-i.i.d.\@ states}
\label{app:almost-i.i.d.}

\noindent We recall here Lemma~\ref{lem:HoeffdingR}, in the context of a product measurement applied to an i.i.d.\@ state $\ket v\bra v^{\otimes n}$:

\begin{lem}\textbf{(Hoeffding)} 
Let $M>0\in\mathbb R$ and let $f:\mathbb C\mapsto\mathbb R$ be a function bounded as $|f(\alpha)|<M$ for all $\alpha\in\mathbb C$. Let $\lambda>0$, let $p\in\mathbb N^*$, and let $\ket v\in \mathcal H$. Let $\mathcal M=\{\mathcal M_\alpha\}_{\alpha\in\mathbb C}$ be a POVM on $\mathcal H$ and let $D_{\ket v}$ be the probability density function of the outcomes of the measurement $\mathcal M$ applied to $\ket v\bra v$. Then
\begin{equation}
\underset{\bm{\alpha}}{\Prb}\left[\left|\frac{1}{p}\sum_{i=1}^p{f(\alpha_i)}- \underset{\beta\leftarrow D_{\mathrlap{\ket v}}}{\mathbb E}[f(\beta)]\right|\ge\lambda\right] 
\leq 2\exp\left[{-\frac{p\lambda^2}{2M^2}}\right],
\end{equation}
where the probability is taken over the outcomes $\bm{\alpha}=(\alpha_1,\dots,\alpha_p)$ of the product measurement $\mathcal M^{\otimes p}$ applied to $\ket v\bra v^{\otimes p}$.
\label{lem:Hoeffding2}
\end{lem}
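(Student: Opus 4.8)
The plan is to reduce this statement directly to the classical Hoeffding inequality already recorded as Lemma~\ref{lem:HoeffdingR}, by exploiting the product structure of both the state and the measurement. The essential observation is that when the product measurement $\mathcal M^{\otimes p}$ is applied to the i.i.d.\@ state $\ket v\bra v^{\otimes p}$, the outcomes $\alpha_1,\dots,\alpha_p$ are independent and identically distributed, each with probability density $D_{\ket v}$.

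First I would make this factorisation precise. By Born's rule, the joint probability density of the outcome $\bm\alpha=(\alpha_1,\dots,\alpha_p)$ is

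\be
\Tr\left[\left(\mathcal M_{\alpha_1}\otimes\dots\otimes\mathcal M_{\alpha_p}\right)\ket v\bra v^{\otimes p}\right]=\prod_{i=1}^p\Tr\left(\mathcal M_{\alpha_i}\ket v\bra v\right)=\prod_{i=1}^p D_{\ket v}(\alpha_i),
\ee

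where the middle equality uses the tensor product structure of $\ket v\bra v^{\otimes p}$ together with linearity of the trace, and the last equality is precisely the definition of $D_{\ket v}$ as the outcome density of the single-copy measurement $\mathcal M$ applied to $\ket v\bra v$. Since the joint density factorises into identical marginals, this establishes that the $\alpha_i$ are i.i.d.\@ samples drawn from $D_{\ket v}$.

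Next I would observe that the random variables $f(\alpha_1),\dots,f(\alpha_p)$ are therefore real, i.i.d., and bounded in absolute value by $M$ by hypothesis. With this in hand, the claim is exactly the conclusion of Lemma~\ref{lem:HoeffdingR} applied to the probability density $D=D_{\ket v}$, sample size $n=p$, and bound $M$; substituting these into the tail bound of that lemma yields the stated estimate $2\exp[-p\lambda^2/(2M^2)]$. There is no substantial obstacle here: the only point requiring care is verifying that the product measurement on the i.i.d.\@ state genuinely produces classical outcomes that are both independent and identically distributed, which is the factorisation carried out in the first step. Once that is secured, the statement is a verbatim instance of the classical Hoeffding bound and no further estimation is needed. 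This lemma will then serve as the i.i.d.\@ base case that the subsequent analysis extends to the almost-i.i.d.\@ states arising from the de Finetti reduction.
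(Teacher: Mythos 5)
Your proposal is correct and matches the paper's treatment: the paper states this lemma as an immediate restatement of Lemma~\ref{lem:HoeffdingR} in the setting of a product measurement on an i.i.d.\@ state, which is exactly the reduction you carry out, with the added benefit that you make the Born-rule factorisation of the joint outcome density explicit. (One small wording fix: the factorisation $\Tr\left[\left(\mathcal M_{\alpha_1}\otimes\dots\otimes\mathcal M_{\alpha_p}\right)\ket v\!\bra v^{\otimes p}\right]=\prod_{i=1}^p\Tr\left(\mathcal M_{\alpha_i}\ket v\!\bra v\right)$ follows from multiplicativity of the trace over tensor products, not from its linearity.)
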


\noindent The next result gives an equivalent statement for almost-i.i.d.\@ states along a state $\ket v$, measured with a product measurement. It generalises Theorem 4.5.2 of~\cite{renner2008security}, where the probability distributions over finite sets, corresponding to product measurements with finite number of outcomes, are replaced by continuous variable probability densities, corresponding to product measurements with continuous variable outcomes. Frequencies estimators are also replaced with estimators of expected values of bounded functions. We will use this result for the POVM corresponding to a product heterodyne detection.

\begin{lem} 
Let $M>0\in\mathbb R$ and let $f:\mathbb C\mapsto\mathbb R$ be a function bounded as $|f(\alpha)|\le M$ for all $\alpha\in\mathbb C$. Let $\mu>0$ and $1\le m\le r<t$ such that
\be
(t-m)\mu>2Mr.
\ee
Let also $\ket v\in\bar{\mathcal H}$ and $\ket\Phi\in\mathcal S^t_{v^{\otimes t-r}}$. Let $\mathcal M=\{\mathcal M_\alpha\}_{\alpha\in\mathbb C}$ be a POVM on $\mathcal H$ and let $D_{\ket v}$ be the probability density function of the outcomes of the measurement $\mathcal M$ applied to $\ket v\bra v$. Then
\begin{equation}
\underset{\bm{\alpha}}{\Prb}\left[\left|\frac{1}{t-m}\sum_{i=1}^{t-m}{f(\alpha_i)}- \underset{\beta\leftarrow D_{\ket v}}{\mathbb E}[f(\beta)]\right|\ge\mu\right] \leq 2\binom{t}{r}\exp\left[{-\frac{t-r}2\left(\frac{\mu}{M}-\frac{2r}{t-m}\right)^2}\right],
\end{equation}
where the probability is taken over the outcomes $\bm{\alpha}=(\alpha_1,\dots,\alpha_{t-m})$ of the product measurement $\mathcal M^{\otimes t-m}$ applied to $\ket\Phi\bra\Phi$.
\label{lem:almosti.i.d.}
\end{lem}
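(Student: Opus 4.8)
The plan is to reduce this statement to the i.i.d.\@ Hoeffding inequality of Lemma~\ref{lem:Hoeffding2}, promoting it from genuine i.i.d.\@ states $\ket v\bra v^{\otimes t}$ to almost-i.i.d.\@ states $\ket\Phi\in\mathcal S^t_{v^{\otimes t-r}}$. This is the continuous-variable analogue of Theorem~4.5.2 of~\cite{renner2008security}: there the product measurement has finitely many outcomes and one controls the empirical frequencies, whereas here the outcomes are complex numbers and the quantity controlled is the empirical mean $\frac1{t-m}\sum_i f(\alpha_i)$ of a bounded function. The guiding intuition is that $\ket\Phi$ behaves like $\ket v^{\otimes t}$ except on at most $r$ of its $t$ subsystems, so the measured empirical mean should concentrate around $\underset{\beta\leftarrow D_{\ket v}}{\mathbb E}[f(\beta)]$ up to the perturbation caused by those anomalous subsystems.

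Concretely, I would proceed as follows. First, recall that by definition $\mathcal S^t_{v^{\otimes t-r}}$ is spanned by the vectors obtained, up to reordering, from $\ket v^{\otimes t-r}\otimes\ket\phi$; equivalently, for each $(t-r)$-subset $\bar R$ of the $t$ subsystems there is a component on which the positions in $\bar R$ carry $\ket v$ and the $r$ positions in $R$ are arbitrary. I would take a union bound over the $\binom{t}{r}$ choices of the anomalous set $R$. For a fixed $R$, the $t-r$ good subsystems are in the state $\ket v$, so measuring any of them with $\mathcal M$ produces i.i.d.\@ samples from $D_{\ket v}$, to which Lemma~\ref{lem:Hoeffding2} applies. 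Second, I would bound the contamination: since $\lvert f\rvert\le M$, each of the at most $r$ anomalous outcomes shifts the empirical mean over the $t-m$ measured samples by at most $2M/(t-m)$, so the full empirical mean differs from the mean over the good measured outcomes by at most $2Mr/(t-m)$. Hence the bad event $\bigl\{\lvert\frac1{t-m}\sum_i f(\alpha_i)-\mathbb E_{D_{\ket v}}[f]\rvert\ge\mu\bigr\}$ is contained, for each fixed $R$, in the event that the good mean deviates by at least $\mu-2Mr/(t-m)$. The hypothesis $(t-m)\mu>2Mr$ guarantees this shifted deviation is positive, so Lemma~\ref{lem:Hoeffding2} yields a bound $2\exp[-\tfrac{t-r}2(\mu/M-2r/(t-m))^2]$ per subset; the prefactor $\binom{t}{r}$ then comes from the union bound, giving exactly the claimed estimate.

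The main obstacle is that $\ket\Phi$ is a genuine quantum superposition over the different anomalous-position sets $R$, so one cannot literally condition on ``which $r$ subsystems are bad'' and sum the corresponding probabilities as disjoint classical branches. The rigorous route, following~\cite{renner2008finetti,renner2008security}, is to phrase the bad event as a measurement operator $\mathcal B\otimes I^{\otimes m}$ on $\mathcal H^{\otimes t}$ and to establish the estimate at the level of $\bra\Phi(\mathcal B\otimes I^{\otimes m})\ket\Phi$, using an operator inequality that dominates this quantity on $\mathcal S^t_{v^{\otimes t-r}}$ by $\binom{t}{r}$ times the worst per-subset i.i.d.\@ bad-event probability computed from Lemma~\ref{lem:Hoeffding2}. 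Carrying out this operator-level union bound, and in particular verifying that the effective number of i.i.d.\@ good samples entering the exponent is $t-r$ rather than the naive count of measured good positions, is the technical heart of the argument; the boundedness $\lvert f\rvert\le M$ and the constraint $m\le r<t$ are exactly what make the contamination shift and the combinatorial bookkeeping close.
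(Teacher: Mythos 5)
Your proposal follows the same architecture as the paper's proof --- decompose the almost-i.i.d.\@ state into components that are i.i.d.\@ on the good positions, bound the contamination of the empirical mean by $2rM/(t-m)$, apply Lemma~\ref{lem:Hoeffding2} at the shifted deviation $\mu-2rM/(t-m)$, and pay an overall factor $\binom{t}{r}$ --- and all of your quantitative bookkeeping matches the target bound. The problem is that the two steps you defer as ``the technical heart'' are precisely the content of the proof, so as written there is a genuine gap.

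The paper closes the first gap (superposition versus classical conditioning) with two quoted results from~\cite{renner2008security}: Lemma~4.1.6, which writes any $\ket\Phi\in\mathcal S^t_{v^{\otimes t-r}}$ as $\sum_{s\in\mathcal S}\gamma_s\,\pi_s\left(\ket v^{\otimes t-r}\otimes\ket{\tilde\Phi^s}\right)$ with $|\mathcal S|\le\binom{t}{r}$, and Lemma~4.5.1, the quadratic-form inequality $\braket{\psi|A|\psi}\le|\mathcal X|\sum_x\braket{\psi^x|A|\psi^x}$ for non-negative $A$, applied to the POVM element of the bad event $\Omega_\mu$. So the prefactor $\binom{t}{r}$ really arises as the superposition penalty $|\mathcal S|$ in this inequality, not as a classical union bound over disjoint choices of the anomalous set (which, as you yourself note, the quantum structure forbids). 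The second gap is quantitative: your contamination estimate compares the measured mean with the mean over the \emph{measured} good outcomes, of which there may be as few as $t-m-r$ (or fewer, since $r\le t-m$ is not guaranteed), and Hoeffding applied to those samples yields an exponent proportional to $t-m-r$, strictly weaker than the claimed $t-r$. The paper instead compares the measured mean with the empirical mean over \emph{all} $t-r$ good positions of the component --- formally extending the measurement to all $t$ subsystems, which leaves the distribution of the first $t-m$ outcomes unchanged --- and that comparison (Eq.~(\ref{boundfreq}), where the hypothesis $r\ge m$ is what is actually used) is what places exactly $t-r$ i.i.d.\@ samples inside the Hoeffding bound. Without these two pieces, your argument does not reach the stated inequality.
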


\noindent In essence, this lemma says that a product measurement on all but $m$ subsystems of an almost-i.i.d.\@ state along a state $\ket v$ will yield statistics that are similar to the ones that would be obtained by measuring the i.i.d.\@ state $\ket v^{\otimes t-m}$.

\begin{leftbar}
\begin{proof} 
$\ket\Phi\in\mathcal S^t_{v^{\otimes t-r}}$, so by Lemma 4.1.6 of~\cite{renner2008security}, there exist a finite set $\mathcal S$ of size at most $\binom{t}{r}$, a family of states $\ket{\tilde\Phi^s}\in\mathcal H^{\otimes r}$ for $s\in\mathcal S$, complex amplitudes $\{\gamma_s\}_{s\in\mathcal S}$ and permutations $\{\pi_s\}_{s\in\mathcal S}$ over $[1,\dots,t]$ such that
\be
\ba
\ket\Phi&:=\sum_{s\in\mathcal S}{\gamma_s\ket{\Phi^s}}\\
&=\sum_{s\in\mathcal S}{\gamma_s\pi_s\left(\ket v^{\otimes t-r}\otimes\ket{\tilde\Phi^s}\right)}.
\ea
\label{Psidec}
\ee
With the notations of the Lemma, let us define for $\mu>0$:
\be
\Omega_\mu=\left\{\bm{\alpha}\in\mathbb C^{t-m},\left|\frac{1}{t-m}\sum_{i=1}^{t-m}{f(\alpha_i)}- \underset{\beta\leftarrow D_{\ket v}}{\mathbb E}[f(\beta)]\right|>\mu\right\}.
\ee
We recall here Lemma of 4.5.1 of~\cite{renner2008security}:

\begin{lem}
Let $|\mathcal X|$ be a finite set and $\ket\psi=\sum_{x\in\mathcal X}{\ket{\psi^x}}$, and let $A$ be a non-negative operator. Then
\be
\braket{\psi|A|\psi}\le|\mathcal X|\sum_{x\in\mathcal X}{\braket{\psi^x|A|\psi^x}}.
\ee
\end{lem}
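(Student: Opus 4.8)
The plan is to use the non-negativity of $A$ to factorise it and then reduce the bound to an elementary Cauchy--Schwarz estimate on a sum of $|\mathcal X|$ vectors. Since $A\ge0$, the continuous functional calculus provides a bounded operator $B=\sqrt A$ with $A=B^\dag B$, so that $\braket{\phi|A|\phi}=\left\|B\ket\phi\right\|^2$ for every vector $\ket\phi$. Applying this to $\ket\psi=\sum_{x\in\mathcal X}\ket{\psi^x}$ and using linearity of $B$ gives
\be
\braket{\psi|A|\psi}=\left\|\sum_{x\in\mathcal X}B\ket{\psi^x}\right\|^2.
\ee

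First I would bound the right-hand side by the triangle inequality for the Hilbert-space norm, namely $\left\|\sum_{x}B\ket{\psi^x}\right\|\le\sum_{x}\left\|B\ket{\psi^x}\right\|$, which after squaring yields
\be
\braket{\psi|A|\psi}\le\left(\sum_{x\in\mathcal X}\left\|B\ket{\psi^x}\right\|\right)^2.
\ee
Then I would apply the discrete Cauchy--Schwarz inequality to the $|\mathcal X|$ non-negative reals $\left\|B\ket{\psi^x}\right\|$, writing each as $1\cdot\left\|B\ket{\psi^x}\right\|$, to obtain
\be
\left(\sum_{x\in\mathcal X}\left\|B\ket{\psi^x}\right\|\right)^2\le|\mathcal X|\sum_{x\in\mathcal X}\left\|B\ket{\psi^x}\right\|^2.
\ee
Finally, identifying $\left\|B\ket{\psi^x}\right\|^2=\braket{\psi^x|B^\dag B|\psi^x}=\braket{\psi^x|A|\psi^x}$ and chaining the three displays proves the claim.

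Equivalently, and perhaps more cleanly, one can avoid the explicit square root: since $A\ge0$, the sesquilinear form $\langle\phi,\chi\rangle_A:=\braket{\phi|A|\chi}$ is positive semidefinite, so the map $\ket\phi\mapsto\sqrt{\braket{\phi|A|\phi}}$ is a seminorm, the triangle inequality being exactly Cauchy--Schwarz for this form. The two displayed inequalities above are then just the triangle inequality and the standard bound $(\sum_x a_x)^2\le|\mathcal X|\sum_x a_x^2$ applied to the $A$-seminorm of $\ket\psi=\sum_x\ket{\psi^x}$.

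There is no serious obstacle here: the statement is a purely elementary convexity inequality, and the factor $|\mathcal X|$ is precisely the loss incurred when passing from the $\ell_1$ to the $\ell_2$ norm on $|\mathcal X|$ terms. The only point requiring a word of care is the existence of $B=\sqrt A$ in the infinite-dimensional setting, which is guaranteed by the functional calculus for the bounded non-negative operator $A$; the seminorm formulation sidesteps even this, since it never refers to $B$.
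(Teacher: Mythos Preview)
Your proof is correct. Note, however, that the paper does not supply its own proof of this lemma: it merely recalls it as Lemma~4.5.1 of Renner's thesis and uses it as a black box. Your argument---factor through the seminorm $\ket\phi\mapsto\sqrt{\braket{\phi|A|\phi}}$ induced by the non-negative operator, apply the triangle inequality, then the $\ell_1$--$\ell_2$ bound $(\sum_x a_x)^2\le|\mathcal X|\sum_x a_x^2$---is the standard proof and is exactly what one finds in Renner's thesis, so there is nothing to compare.
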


\noindent In particular, using Eq.~(\ref{Psidec}) and this Lemma when $A$ is a POVM element of the product measurement \mbox{$\mathcal M_{\bm\alpha}\equiv\mathcal M_{\alpha_1}\otimes\dots\otimes\mathcal M_{\alpha_{t-m}}$}, we obtain:
\be
\ba
\underset{\bm{\alpha}\leftarrow\ket\Phi}{\Prb}[\bm{\alpha}\in\Omega_\mu]&=\int_{\mathrlap{\Omega_\mu}}{\braket{\Phi|\mathcal M_{\bm{\alpha}}|\Phi}d^{2(t-m)}\bm{\alpha}}\\
&\le\int_{\mathrlap{\Omega_\mu}}{|\mathcal S|\sum_{s\in\mathcal S}{|\gamma_s|^2\braket{\Phi^s|\mathcal M_{\bm{\alpha}}|\Phi^s}}d^{2(t-m)}\bm{\alpha}}\\
&\le|\mathcal S|\sum_{s\in\mathcal S}{|\gamma_s|^2\int_{\mathrlap{\Omega_\mu}}{\braket{\Phi^s|\mathcal M_{\bm{\alpha}}|\Phi^s}}d^{2(t-m)}\bm{\alpha}}\\
&=|\mathcal S|\sum_{s\in\mathcal S}{|\gamma_s|^2\underset{\bm{\alpha}\leftarrow\ket{\Phi^s}}{\Prb}[\bm{\alpha}\in\Omega_\mu]},
\ea
\label{PrOmega}
\ee
where we write $\bm{\alpha}\leftarrow\ket\chi$  to indicate that $\bm{\alpha}=(\alpha_1,\dots,\alpha_{t-m})$ is distributed according to the outcomes of the product measurement $\mathcal M^{\otimes(t-m)}$ applied to $\ket\chi$.\\ \\

Let $\bm{\alpha}\leftarrow\ket{\Phi^s}$. We have $\ket{\Phi^s}=\pi_s(\ket v^{\otimes t-r}\otimes\ket{\tilde\Phi^s})$, and in particular $(\alpha_{\pi_s(1)},\dots,\alpha_{\pi_s(t-r)})$ is distributed according to the outcomes of the product measurement $\mathcal M^{\otimes t-r}$ applied to $\ket{v}^{\otimes t-r}$. 
We also have, for $|f|\le M$,
\be
\ba
\quad&\left|\frac{1}{t-r}\sum_{i=1}^{t-r}{f(\alpha_{\pi_s(i)})}-\frac{1}{t-m}\sum_{i=1}^{t-m}{f(\alpha_i)}\right|\\
&=\left|\frac{1}{t-r}\sum_{i=1}^{t-r}{f(\alpha_{\pi_s(i)})}
  -\frac{1}{t-m}\left(\sum_{i=1}^{t}{f(\alpha_i)}
                     -\smashoperator{\sum_{i=t-m+1}^{t}}{f(\alpha_i)}\right)\right|\\
&=\left|\frac{1}{t-r}\sum_{i=1}^{t-r}{f(\alpha_{\pi_s(i)})}
  -\frac{1}{t-m}\left(\sum_{i=1}^{t}{f(\alpha_{\pi_s(i)})}
                     -\smashoperator{\sum_{i=t-m+1}^{t}}{f(\alpha_i)}\right)\right|\\
&=\left|\left(\frac{1}{t-r}-\frac{1}{t-m}\right)\sum_{i=1}^{t-r}{f(\alpha_{\pi_s(i)})}
  +\frac{1}{t-m}\left(\sum_{i=t\mathrlap{-m+1}}^{t}{f(\alpha_i)}
                     -\sum_{i=t\mathrlap{-r+1}}^{t}{f(\alpha_{\pi_s(i)})}\right)\right|\\
&\le\left|\frac{1}{t-r}-\frac{1}{t-m}\right|\sum_{i=1}^{t-r}{|f(\alpha_{\pi_s(i)})|}
  +\frac{1}{t-m}\left(\sum_{i=t\mathrlap{-m+1}}^{t}{|f(\alpha_i)|}
                     +\sum_{i=t\mathrlap{-r+1}}^{t}{|f(\alpha_{\pi_s(i)})|}\right)\\
&\le\frac{|r-m|}{t-m}M+\frac{(m+r)}{t-m}M\\
&=\frac{2rM}{t-m},
\ea
\label{boundfreq}
\ee
where we used $r\ge m$.\newpage
Now for all $s\in\mathcal S$,  
\be
\ba
\underset{\bm{\alpha}\leftarrow\ket{\Phi^s}}{\Prb}[\bm{\alpha}\in\Omega_\mu]
 &=\underset{\bm{\alpha}\leftarrow\ket{\Phi^s}}{\Prb}\left[
    \left|\frac{1}{t-m}\sum_{i=1}^{t-m}{f(\alpha_i)}
   -\underset{\beta\leftarrow D_{\mathrlap{\ket v}}}{\mathbb E}[f(\beta)]\right|>\mu\right]\\
&\le\underset{\bm{\alpha}\leftarrow\ket{\Phi^s}}{\Prb}\Bigg[
    \left|\frac{1}{t-r}\sum_{i=1}^{t-r}{f(\alpha_{\pi_s(i)})}
    -\underset{\beta\leftarrow D_{\mathrlap{\ket v}}}{\mathbb E}[f(\beta)]\right|\\
   &\quad+\left|\frac{1}{t-m}\smashoperator{\sum_{i=1}^{t-m}}{f(\alpha_i)}
    -\frac{1}{t-r}\sum_{i=1}^{t-r}{f(\alpha_{\pi_s(i)})}\right|>\mu\Bigg]\\
&\le\underset{\bm{\alpha}\leftarrow\ket{\Phi^s}}{\Prb}\left[
    \left|\frac{1}{t-r}\sum_{i=1}^{t-r}{f(\alpha_{\pi_s(i)})}
     -\underset{\beta\leftarrow D_{\mathrlap{\ket v}}}{\mathbb E}[f(\beta)]\right|
    >\mu-\frac{2rM}{t-m}\right]\\
&\le2\exp\left[{-\frac{t-r}2\left(\frac{\mu}{M}-\frac{2r}{t-m}\right)^2}\right],
\ea
\ee
where we used triangular inequality in the second line, Eq.~(\ref{boundfreq}) in the third line and Lemma~\ref{lem:Hoeffding2} in the fourth line with $p=t-r$ and $\lambda=\mu-\frac{2rM}{t-m}>0$. Combining this last equation with Eq.~(\ref{PrOmega}), and using $|\mathcal S|\le\binom{t}{r}$ we finally obtain,
\be
\underset{\bm{\alpha}\leftarrow\ket\Psi}{\Prb}\left[\left|\frac{1}{t-m}\sum_{i=1}^{t-m}{f(\alpha_i)}- \underset{\beta\leftarrow D_{\ket v}}{\mathbb E}[f(\beta)]\right|\ge\mu\right] \leq 2\binom{t}{r}\exp\left[{-\frac{t-r}2\left(\frac{\mu}{M}-\frac{2r}{t-m}\right)^2}\right].
\ee

\end{proof}
\end{leftbar}

\noindent We recall the bound on $z\mapsto f_\Psi(z,\eta)$ obtained in Section~\ref{app:certification} for $\eta>0$, detailed in Eq.~(\ref{boundftau}): for all $\alpha\in\mathbb C$,
\be
|f_\Psi(\alpha,\eta)|\le\frac{M_\Psi(\eta)}{\eta^{1+E}},
\label{boundfM}
\ee
where
\be
M_\Psi(\eta)=\sum_{k,l=0}^E{\left|\psi_k\psi_l\right|\eta^{E-(k+l)/2}\sqrt{2^{|l-k|}\binom{\max{(k,l)}}{\min{(k,l)}}}}.
\ee
Let $\mu,\eta>0$, $E\in\mathbb N$, let $\ket v\in\bar{\mathcal H}\otimes\bar{\mathcal H}$, and let $\ket{\Phi_v}^{n-4q}\in\mathcal S^{n-4q}_{v^{\otimes n-8q}}$. Applying Lemma~\ref{lem:almosti.i.d.} for the real-valued function $f_\Psi$, for $t=n-4q$, for $r=4q$, for $D_{\ket v}=Q_{\ket v\bra v}$, and with the bound from Eq.~(\ref{boundfM}), we obtain 
\be
\ba
\quad&\underset{\bm{\alpha}}{\Prb}\left[
  \left|\frac{1}{n-4q-m}\sum_{i=1}^{n-4q-m}{f_\Psi(\alpha_i,\eta)}
  \mathrlap{-}\underset{\quad\beta\leftarrow Q_{\ket v\mathrlap{\bra v}}}{\mathbb E}[f_\Psi(\beta,\eta)]
  \right|
 \ge\mu\right] \\
&\quad\quad\quad\leq2\binom{n-4q}{4q}\exp\left[{-\frac{n-8q}2\left(\frac{\eta^{1+E}\mu}{M_\Psi(\eta)}-\frac{8q}{n-4q-m}\right)^2}\right],
\ea
\label{Hoeffdingpsiv}
\ee
where the probability is over the outcomes of a product heterodyne measurement of the first $n-4q-m$ subsystems of $\ket{\Phi_v}^{n-4q}\in\mathcal S^{n-4q}_{v^{\otimes n-8q}}$.\\

\subsection{Proof of Theorem~\ref{thVUCVQC}}
\label{app:finalproof}

\noindent Let $\ket\Psi\bra\Psi$ be the target pure state, and let $\rho^{n+k}$ be a state sent over $n+k$ subsystems. Let $\beta_1,\dots,\beta_k$ be samples obtained by measuring $k$ subsystems at random of $\rho^{n+k}$ with heterodyne detection. Let $\rho^n$ be the remaining state after the support estimation step. 
In what follows, we first assume that $\rho^n\in\mathcal S^n_{\bar{\mathcal H}^{\otimes n-q}}$.

Let $\rho^{n-4q}$ be the state obtained from $\rho^n$ by tracing over the first $4q$ subsystems. In that case, by section~\ref{app:deFinetti}, there exist a finite set $\mathcal V$ of unit vectors $\ket v\in\bar{\mathcal H}\otimes\bar{\mathcal H}$, a probability distribution $\{p_v\}_{v\in\mathcal V}$ over $\mathcal V$, and almost-i.i.d.\@ states $\tilde\rho_v^{n-4q}\in\mathcal S^{n-4q}_{v^{\otimes n-8q}}$ such that
\be
F\left(\rho^{n-4q},\sum_{v\in\mathcal V}{p_v\rho_v^{n-4q}}\right)>1-q^{(E+1)^2}\exp\left[{-\frac{4q(q+1)}{n}}\right],
\label{DeFinettifide}
\ee
where $\rho_v^{n-4q}$ is the remaining state after tracing over the purifying subsystems, since the fidelity is non-decreasing under quantum operations~\cite{barnum1996noncommuting}. We also obtain 
\be
F\left(\rho^m,\sum_{v\in\mathcal V}{p_v\rho_v^m}\right)>1-q^{(E+1)^2}\exp\left[{-\frac{4q(q+1)}{n}}\right],
\label{DeFinettifidereduced}
\ee
where $\rho^m$ (resp.\@ $\rho_v^m$) is the remaining state after measuring the first $n-4q-m$ subsystems of $\rho^{n-4q}$ (resp.\@ $\rho_v^{n-4q}$) with heterodyne detection.

Let $\alpha_1,\dots,\alpha_{n-4q-m}$ be the samples obtained by measuring the first $n-4q-m$ subsystems of $\rho^{n-4q}$ with heterodyne detection. 
The verifier computes the estimate~(\ref{tildeF2app})
\be 
F_\Psi(\rho)=\left[\frac{1}{n-4q-m}\sum_{i=1}^{n-4q-m}{f_\Psi\left(\alpha_i,\frac\epsilon{mK_\Psi}\right)}\right]^m,
\label{tildeF3}
\ee
and whenever $F_\Psi\ge1$ we instead set $F_\Psi=1$. Let us define the completely positive map $\mathcal E$ on $\mathcal H^{n-4q}$ associated to the classical post-processing of the protocol as:\\
\be
\sigma\mapsto\mathcal E(\sigma)=\sum_e{\Pr\left[F_{\Psi}(\sigma)=e\right]\ket e\bra e}.
\label{postprocessingmap}
\ee
The sum ranges over the values that the estimate may take. 
With Eq.~(\ref{DeFinettifide}) and Lemma~\ref{lem:fidelitytriangular} we obtain
\be
D\left(\rho^{n-4q},\sum_{v\in\mathcal V}{p_v\rho_v^{n-4q}}\right)\le q^{\frac{(E+1)^2}2}\exp\left[{-\frac{2q(q+1)}{n}}\right],
\label{Drhomrhov}
\ee
The trace distance is non-increasing under quantum operations, so Eq.~(\ref{Drhomrhov}) implies
\be
D\left(\mathcal E\left(\rho^{n-4q}\right),\mathcal E\left(\sum_{v\in\mathcal V}{p_v\rho_v^{n-4q}}\right)\right)\le q^{\frac{(E+1)^2}2}\exp\left[{-\frac{2q(q+1)}{n}}\right].
\ee
Using the definition of the map $\mathcal E$, we obtain a bound in total variation distance:
\be
\left\|P\left[F_\Psi(\rho)\right]-P\left[F_\Psi\left(\sum_{v\in\mathcal V}{p_v\rho_v^{n-4q}}\right)\right]\right\|_{tvd}\le q^{\frac{(E+1)^2}2}\exp\left[{-\frac{2q(q+1)}{n}}\right],
\ee
where $P$ denotes the probability distributions for the values of the estimates $F_\Psi(\rho)$ and $F_\Psi\left(\sum_{v\in\mathcal V}{p_v\rho_v^{n-4q}}\right)$. In particular, this bound implies that for all $\lambda>0$,
\be
\ba
\quad&\left|\Prb\left[\left|F(\Psi^{\otimes m},\rho^m)-F_\Psi(\rho)\right|>\lambda\right]-\Prb\left[\left|F(\Psi^{\otimes m},\rho^m)-F_\Psi\left(\sum_{v\in\mathcal V}{p_v\rho_v^{n-4q}}\right)\right|>\lambda\right]\right|\\
&\quad\quad\quad\le q^{\frac{(E+1)^2}2}\exp\left[{-\frac{2q(q+1)}{n}}\right],
\ea
\ee
and thus
\be
\ba
\quad&\Prb\left[\left|F(\Psi^{\otimes m},\rho^m)-F_\Psi(\rho)\right|>\lambda\right]\\
&\le q^{\frac{(E+1)^2}2}\exp\left[{-\frac{2q(q+1)}{n}}\right]+\Prb\left[\left|F(\Psi^{\otimes m},\rho^m)-F_\Psi\left(\sum_{v\in\mathcal V}{p_v\rho_v^{n-4q}}\right)\right|>\lambda\right].
\ea
\label{boundprobaeps}
\ee
With Eq.~(\ref{DeFinettifidereduced}) and Lemma~\ref{lem:fidelitytriangular} we obtain
\be
\left|F\left(\Psi^{\otimes m},\rho^m\right)-F\left(\Psi^{\otimes m},\sum_{v\in\mathcal V}{p_v\rho_v^m}\right)\right|\le q^{\frac{(E+1)^2}2}\exp\left[{-\frac{2q(q+1)}{n}}\right],
\label{boundlemma91}
\ee
where $\Psi^{\otimes m}$ is $m$ copies of the target pure state $\ket\Psi$. 
With the triangular inequality,
\be
\ba
\left|F(\Psi^{\otimes m},\rho^m)-F_\Psi\left(\sum_{v\in\mathcal V}{p_v\rho_v^{n-4q}}\right)\right|&\le\left|F(\Psi^{\otimes m},\rho^m)-F\left(\Psi^{\otimes m},\sum_{v\in\mathcal V}{p_v\rho_v^m}\right)\right|\\
&\quad+\left|F\left(\Psi^{\otimes m},\sum_{v\in\mathcal V}{p_v\rho_v^m}\right)-F_\Psi\left(\sum_{v\in\mathcal V}{p_v\rho_v^{n-4q}}\right)\right|\\
&\le q^{\frac{(E+1)^2}2}\exp\left[{-\frac{2q(q+1)}{n}}\right]\\
&\quad+\left|F\left(\Psi^{\otimes m},\sum_{v\in\mathcal V}{p_v\rho_v^m}\right)-F_\Psi\left(\sum_{v\in\mathcal V}{p_v\rho_v^{n-4q}}\right)\right|,
\ea
\ee
where we used Eq.~(\ref{boundlemma91}) in the last line. With Eq.~(\ref{boundprobaeps}) we obtain, for all $\lambda>0$
\be
\ba
\quad&\Prb\left[\left|F(\Psi^{\otimes m},\rho^m)-F_\Psi(\rho)\right|>\lambda\right]\le q^{\frac{(E+1)^2}2}\exp\left[{-\frac{2q(q+1)}{n}}\right]\\
&\quad\quad+\Prb\left[\left|F\left(\Psi^{\otimes m},\sum_{v\in\mathcal V}{p_v\rho_v^m}\right)-F_\Psi\left(\sum_{v\in\mathcal V}{p_v\rho_v^{n-4q}}\right)\right|>\lambda-q^{\frac{(E+1)^2}2}e^{-\frac{2q(q+1)}{n}}\right].
\ea
\label{almost}
\ee
By linearity of the probabilities, it suffices to bound $\Prb\left[\left|F(\Psi^{\otimes m},\Phi^m)-F_\Psi(\Phi)\right|>\mu\right]$, for $\mu=\lambda-q^{\frac{(E+1)^2}2}\exp\left[{-\frac{2q(q+1)}{n}}\right]$, where $\ket\Phi\in\mathcal S^{n-4q}_{v^{\otimes n-8q}}$, for $\ket v\in\bar{\mathcal H}\otimes\bar{\mathcal H}$, and where $\Phi^m$ is the state obtained from $\ket\Phi\bra\Phi$ by measuring the first $n-4q-m$ subsystems with heterodyne detection and tracing over the purifying subsystems.

\begin{lem}
Let $\ket\Phi\in\mathcal S^{n-4q}_{v^{\otimes n-8q}}$. For all $\epsilon'>0$,
\be
\ba
\quad&\Prb\left[\left|F(\Psi^{\otimes m},\Phi^m)-F_\Psi(\Phi)\right|>\epsilon+\epsilon'\right]\\
&\quad\quad\quad\le2\binom{n-4q}{4q}\exp\left[{-\frac{n-8q}{2m^{4+2E}}\left(\frac{\epsilon^{1+E}\epsilon'}{C_\Psi}-\frac{8qm^{2+E}}{n-4q-m}\right)^2}\right]\\
&\quad\quad\quad\quad\quad\quad+\frac{m(4q+m-1)}{n-4q},
\ea
\ee
where
\be
C_\Psi=\sum_{k,l=0}^E{|\psi_k\psi_l|\left(\frac\epsilon m\right)^{E-\frac{k+l}2}K_\Psi^{1+\frac{k+l}2}\sqrt{2^{|l-k|}\binom{\max{(k,l)}}{\min{(k,l)}}}}\underset{\epsilon\rightarrow0}{\longrightarrow}|\psi_E|^2K_\Psi^{1+E}.
\ee
\label{interbound}
\end{lem}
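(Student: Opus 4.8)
The plan is to build Lemma~\ref{interbound} from three independent pieces glued by the triangle inequality and a union bound: a deterministic approximation of the fidelity by an expectation, a concentration bound coming from the almost-i.i.d.\@ structure, and a combinatorial step controlling the reduced state on the retained subsystems. Throughout I fix $\ket\Phi\in\mathcal S^{n-4q}_{v^{\otimes n-8q}}$, write $\rho_v=\Tr_{\text{purif}}(\ket v\!\bra v)\in\bar{\mathcal H}$ for the reduced single-copy state of $\ket v$ (which has support bounded by $E$), and set $\eta=\frac\epsilon{mK_\Psi}$, so that the estimator reads $F_\Psi(\Phi)=S^m$ with $S=\frac1{n-4q-m}\sum_i f_\Psi(\alpha_i,\eta)$.

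First, the \emph{deterministic step}. Since $\rho_v$ has bounded support, Corollary~\ref{corofidelity} applies and gives $\bigl|F(\Psi,\rho_v)-\mathbb E_{\beta\leftarrow Q_{\rho_v}}[f_\Psi(\beta,\eta)]\bigr|\le\eta K_\Psi=\epsilon/m$. Because $\ket\Psi$ is pure we have $F(\Psi,\rho_v)\in[0,1]$, and after clipping the empirical mean to $[0,1]$ as in the proof of Theorem~\ref{thi.i.d.} both arguments lie in $[0,1]$; Lemma~\ref{lem:simplem} then raises the bound to the $m$-th power, yielding $\bigl|F(\Psi^{\otimes m},\rho_v^{\otimes m})-\mathbb E_{\rho_v}[f_\Psi(\cdot,\eta)]^m\bigr|\le m\cdot\epsilon/m=\epsilon$, using $F(\Psi,\rho_v)^m=F(\Psi^{\otimes m},\rho_v^{\otimes m})$ for a pure target.

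Second, the \emph{concentration step}. The distribution of a heterodyne outcome on a subsystem of $\ket v$ is exactly $Q_{\rho_v}$, so the instance of Lemma~\ref{lem:almosti.i.d.} recorded in Eq.~(\ref{Hoeffdingpsiv}) (with $t=n-4q$, $r=4q$, and the bound $M_\Psi(\eta)$ from Eq.~(\ref{boundftau})) controls $\bigl|S-\mathbb E_{\rho_v}[f_\Psi(\cdot,\eta)]\bigr|$. Taking the deviation $\mu=\epsilon'/m$ and substituting $\eta=\epsilon/(mK_\Psi)$ together with the identity $C_\Psi=K_\Psi^{1+E}M_\Psi(\eta)$ turns the quantity $\frac{\eta^{1+E}\mu}{M_\Psi(\eta)}$ into $\frac{\epsilon^{1+E}\epsilon'}{m^{2+E}C_\Psi}$ and the prefactor $\frac{n-8q}2$ into $\frac{n-8q}{2m^{4+2E}}$, reproducing exactly the exponent of the Hoeffding term. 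Applying Lemma~\ref{lem:simplem} once more passes from $|S-\mathbb E|\le\epsilon'/m$ to $|S^m-\mathbb E^m|\le\epsilon'$, i.e.\@ $\bigl|\mathbb E_{\rho_v}[f_\Psi(\cdot,\eta)]^m-F_\Psi(\Phi)\bigr|\le\epsilon'$ outside the Hoeffding event. Combined with the deterministic step, this gives $\bigl|F(\Psi^{\otimes m},\rho_v^{\otimes m})-F_\Psi(\Phi)\bigr|\le\epsilon+\epsilon'$ except with the stated Hoeffding probability.

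Finally, the \emph{combinatorial step}, which I expect to be the main obstacle. It remains to replace the ideal reference $\rho_v^{\otimes m}$ by the actual reduced state $\Phi^m$ on the retained subsystems. Expanding $\ket\Phi$ in the almost-i.i.d.\@ form of Lemma~4.1.6 of~\cite{renner2008security} as a superposition of permuted vectors $\ket v^{\otimes n-8q}\otimes\ket{\tilde\Phi^s}$, each term carries $4q$ ``non-i.i.d.'' positions; when the $m$ retained subsystems, chosen at random among the $n-4q$ (equivalently, by the permutation invariance inherited from the de Finetti reduction of Section~\ref{app:deFinetti}, taken to be a fixed block), avoid all defective positions, tracing out the purifier leaves precisely $\rho_v^{\otimes m}$, so that $F(\Psi^{\otimes m},\Phi^m)=F(\Psi^{\otimes m},\rho_v^{\otimes m})$. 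A union bound over the random selection of the $m$ kept indices, accounting both for collisions with the $4q$ defective positions and among the kept indices themselves, bounds the failure probability of this event by $P_{\text{choice}}=\frac{m(4q+m-1)}{n-4q}$. The delicate point is that one must control a superposition over distinct permutations rather than a single product term, which is exactly where the symmetric-subspace structure from the de Finetti step is needed. A last union bound between the Hoeffding event and the choice event then yields the claimed probability and completes the proof.
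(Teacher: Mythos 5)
Your proposal is correct and follows essentially the same route as the paper's proof: the identical three-term triangle-inequality decomposition (replacing $\Phi^m$ by the i.i.d.\@ reference with failure probability $P_{\text{choice}}=\frac{m(4q+m-1)}{n-4q}$, Corollary~\ref{corofidelity} together with Lemma~\ref{lem:simplem} for the deterministic $\epsilon$ term, and Eq.~(\ref{Hoeffdingpsiv}) with $\mu=\epsilon'/m$ and $\eta=\epsilon/(mK_\Psi)$ for the Hoeffding term), glued by a union bound. The only differences are cosmetic: you are slightly more careful than the paper in distinguishing $\ket v$ from its reduced state $\rho_v$, and the superposition-over-permutations issue you flag at the end is handled, exactly as in the paper, inside Lemma~\ref{lem:almosti.i.d.} (via Lemma 4.5.1 of~\cite{renner2008security}) for the concentration term, while for the choice term the paper glosses over it just as you do.
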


\begin{leftbar}
\begin{proof} 
Let $\alpha_1,\dots,\alpha_{n-4q-m}$ be samples obtained by measuring the first $n-4q-m$ subsystems of $\ket\Phi\bra\Phi$ with heterodyne detection. We have~(\ref{tildeF2app})
\be
F_\Psi(\Phi)=\left[\frac{1}{n-4q-m}\smashoperator{\sum_{i=1}^{n-4q-m}}{f_\Psi\left(\alpha_i,\frac\epsilon{mK_\Psi}\right)}\right]^m,
\ee
and
\be
\ba
\,&\left|F(\Psi^{\otimes m},\Phi^m)-F_\Psi(\Phi)\right| \le\left|F(\Psi^{\otimes m},{\Phi}^m)-F(\Psi^{\otimes m},\ket{v}\bra{v}^{\otimes m})\right|\\
  &\quad\quad\quad+\left|F(\Psi^{\otimes m},\ket{v}\bra{v}^{\otimes m})
   -\left(\underset{\beta\leftarrow Q_{\ket v\bra v}}{\mathbb E}%
                                        \left[f_\Psi\left(\beta,\frac\epsilon{mK_\Psi}\right)\right]\right)^m\right|\\
  &\quad\quad\quad+\left|\left(\underset{\beta\leftarrow Q_{\ket v\bra v}}{\mathbb E}%
                                        \left[f_\Psi\left(\beta,\frac\epsilon{mK_\Psi}\right)\right]\right)^m
        -F_\Psi(\Phi)\right|\\
 &\quad\quad\quad\quad\quad\quad\quad\quad\quad\quad\quad=\left|F(\Psi^{\otimes m},{\Phi}^m)-F(\Psi^{\otimes m},\ket{v}\bra{v}^{\otimes m})\right|\\
  &\quad\quad\quad+\left|F(\Psi,\ket{v}\bra{v})^m
   -\left(\underset{\beta\leftarrow Q_{\ket v\bra v}}{\mathbb E}%
                                        \left[f_\Psi\left(\beta,\frac\epsilon{mK_\Psi}\right)\right]\right)^m\right|\\
  &\quad\quad\quad+\left|\left(\underset{\beta\leftarrow Q_{\ket v\bra v}}{\mathbb E}%
                                        \left[f_\Psi\left(\beta,\frac\epsilon{mK_\Psi}\right)\right]\right)^m
        -\left(\frac{1}{n-4q-m}\sum_{i=1}^{n-4q-m}{f_\Psi\left(\alpha_i,\frac\epsilon{mK_\Psi}\right)}\right)^m\right|\\
 &\quad\quad\quad\quad\quad\quad\quad\quad\quad\quad\le\left|F(\Psi^{\otimes m},{\Phi}^m)-F(\Psi^{\otimes m},\ket{v}\bra{v}^{\otimes m})\right|\\
  &\quad\quad\quad+m\left|F(\Psi,\ket v\bra v)
    -\underset{{\beta}\leftarrow Q_{\ket v{\bra v}}}{\mathbb E}\left[f_\Psi\left(\beta,\frac\epsilon{mK_\Psi}\right)\right]\right|\\
&\quad\quad\quad+m\left|\underset{{\beta}\leftarrow Q_{\ket v{\bra v}}}{\mathbb E}\left[f_\Psi\left(\beta,\frac\epsilon{mK_\Psi}\right)\right]-\frac{1}{n-4q-m}\sum_{i=1}^{n-4q-m}{f_\Psi\left(\alpha_i,\frac\epsilon{mK_\Psi}\right)}\right|,
\ea
\label{threebounds}
\ee
where we used Lemma~\ref{lem:simplem}. We bound these last three terms in the following.\\

\noindent When selecting at random $m$ subsystems from an almost-i.i.d.\@ state over $n-4q$ subsystems which is i.i.d.\@ on $n-8q$ subsystems, the probability that all of the selected states are from the $n-8q$ i.i.d.\@ subsystems is
\be
\frac{\binom{n-8q}{m}}{\binom{n-4q}{m}}=\frac{(n-8q)(n-8q-1)\dots(n-8q-m+1)}{(n-4q)(n-4q-1)\dots(n-4q-m+1)},
\ee
and we have
\be
\ba
1-\frac{(n-8q)(n-8q-1)\dots(n-8q-m+1)}{(n-4q)(n-4q-1)\dots(n-4q-m+1)}&\le1-\frac{(n-8q-m+1)^m}{(n-4q)^m}\\
&=1-\left(1-\frac{4q+m-1}{n-4q}\right)^m\\
&\le\min{\left(1,\frac{m(4q+m-1)}{n-4q}\right)}\\
&\le\frac{m(4q+m-1)}{n-4q},
\ea
\label{boundmNQ}
\ee
where we used $1-(1-x)^a\le ax$ for all $a\ge1$ and $x\in[0,1]$. In particular, for $\ket\Phi\in\mathcal S^{n-4q}_{v^{\otimes n-8q}}$, and $\Phi^m$ its reduced state over $m$ modes chosen at random, we have
\be
\Phi^m=\ket v\bra v^{\otimes m},
\ee
with probability greater than $1-\frac{m(4q+m-1)}{n-4q}$, where we used the definition of $\mathcal S^{n-4q}_{v^{\otimes n-8q}}$, and Eq.~(\ref{boundmNQ}). Using Lemma~\ref{lem:fidelitytriangular}, the \textit{first term} in Eq.~(\ref{threebounds}) vanishes with probability greater than:
\be
1-\frac{m(4q+m-1)}{n-4q}.
\label{finalbound1}
\ee
The bound for the \textit{second term} is given by Corollary~\ref{corofidelity} applied to the state $\ket v$, for $\eta=\frac\epsilon{mK_\Psi}$:
\be
m\left|F(\Psi,\ket v\bra v)
    -\underset{{\beta}\leftarrow Q_{\ket v{\bra v}}}{\mathbb E}\left[f_\Psi\left(\beta,\frac\epsilon{mK_\Psi}\right)\right]\right|\le\epsilon.
\label{finalbound2}
\ee
The bound for the \textit{third term} is probabilistic, given by Eq.~(\ref{Hoeffdingpsiv}), for $\eta=\frac\epsilon{mK_\Psi}$ and $\mu=\frac{\epsilon'}m$. For all $\epsilon'>0$,
\be
\ba
\quad&\underset{\bm{\alpha}}{\Prb}\left[
  \left|\frac{1}{n-4q-m}\sum_{i=1}^{n-4q-m}{f_\Psi\left(\alpha_i,\frac\epsilon{mK_\Psi}\right)}
  \mathrlap{-}\underset{\quad\beta\leftarrow Q_{\ket v\bra v}}{\mathbb E}\left[f_\Psi\left(\beta,\frac\epsilon{mK_\Psi}\right)\right]
  \right|
 \ge\frac{\epsilon'}m\right] \\
&\quad\quad\quad\leq2\binom{n-4q}{4q}\exp\left[{-\frac{n-8q}2\left(\frac{\epsilon^{1+E}\epsilon'}{m^{2+E}K_\Psi^{1+E}M_\Psi(\frac\epsilon{mK_\Psi})}-\frac{8q}{n-4q-m}\right)^2}\right].
\ea
\label{finalbound3}
\ee

We now bring together the previous bounds in order to prove Lemma~\ref{interbound}. Combining Eqs.~(\ref{threebounds}), (\ref{finalbound1}), (\ref{finalbound2}) and (\ref{finalbound3}) yields
\be
\ba
\,&\Prb\left[\left|F(\Psi^{\otimes m},\Phi^m)-F_\Psi(\Phi)\right|>\epsilon+\epsilon'\right]\\
&\quad\le\underset{\bm{\alpha}}{\Prb}\left[
  \left|\frac{1}{n-4q-m}\sum_{i=1}^{n-4q-m}{f_\Psi\left(\alpha_i,\frac\epsilon{mK_\Psi}\right)}
  \mathrlap{-}\underset{\quad\beta\leftarrow Q_{\ket v\bra v}}{\mathbb E}\left[f_\Psi\left(\beta,\frac\epsilon{mK_\Psi}\right)\right]
  \right|
 \ge\frac{\epsilon'}m\right] \\
 &\quad\quad\quad+\frac{m(4q+m-1)}{n-4q}\\
&\quad\le2\binom{n-4q}{4q}\exp\left[{-\frac{n-8q}2\left(\frac{\epsilon^{1+E}\epsilon'}{m^{2+E}K_\Psi^{1+E}M_\Psi(\frac\epsilon{mK_\Psi})}-\frac{8q}{n-4q-m}\right)^2}\right]\\
 &\quad\quad\quad+\frac{m(4q+m-1)}{n-4q}\\
&\quad=2\binom{n-4q}{4q}\exp\left[{-\frac{n-8q}{2m^{4+2E}}\left(\frac{\epsilon^{1+E}\epsilon'}{C_\Psi}-\frac{8qm^{2+E}}{n-4q-m}\right)^2}\right]+\frac{m(4q+m-1)}{n-4q},
\ea
\ee
where
\be
\ba
C_\Psi&=K_\Psi^{1+E}M_\Psi\left(\frac\epsilon{mK_\Psi}\right)\\
&=\sum_{k,l=0}^E{|\psi_k\psi_l|\left(\frac\epsilon m\right)^{E-\frac{k+l}2}K_\Psi^{1+\frac{k+l}2}\sqrt{2^{|l-k|}\binom{\max{(k,l)}}{\min{(k,l)}}}}\underset{\epsilon\rightarrow0}{\longrightarrow}|\psi_E|^2K_\Psi^{1+E}.
\ea
\ee
\end{proof}
\end{leftbar}

\noindent Combining Eq.~(\ref{almost}) and Lemma~\ref{interbound}, we finally obtain
\be
\ba
\Prb &\left[\left|F(\Psi^{\otimes m},\rho^m)-F_\Psi(\rho)\right|>\epsilon+\epsilon'+q^{\frac{(E+1)^2}2}e^{-\frac{2q(q+1)}{n}}\right]\\
&\le q^{\frac{(E+1)^2}2}\exp\left[{-\frac{2q(q+1)}{n}}\right]+2\binom{n-4q}{4q}\exp\left[{-\frac{n-8q}{2m^{4+2E}}\left(\frac{\epsilon^{1+E}\epsilon'}{C_\Psi}-\frac{8qm^{2+E}}{n-4q-m}\right)^2}\right]\\
&\quad\quad\quad\quad\quad+\frac{m(4q+m-1)}{n-4q}.
\ea
\label{appprobafinale}
\ee
Setting $P_{\text{Hoeffding}}=2\binom{n-4q}{4q}\exp\left[{-\frac{n-8q}{2m^{4+2E}}\left(\frac{\epsilon^{1+E}\epsilon'}{C_\Psi}-\frac{8qm^{2+E}}{n-4q-m}\right)^2}\right]$, $P_{\text{choice}}=\frac{m(4q+m-1)}{n-4q}$ and $P_{\text{deFinetti}}=q^{\frac{(E+1)^2}2}\exp\left[{-\frac{2q(q+1)}{n}}\right]$, we obtain
\be
\Prb\left[\left|F(\Psi^{\otimes m},\rho^m)-F_\Psi(\rho)\right|>\epsilon+\epsilon'+P_{\text{deFinetti}}\right]\le P_{\text{deFinetti}}+P_{\text{choice}}+P_{\text{Hoeffding}}.
\ee

\medskip

Until now we have assumed $\rho^n\in\mathcal S^n_{\bar{\mathcal H}^{\otimes n-q}}$. By section~\ref{app:Etestsym},
\be
\Pr\left[\mathcal F_q^n\cap\mathcal T_{\le s}^k\right]\le P_{\text{support}}.
\ee
where $\mathcal F_q^n$ is the event that the projection of $\rho^n$ (the remaining state after the support estimation step) onto $\mathcal S^n_{\bar{\mathcal H}^{\otimes n-q}}$ fails, where $\mathcal T_{\le s}^k$ is the event that at most $s$ of the $k$ values $\beta_i$ from the support estimation step satisfy $|\beta_i|^2>E$, and where $P_{\text{support}}=8k^{3/2}\exp\left[{-\frac{k}9\left(\frac qn-\frac{2s}k\right)^2}\right]$. With the union bound we thus obtain
\be
\Prb\left[\left(\left|F(\Psi^{\otimes m},\rho^m)-F_\Psi(\rho)\right|>\epsilon+\epsilon'+P_{\text{deFinetti}}\right)\cap\mathcal T_{\le s}^k\right]\le P_{\text{support}}+P_{\text{deFinetti}}+P_{\text{choice}}+P_{\text{Hoeffding}},
\ee
where
\be
\ba
P_{\text{support}}&=8k^{3/2}\exp\left[{-\frac{k}9\left(\frac qn-\frac{2s}k\right)^2}\right]\\
P_{\text{deFinetti}}&=q^{\frac{(E+1)^2}2}\exp\left[{-\frac{2q(q+1)}{n}}\right]\\
P_{\text{choice}}&=\frac{m(4q+m-1)}{n-4q}\\
P_{\text{Hoeffding}}&=2\binom{n-4q}{4q}\exp\left[{-\frac{n-8q}{2m^{4+2E}}\left(\frac{\epsilon^{1+E}\epsilon'}{C_\Psi}-\frac{8qm^{2+E}}{n-4q-m}\right)^2}\right].
\ea
\label{eq:scalings}
\ee

\qed\\

\noindent The variables $\epsilon,\epsilon',n,m,q,k,s,E$ are free parameters of the protocol. Let us fix, e.g., $E=O(1)$, $s=O(1)$, \mbox{$n=O\left(m^{19+8E}\right)$}, $k=O(m^{19+8E}), q=O\left(m^{10+4E}\right)$, and $\epsilon=\epsilon'=O(\frac1m)$.
Then, with Theorem~\ref{thVUCVQC}, either the estimate $F_\Psi(\rho)$ of the fidelity $F(\Psi^{\otimes m},\rho^m)$ is polynomially precise (in $m$), or the score at the support estimation step is higher than $s$, with polynomial probability (in $m$), by plugging the different scalings in Eq.~(\ref{eq:scalings}).


\bibliography{bibliography}


\end{document}